\newtheorem{theorem}{Theorem}
\newtheorem{definition}[theorem]{Definition}
\newtheorem{lemma}[theorem]{Lemma}
\newtheorem{proposition}[theorem]{Proposition}
\newtheorem{remark}[theorem]{Remark}
\newenvironment{proof}[1][Proof]{\noindent\textbf{#1.} }{\ \rule{0.5em}{0.5em}}
\begin{document}

\title{Wigner functions on non-standard symplectic vector spaces}
\author{Nuno Costa Dias\textbf{\thanks{ncdias@meo.pt}}
\and Jo\~{a}o Nuno Prata\textbf{\thanks{joao.prata@mail.telepac.pt
}}} \maketitle

\begin{abstract}
We consider the Weyl quantization on a flat non-standard
symplectic vector space. We focus mainly on the properties of the
Wigner functions defined therein. In particular we show that the
sets of Wigner functions on distinct symplectic spaces are
different but have non-empty intersections. This extends previous
results to arbitrary dimension and arbitrary (constant) symplectic
structure. As a by-product we introduce and prove several concepts
and results on non-standard symplectic spaces which generalize
those on the standard symplectic space, namely the symplectic
spectrum, Williamson's theorem and Narcowich-Wigner spectra. We
also show how Wigner functions on non-standard symplectic spaces
behave under the action of an arbitrary linear coordinate
transformation.
\end{abstract}

MSC [2000]:\ Primary 35S99, 35P05; Secondary 35S05, 47A75

Keywords: Weyl quantization; Wigner functions; symplectic spaces;
symplectic spectrum; Williamson's theorem; Narcowich-Wigner
spectra

\section{Introduction}

In this work we return to the research programme initiated in
\cite{Bastos1,Bastos2}. There we considered the Weyl-Wigner
formulation of quantum mechanics based on a canonical deformation
of the Heisenberg algebra. More specifically, if $z=(x,p) \in
\mathbb{R}^{2n}$ denotes a phase space coordinate of a particle,
then the standard classical Poisson bracket is replaced by the
following deformation
\begin{equation}
\left\{z_{\alpha}, z_{\beta} \right\} = \Omega_{\alpha, \beta}, \hspace{1 cm} 1 \le \alpha, \beta \le 2n,
\label{eqIntroduction1}
\end{equation}
where $(\Omega_{\alpha, \beta})_{1 \le \alpha, \beta \le 2n}$ are the entries of the matrix
\begin{equation}
{\bf \Omega}= \left(
\begin{array}{l l}
{\bf \Theta} & {\bf I}\\
-{\bf I} & {\bf \Upsilon}
\end{array}
\right).
\label{eqNCQM1}
\end{equation}
Here ${\bf I}$ is the $n \times n$ identity matrix and ${\bf
\Theta}=\left(\theta_{ij}\right)_{1\le i,j \le n}, {\bf \Upsilon}=
\left(\eta_{ij}\right)_{1\le i,j \le n}$ are real constant
skew-symmetric $n \times n$ matrices, which measure the extra
noncommutativity in the configuration and momentum spaces,
respectively. One assumes in general that this symplectic form
does not depart appreciably from the standard one \cite{Carroll}:
\begin{equation}
|\theta_{ij} \eta_{kl}| \ll 1,
\label{eqNCQM2}
\end{equation}
for all $i,j,k,l=1, \cdots, n$. Upon quantization, one obtains a deformation of the Heisenberg algebra.

Noncommutative deformations such as this one and others appear in
manifold contexts. They may emerge (i) in the various attempts to
quantize gravity, such as string theory \cite{Seiberg},
noncommutative geometry \cite{Borowiec,Connes1,Madore} and loop
quantum gravity \cite{Rovelli}; (ii) as a means to regularize
quantum field theories \cite{Douglas,Szabo}; (iii) as the result
of quantizing quantum systems with second class constraints
\cite{Henneaux,Nakamura1,Nakamura2}; (iv) in order to stabilize
unstable algebras \cite{Vilela}, (v) as the low-energy physics of
quasicrystals \cite{Monreal}, and quite simply (vi) as a model for
quantum systems under the influence of an external magnetic field
\cite{Delduc}. Regardless of the context and motivation, deforming
the Heisenberg algebra in the realm of non-relativistic quantum
mechanics \cite{Demetrian,Gamboa,Nair} has led to many interesting
and surprising results such as the thermodynamic stability of
black holes \cite{Bastos3}, regularization of black hole
singularities \cite{Bastos4,Bastos5,Nicolini}, modifications of
quantum cosmology scenarios \cite{Obregon,Malekolkalami},
violation of uncertainty principles \cite{Bastos7,Bolonek},
generation of entanglement due to noncommutativity \cite{Bastos6}.
These deformations have also been used in the context of the
Landau problem \cite{Duval,Horvathy} and the quantum Hall effect
\cite{Bellissard}.

We shall follow \cite{Bastos1,Bastos2} and address quantum
mechanics on non-standard symplectic spaces in the framework of
the Weyl-Wigner formulation. As explained in \cite{Bastos2} there
are various aspects of this formulation which make it (in certain
circumstances) more appealing than the ordinary formulation in
terms of operators acting on a Hilbert space. In the Weyl-Wigner
representation one does not have to choose between a position or a
momentum representation of a state, as the Wigner function
provides a joint  distribution of both. It is no surprise that the
uncertainty principle and non-locality take their toll on the
Wigner function in the form of a lack of positivity
\cite{Ellinas,Hudson,Soto}. Nevertheless, Wigner functions have
positive marginals and permit an evaluation of expectation values
of observables with the suggestive formula
\begin{equation}
<\widehat{A}> = \int_{\mathbb{R}^{2n}} a(z) W \rho (z) dz,
\label{eqIntroduction2}
\end{equation}
where $\widehat{A}$ is a self-adjoint operator, $a$ is its Weyl
symbol and $W \rho$ is the Wigner function associated with some
density matrix $\rho$. Moreover, the Weyl-Wigner framework is the
only phase space formulation which is symplectically invariant
\cite{Dias5,Folland,Gosson1,Gosson2,Gosson3,Gosson4,Leray,Wong}.
More precisely, if $W \rho (z)$ is a Wigner function and ${\bf S}$
is some symplectic matrix, then $W \rho({\bf S}z)$ is again a
Wigner function. This makes it a privileged framework to study the
semiclassical limit of quantum mechanics
\cite{Littlejohn,Martinez,Zworski}. The symplectic flavour of this
approach manifests itself equally in the dynamics of a quantum
mechanical system. The time evolution is governed by a quantum
counterpart of the Poisson bracket - called Moyal bracket - which
is a deformation of the Poisson bracket
\cite{Bayen,Fedosov1,Fedosov2,Kontsevich,Moyal,Tosiek,Wilde}. From
this perspective one sometimes calls this phase space formulation
of quantum mechanics {\it deformation quantization}.

The Weyl-Wigner formulation is also particularly useful in the
context of quantum information with continuous variables
\cite{Braunstein,Simon,Werner}.

For the quantization of systems with non-standard symplectic
spaces, the advantage of the Weyl-Wigner formulation becomes more
emphatic. The reason is that the Hilbert space for quantum
mechanics on non-standard symplectic spaces is still $L^2
(\mathbb{R}^n)$. So in the operator formulation, there is no
distinction between states for different symplectic structures. On
the other hand, in deformation quantization we proved in
\cite{Bastos1,Bastos2} that the set of states (Wigner functions)
in noncommutative quantum mechanics differs from the set of
ordinary Wigner functions. We have used this fact as a criterion
for assessing when a transition from noncommutative to ordinary
quantum mechanics has taken place \cite{Dias6}.

In \cite{Bastos2} we considered the set $\mathcal{F}^C$ of
ordinary Wigner functions, the set $\mathcal{F}^{NC}$ of Wigner
functions for the noncommutative symplectic structure
(\ref{eqIntroduction1},\ref{eqNCQM1}), and the set $\mathcal{L}$
of positive (classical) probability densities. We proved that in
{\it dimension $n=2$} any pair of these sets have non-empty
intersections and none of them contains the other. The main
purpose of the present work is to generalize this result to
arbitrary dimension $n$ and to the sets $\mathcal{F}^{\omega_1},
\mathcal{F}^{\omega_2}$ of Wigner functions defined on the
symplectic vector spaces $(\mathbb{R}^{2n}; \omega_1),
(\mathbb{R}^{2n}; \omega_2)$ with symplectic forms $\omega_2 \ne
\pm \omega_1$. Notice that the sympectic forms are completely
arbitrary (not necessarily of the form (\ref{eqNCQM1})), albeit
constant.

As a byproduct we extend a number of definitions and results
concerning ordinary Wigner functions to Wigner functions on
arbitrary, nonstandard symplectic spaces. Moreover, as an
application of our results, we solve the so-called {\it
representability problem} in the case of linear coordinate
transformations. This terminology was coined in
\cite{Cohen,Loughlin} in the context of signal processing for pure
states. Here we use the following definition for an arbitrary
(pure or mixed) state. A given real and measurable phase-space
function $F(z)$ is said to be {\it representable} if there exists
a density matrix $\rho$ such that $F$ is the Wigner function on
some symplectic phase-space associated with $\rho$, i.e. $F=W
\rho$. Here we consider linear coordinate transformations of the
form
\begin{equation}
W \rho (z) \mapsto (\mathcal{U}_{{\bf M}} W \rho) (z) = | \det M| W \rho ({\bf M} z),
\label{eqIntroduction3}
\end{equation}
with ${\bf M} \in Gl(2n; \mathbb{R})$. We then address the problem of the representability of $\mathcal{U}_{{\bf M}} W \rho$.

Our results may also be potentially interesting for quantum
information of continuous variable systems. Indeed, the partial
transpose - a linear coordinate transformation which has as sole
effect the reversal of the momentum of one subsystem - is neither
symplectic nor anti-symplectic \cite{Simon,Werner}. Hence, it maps
a Wigner function on the standard symplectic vector space to a
Wigner function on a non-standard one. This transformation is used
as a criterion to assess whether a state is separable or
entangled.

Here is a summary of the main results of this work.

\vspace{0.3 cm} \noindent {\bf 1)} We define the notion of
Narcowich-Wigner spectrum adapted to an arbitrary symplectic form
(Definition \ref{DefinitionQuantumConditions2}) and prove its main
properties (Theorem \ref{TheoremQuantumConditions5}, Theorem
\ref{TheoremPropertiesNWSpectra}).

\vspace{0.3 cm} \noindent {\bf 2)} We introduce the symplectic
spectrum for an arbitrary symplectic form (Definition
\ref{DefinitionOmegaSymplecticSpectrum1}). We prove a
Williamson-type theorem for non-standard symplectic spaces
(Theorem \ref{TheoremOmegaWilliamsonTheorem1}) and show the
usefulness of symplectic spectra for generalized uncertainty
principles (Theorem \ref{TheoremOmegaRSUP1}). In Theorem
\ref{TheoremIdenticalSpectra1} we show that we can always find a
positive-definite matrix which has distinct symplectic spectra
relative to two different symplectic forms. This result is
fine-tuned for the smallest symplectic eigenvalue in Theorem
\ref{TheoremUnboudedQuotient1}.

\vspace{0.3 cm} \noindent {\bf 3)} We solve the {\it
representability problem} for linear coordinate transformations in
Theorems \ref{TheoremSymplecticCovariance} and
\ref{TheoremRepresentabilityProblem}. In Theorem
\ref{TheoremSymplecticCovariance} we prove that $\mathcal{U}_{{\bf
M}} W \rho$ is representable for any Wigner function $ W \rho$ if
and only if ${\bf M}$ is symplectic or antisymplectic. Another new
result is Theorem \ref{TheoremRepresentabilityProblem}, which
states that if ${\bf M}$ is neither symplectic nor antisymplectic,
then there exists a Wigner function $W \rho$ such that
$\mathcal{U}_{{\bf M}} W \rho$ is again a Wigner function and $W
\rho^{\prime}$ for which $\mathcal{U}_{{\bf M}} W \rho^{\prime}$
is not a Wigner function.

\vspace{0.3 cm} \noindent {\bf 4)} Theorem \ref{TheoremSets1} is
one of the main results of this work and it shows that in the
phase-space formulation different quantizations (i.e. on different
symplectic spaces) yield different sets of states. However, there
are always positive distributions and states that are common to
different quantizations. These results are valid in arbitrary
dimension and for arbitrary flat sympletic vector spaces.

\section*{Notation}

We denote by $\mathbb{P}_{k \times k} (\mathbb{K})$ the convex
cone of real $(\mathbb{K}=\mathbb{R})$ (or complex
$(\mathbb{K}=\mathbb{C})$) symmetric (hermitean) positive
(semi-definite) $k \times k$ matrices, and we write $\mathbb{P}_{k
\times k}^{\times} (\mathbb{K})$ if they are positive-definite.
The set of real skew-symmetric $k \times k$ matrices is $A (k;
\mathbb{R})$. The space of Schwartz test functions is denoted by
$\mathcal{S}(\mathbb{R}^n)$ and its dual
$\mathcal{S}^{\prime}(\mathbb{R}^n)$ are the tempered
distributions. Latin indices $i,j,k, \cdots$, take values in the
set $\left\{1, 2, \cdots, n \right\}$, whereas Greek indices
$\alpha , \beta, \gamma, \cdots $ are phase space indices which
take values in $\left\{1,2, \cdots, 2n \right\}$. The inner
product in $L^2(\mathbb{R}^n)$ is given by
\begin{equation}
<\psi| \phi> = \int_{\mathbb{R}^n} \overline{\psi(x)} \phi (x) dx.
\label{eqNotation1}
\end{equation}
We may at times write $ <\psi| \phi>_{L^2(\mathbb{R}^n)}$ and
$||\psi||_{L^2(\mathbb{R}^n)}$ for the corresponding norm if we
want to emphasize that the functions are defined on the
$n$-dimensional Euclidean space $\mathbb{R}^n$.

The Fourier-Plancherel transform for a function $f \in L^1(\mathbb{R}^n) \cap L^2(\mathbb{R}^n)$ is defined by
\begin{equation}
(\mathcal{F}f) (\omega)= \int_{\mathbb{R}^n} f(x) e^{-i x \cdot \omega} dx
\label{eqNotation2}
\end{equation}
and its inverse is
\begin{equation}
(\mathcal{F}^{-1}f) (x)= \frac{1}{(2 \pi)^n} \int_{\mathbb{R}^n} f(\omega) e^{i x \cdot \omega} d \omega.
\label{eqNotation3}
\end{equation}
In this work, we choose units such that Planck's constant is $h= 2 \pi \hbar= 2 \pi$.

\section{Symplectic vector spaces}

Let $V$ be some real $2n$-dimensional vector space. A symplectic form on $V$ is a skew-symmetric non-degenerate bilinear form, that is a map $\omega:V \times V \to \mathbb{R}$ such that
\begin{equation}
\omega (\alpha x + \beta y,z) =\alpha \omega (x,z) + \beta \omega(y,z), \hspace{1 cm} \omega (x,y) = - \omega(y,x),
\label{eqsymplecticform1}
\end{equation}
for all $x,y,z \in V$ and $\alpha, \beta \in \mathbb{R}$, and
\begin{equation}
\omega (x,y) = 0 , \mbox{ for all } y \in V
\label{eqsymplecticform2}
\end{equation}
if and only if $x=0$.

The symplectic group $Sp (V; \omega)$ is the group of linear automorphisms $\phi: V \to V$ such that
\begin{equation}
\omega \left(\phi (z) , \phi (z^{\prime}) \right) = \omega (z, z^{\prime}),
\label{eqsymplecticform2.1}
\end{equation}
for all $z,z^{\prime} \in V$. An automorphism $\phi$ is said to be anti-symplectic if
\begin{equation}
\omega \left(\phi (z) , \phi (z^{\prime}) \right) = - \omega (z, z^{\prime}),
\label{eqsymplecticform2.2}
\end{equation}
for all $z,z^{\prime} \in V$.
The archetypal symplectic vector space is $V=\mathbb{R}^{2n}$ endowed with the so-called {\it standard symplectic form}. As usual we write $z= (x,p) \in V= \mathbb{R}^n \times (\mathbb{R}^n )^{\ast} \simeq \mathbb{R}^{2n}$, where $x \in \mathbb{R}^n $ is interpreted as the particle's position and $p \in (\mathbb{R}^n)^{\ast} $ as the particle's momentum belonging to the cotangent bundle. The standard symplectic form reads
\begin{equation}
\sigma (z,z^{\prime}) = z \cdot {\bf J^{-1}} z^{\prime} = p \cdot x^{\prime} - x \cdot p^{\prime},
\label{eqsymplecticform3}
\end{equation}
where $z=(x,p)$, $z^{\prime}=(x^{\prime},p^{\prime})$ and ${\bf J}= - {\bf J^{-1}}=-{\bf J^T}$ is the $2n \times 2n$ standard symplectic matrix
\begin{equation}
{\bf J}= \left(
\begin{array}{c c}
{\bf 0} & {\bf I}\\
- {\bf I} & {\bf 0}
\end{array}
\right).
\label{eqsymplecticform4}
\end{equation}
The symplectic group $Sp(\mathbb{R}^{2n}; \sigma) \equiv Sp(n; \sigma)$ is then the set of matrices ${\bf P}$ such that
\begin{equation}
{\bf P} {\bf J} {\bf P^T} = {\bf J}.
\label{eqsymplecticform4.1}
\end{equation}
We remark that if ${\bf P} \in Sp (n; \sigma)$, then also ${\bf P^{-1}}, {\bf P^T} \in Sp (n; \sigma)$.

A matrix ${\bf A}$ is $\sigma$-anti-symplectic \cite{Dias5} if
\begin{equation}
{\bf A} {\bf J} {\bf A^T} = - {\bf J}.
\label{eqsymplecticform4.2}
\end{equation}
All $\sigma$-anti-symplectic matrices ${\bf A}$ can be written as
\begin{equation}
{\bf A} = {\bf R}{\bf P},
\label{eqsymplecticform4.3}
\end{equation}
where ${\bf P} \in Sp (n; \sigma)$ and
\begin{equation}
{\bf R} =\left(
\begin{array}{c c}
{\bf I} & {\bf 0}\\
{\bf 0} & - {\bf I}
\end{array}
\right).
\label{eqsymplecticform4.4}
\end{equation}
Thus a $\sigma$-anti-symplectic transformation is a $\sigma$-symplectic transformation followed by a reflection of the particle's momentum.

More generally, for any other symplectic form on $\mathbb{R}^{2n}$, there exists a real skew-symmetric matrix ${\bf \Omega} \in Gl(2n)$ such that
\begin{equation}
\omega (z,z^{\prime}) = z \cdot {\bf \Omega^{-1}} z^{\prime}, \hspace{1 cm} z,z^{\prime} \in \mathbb{R}^{2n}.
\label{eqsymplecticform5}
\end{equation}
A well known theorem in symplectic geometry \cite{Cannas,Gosson1}
states that all symplectic vector spaces of equal dimension are
symplectomorphic. In other words, if $(V, \omega)$ and
$(V^{\prime}, \omega^{\prime})$ have the same dimension, then
there exists a linear isomorphism $\phi: V \to V^{\prime}$ such
that
\begin{equation}
\phi^{\ast} \omega^{\prime} = \omega.
\label{eqsymplecticform6}
\end{equation}
Here $\phi^{\ast}$ denotes the pull-back of $\phi$. In particular, $(\mathbb{R}^{2n}, \sigma)$ is symplectomorphic with $(\mathbb{R}^{2n}, \omega)$ for any symplectic form (\ref{eqsymplecticform5}). So if $\phi:(\mathbb{R}^{2n}, \sigma) \to (\mathbb{R}^{2n}, \omega)$ is given by
\begin{equation}
\phi(z) = {\bf S} z,
\label{eqsymplecticform7}
\end{equation}
for some ${\bf S} \in Gl(2n)$, then we have from (\ref{eqsymplecticform6}) that
\begin{equation}
\phi^{\ast} \omega = \sigma \Leftrightarrow \omega({\bf S} z, {\bf S} z^{\prime}) = \sigma (z,z^{\prime}),
\label{eqsymplecticform8}
\end{equation}
for all $z, z^{\prime} \in \mathbb{R}^{2n}$. And thus
\begin{equation}
{\bf \Omega} = {\bf S} {\bf J} {\bf S^T}.
\label{eqsymplecticform9}
\end{equation}

The matrix ${\bf S}$ in (\ref{eqsymplecticform9}) is not unique. Indeed, we have
\begin{lemma}\label{LemmaDarbouxmap1}
Suppose that ${\bf S}$ is a solution of (\ref{eqsymplecticform9}). Then ${\bf S^{\prime}}$ is also a solution if and only if $ {\bf S^{\prime -1} S}, {\bf S^{-1} S^{\prime}} \in Sp(n; \sigma)$.
\end{lemma}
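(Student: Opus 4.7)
The plan is to prove both directions by direct manipulation of the defining equation $\mathbf{\Omega}=\mathbf{S}\mathbf{J}\mathbf{S}^T$ together with the definition $\mathbf{P}\mathbf{J}\mathbf{P}^T=\mathbf{J}$ of the standard symplectic group $Sp(n;\sigma)$. Before anything else, I would note that the two conditions in the statement, $\mathbf{S}'^{-1}\mathbf{S}\in Sp(n;\sigma)$ and $\mathbf{S}^{-1}\mathbf{S}'\in Sp(n;\sigma)$, are in fact equivalent: they are mutual inverses, and the symplectic group is closed under inversion (as recalled right after (\ref{eqsymplecticform4.1})). So it is enough to work with just one of them.

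For the forward direction, I would assume both $\mathbf{S}$ and $\mathbf{S}'$ satisfy (\ref{eqsymplecticform9}), which gives
\begin{equation*}
\mathbf{S}\mathbf{J}\mathbf{S}^T=\mathbf{S}'\mathbf{J}\mathbf{S}'^T.
\end{equation*}
Left-multiplying by $\mathbf{S}^{-1}$ and right-multiplying by $(\mathbf{S}^T)^{-1}=(\mathbf{S}^{-1})^T$ yields
\begin{equation*}
\mathbf{J}=(\mathbf{S}^{-1}\mathbf{S}')\,\mathbf{J}\,(\mathbf{S}^{-1}\mathbf{S}')^{T},
\end{equation*}
which is exactly the statement that $\mathbf{S}^{-1}\mathbf{S}'\in Sp(n;\sigma)$. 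The other condition $\mathbf{S}'^{-1}\mathbf{S}\in Sp(n;\sigma)$ then follows by taking inverses, as explained above.

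For the reverse direction, I would assume $\mathbf{S}^{-1}\mathbf{S}'\in Sp(n;\sigma)$, so that $(\mathbf{S}^{-1}\mathbf{S}')\mathbf{J}(\mathbf{S}^{-1}\mathbf{S}')^T=\mathbf{J}$. Multiplying this identity on the left by $\mathbf{S}$ and on the right by $\mathbf{S}^T$ gives
\begin{equation*}
\mathbf{S}'\mathbf{J}\mathbf{S}'^{T}=\mathbf{S}\mathbf{J}\mathbf{S}^T=\mathbf{\Omega},
\end{equation*}
so $\mathbf{S}'$ is also a solution of (\ref{eqsymplecticform9}).

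I do not expect any real obstacle in this argument, since it consists entirely of elementary manipulations of the defining relations. The only point that deserves a brief sentence is the observation that $\mathbf{S}'^{-1}\mathbf{S}=(\mathbf{S}^{-1}\mathbf{S}')^{-1}$ and that $Sp(n;\sigma)$ is a group, so listing both conditions in the lemma is redundant but symmetric.
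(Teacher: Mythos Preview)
Your proof is correct and follows essentially the same approach as the paper: both argue that $\mathbf{S}\mathbf{J}\mathbf{S}^T=\mathbf{S}'\mathbf{J}\mathbf{S}'^T$ is equivalent to $(\mathbf{S}^{-1}\mathbf{S}')\mathbf{J}(\mathbf{S}^{-1}\mathbf{S}')^T=\mathbf{J}$ (and the analogous identity with $\mathbf{S}'^{-1}\mathbf{S}$) by multiplying on both sides by the appropriate inverses. Your additional remark that the two symplecticity conditions are equivalent because they involve mutual inverses is a nice clarification that the paper leaves implicit.
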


\begin{proof}
We have ${\bf \Omega} = {\bf S} {\bf J} {\bf S^T}={\bf S^{\prime}} {\bf J} {\bf S^{\prime T}}$ if and only if ${\bf S^{\prime -1} S}{\bf J} ({\bf S^{\prime -1} S})^T = {\bf S^{-1} S^{\prime}} {\bf J} ({\bf S^{-1} S^{\prime}})^T = {\bf J}$.
\end{proof}

We denote by $\mathcal{D} (n;\omega)$ the set of all real $2n \times 2n$ matrices satisfying (\ref{eqsymplecticform9}) and call them {\it Darboux matrices}. The corresponding symplectic automorphism (\ref{eqsymplecticform7}) is called a {\it Darboux map}.

A matrix ${\bf M}$ is $\omega$-symplectic if
\begin{equation}
{\bf M} {\bf \Omega} {\bf M^T}= {\bf \Omega},
\label{eqomegasymplectic1}
\end{equation}
and it is $\omega$-anti-symplectic if
\begin{equation}
{\bf M} {\bf \Omega} {\bf M^T}= - {\bf \Omega}.
\label{eqomegasymplectic2}
\end{equation}
Any $\omega$-symplectic (resp. $\omega$-anti-symplectic) matrix ${\bf M}$ is of the form
\begin{equation}
{\bf M}={\bf S}{\bf P} {\bf S^{-1}},
\label{eqomegasymplectic3}
\end{equation}
where ${\bf S} \in \mathcal{D}(n; \omega)$ and ${\bf P}$ is a $\sigma$-symplectic (resp. $\sigma$-anti-symplectic) matrix.

For future reference we consider the following Proposition.

\begin{proposition}\label{PropositionSymplecticInequality}
Let $\eta $ be an arbitrary symplectic form on $\mathbb{R}^{2n}$ such that
\begin{equation}
|\eta (z,z^{\prime})| \le |\sigma (z,z^{\prime})|,
\label{eqSymplecticInequality1}
\end{equation}
for all $z,z^{\prime} \in \mathbb{R}^{2n}$. Then, there exists a constant $0 < |a| \le 1$ such that
\begin{equation}
\eta = a \sigma.
\label{eqSymplecticInequality2}
\end{equation}
\end{proposition}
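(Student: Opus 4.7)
The plan is to exploit the hypothesis pointwise by freezing one argument and then to use skew-symmetry to propagate the resulting proportionality. Write $\eta(z,z') = z\cdot \Omega^{-1} z'$, so that for each fixed $z\in\mathbb{R}^{2n}$, both $\eta(z,\cdot)$ and $\sigma(z,\cdot)$ are linear functionals on $\mathbb{R}^{2n}$. The hypothesis (\ref{eqSymplecticInequality1}) says that $\sigma(z,z')=0$ forces $\eta(z,z')=0$; in other words, $\ker\sigma(z,\cdot)\subseteq\ker\eta(z,\cdot)$.

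First I would use non-degeneracy. For $z\ne 0$ both $\sigma(z,\cdot)$ and $\eta(z,\cdot)$ are nonzero linear functionals (by non-degeneracy of $\sigma$ and $\eta$), hence each kernel has dimension $2n-1$. The inclusion of kernels therefore forces equality of kernels, so there is a unique scalar $c(z)\ne 0$ with
\begin{equation}
\eta(z,z') \;=\; c(z)\,\sigma(z,z'), \qquad \forall z'\in\mathbb{R}^{2n}.
\label{eqPlan1}
\end{equation}

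Next I would show $c$ is constant. Applying skew-symmetry of $\eta$ and $\sigma$ to (\ref{eqPlan1}) gives, for any $z,z'\ne 0$,
\begin{equation}
c(z)\,\sigma(z,z') \;=\; \eta(z,z') \;=\; -\eta(z',z) \;=\; -c(z')\,\sigma(z',z) \;=\; c(z')\,\sigma(z,z').
\label{eqPlan2}
\end{equation}
Whenever $\sigma(z,z')\ne 0$ this yields $c(z)=c(z')$. To conclude that $c$ is globally constant, fix a reference vector $z_{0}\ne 0$ and let $z\ne 0$ be arbitrary. If $\sigma(z,z_0)\ne 0$ we are done; otherwise the hyperplanes $\{y:\sigma(z,y)=0\}$ and $\{y:\sigma(z_{0},y)=0\}$ each have dimension $2n-1$, so their union does not exhaust $\mathbb{R}^{2n}$, and I can pick $z_{1}$ outside both. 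Then $c(z)=c(z_{1})=c(z_{0})$, and $c\equiv a$ is a single scalar.

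Finally, substituting back gives $\eta=a\sigma$; the hypothesis (\ref{eqSymplecticInequality1}) then forces $|a|\le 1$, and non-degeneracy of $\eta$ forces $a\ne 0$. The main (and essentially only) subtlety is the short connectedness argument via $z_{1}$ needed to upgrade the pairwise identity $c(z)=c(z')$ (valid only when $\sigma(z,z')\ne 0$) to a global constant; everything else is an application of non-degeneracy and skew-symmetry.
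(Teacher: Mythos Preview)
Your proof is correct and takes a genuinely different route from the paper's. The paper proceeds by explicit coordinate computations: it first plugs in pairs of canonical basis vectors $z=\delta_{\alpha}$, $z'=\delta_{\beta}$ to deduce that the matrix of $\eta$ must vanish wherever ${\bf J}$ vanishes, forcing it into the block form
\[
\left(\begin{array}{cc}{\bf 0}&{\bf D}\\ -{\bf D}&{\bf 0}\end{array}\right),\qquad {\bf D}=\mathrm{diag}(\mu_{1},\dots,\mu_{n}),
\]
and then plugs in the specific test vectors $z=\delta_{i}-\delta_{j}$, $z'=\delta_{n+i}+\delta_{n+j}$ (for which $\sigma(z,z')=0$ but $\eta(z,z')=\mu_{i}-\mu_{j}$) to force all the $\mu_{i}$ to coincide.

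Your argument is coordinate-free: you use only that the hypothesis forces $\ker\sigma(z,\cdot)\subseteq\ker\eta(z,\cdot)$, that non-degeneracy makes both kernels hyperplanes (hence equal), and then skew-symmetry plus a two-step connectedness argument to show the proportionality factor $c(z)$ is constant. This is arguably cleaner and makes clear that the result holds for any pair of non-degenerate skew forms satisfying the domination hypothesis, with no reference to the standard basis or the specific block structure of ${\bf J}$. The paper's approach, on the other hand, is more hands-on and yields the intermediate structural information about the matrix of $\eta$ along the way, which some readers may find more transparent.
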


\begin{proof}
Let $\eta (z,z^{\prime}) = z^T {\bf \Sigma^{-1}} z^{\prime}$ for $z,z^{\prime} \in \mathbb{R}^{2n}$ and ${\bf \Sigma} \in Gl(2n; \mathbb{R})$. Let $\left\{\delta_{\alpha} \right\}_{1 \le \alpha \le 2n}$ denote the canonical basis of $\mathbb{R}^{2n}$. If we set $z=\delta_{\alpha}$ and $z^{\prime} = \delta_{\beta}$ in (\ref{eqSymplecticInequality1}), we obtain:
\begin{equation}
|\Sigma_{\alpha \beta}^{-1}| \le |J_{\alpha \beta}|,
\label{eqSymplecticInequality3}
\end{equation}
for all $1 \le \alpha, \beta \le 2n$. So $\Sigma_{\alpha \beta}^{-1}=0$, whenever $J_{\alpha \beta}=0$. And so ${\bf \Sigma^{-1}}$ must be of the form
\begin{equation}
{\bf \Sigma^{-1}} = \left(
\begin{array}{c c}
{\bf 0} & {\bf D}\\
- {\bf D} & {\bf 0}
\end{array}
\right),
\label{eqSymplecticInequality4}
\end{equation}
where ${\bf D} = diag (\mu_1, \cdots, \mu_n)$. Next set $z=\delta_i -\delta_j$ and $z^{\prime} = \delta_{n+i}+ \delta_{n+j}$, for $1 \le i < j \le n$. A simple calculation shows that
\begin{equation}
\sigma (z,z^{\prime})=0, \hspace{1 cm} \eta (z,z^{\prime})= \mu_i - \mu_j.
\label{eqSymplecticInequality5}
\end{equation}
But from (\ref{eqSymplecticInequality1}) it follows that $\mu_i = \mu_j$ for all $1 \le i < j \le n$. And so there exists $a \ne 0$ such that (\ref{eqSymplecticInequality2}) holds. But again from (\ref{eqSymplecticInequality1}), we must have $0 < |a| \le 1$.
\end{proof}

\section{Weyl Quantization}

\subsection{Weyl quantization on $(\mathbb{R}^{2n}; \sigma)$}

One of the basic ingredients for the quantization of a classical system defined on the standard symplectic vector space $(\mathbb{R}^{2n}; \sigma)$ is the Heisenberg-Weyl algebra:
\begin{equation}
\left[\widehat{Z}_{\alpha}, \widehat{Z}_{\beta} \right] = i J_{\alpha \beta} \widehat{I}, \hspace{0.5 cm} \left[\widehat{Z}_{\alpha}, \widehat{I} \right] = 0, \hspace{1 cm} \alpha, \beta =1, \cdots, 2n,
\label{eqHeisenbergWeylAlgebra1}
\end{equation}
where $\widehat{Z} = (\widehat{q}, \widehat{p})$ are the quantum-mechanical counterparts of the classical phase-space variables $z=(x,p)$ and $\widehat{I}$ is the identity operator. Their action on functions in their (densely defined) domains in $L^2 (\mathbb{R}^n)$ is given by
\begin{equation}
(\widehat{I} \psi) (x) =  \psi (x), \hspace{1 cm}(\widehat{q}_i \psi) (x) = x_i \psi (x), \hspace{1 cm} (\widehat{p}_j \psi) (x) = -i \partial_j \psi (x),
\label{eqHeisenbergWeylAlgebra2}
\end{equation}
for $i,j=1, \cdots, n$. These operators are not bounded, which prevents the construction of a $C^*$-algebra of observables. A familiar way to circumvent this is to consider the alternative algebra of Heisenberg-Weyl displacement operators:
\begin{equation}
\widehat{D^{\sigma}} (\xi) = e^{ i \sigma (\xi,\widehat{Z})}.
\label{eqHeisenbergWeylOperator1}
\end{equation}
The action of $\widehat{D^{\sigma}} (\xi)$ on $\psi \in \mathcal{S} (\mathbb{R}^n)$ is given by the explicit formula
\begin{equation}
\widehat{D^{\sigma}} (\xi) \psi (x) = e^{ip_0 \cdot x - \frac{i}{2} p_0 \cdot x_0} \psi (x-x_0),
\label{eqHeisenbergWeylOperator2}
\end{equation}
for $\xi = (x_0,p_0)$. This extends to a unitary operator from $L^2 (\mathbb{R}^n)$ to $L^2 (\mathbb{R}^n)$ and to a continuous operator from $\mathcal{S}^{\prime} (\mathbb{R}^n)$ to $\mathcal{S}^{\prime} (\mathbb{R}^n)$.

Their commutation relations are easily established from (\ref{eqHeisenbergWeylOperator2}) or, heuristically, from (\ref{eqHeisenbergWeylAlgebra1},\ref{eqHeisenbergWeylOperator1}) using the Baker-Campbell-Hausdorff formula:
\begin{equation}
\widehat{D^{\sigma}} (\xi)\widehat{D^{\sigma}} (\zeta) = e^{\frac{i}{2} \sigma (\xi, \zeta)} \widehat{D^{\sigma}} (\xi + \zeta) = e^{i \sigma (\xi, \zeta)} \widehat{D^{\sigma}} (\zeta) \widehat{D^{\sigma}} (\xi),
\label{eqCommutationRelationsWeylOps1}
\end{equation}
for $\xi, \zeta \in \mathbb{R}^{2n}$. Operators of the form
\begin{equation}
\widehat{U}^{\sigma} (\xi, \tau) =e^{i \tau} \widehat{D^{\sigma}} (\xi)
\label{eqCommutationRelationsWeylOps2}
\end{equation}
constitute an irreducible unitary representation of the Heisenberg group $\mathbb{H} (n)$ of elements $(\xi, \tau) \in \mathbb{R}^{2n} \times \mathbb{R}$ with group multiplication
\begin{equation}
(\xi, \tau) \cdot ( \xi^{\prime}, \tau^{\prime}) = \left(\xi+ \xi^{\prime},\tau + \tau^{\prime} + \frac{1}{2} \sigma (\xi, \xi^{\prime}) \right).
\label{eqHeisenberggroup1}
\end{equation}
This is called the Schr\"odinger representation and, according to the Stone-von Neumann theorem \cite{Reed}, it is in fact the only unitary irreducible representation of $\mathbb{H} (n)$ on $L^2 (\mathbb{R}^n)$ up to rescalings of Planck's constant.

Given a tempered distribution $a^{\sigma} \in \mathcal{S}^{\prime}
(\mathbb{R}^{2n})$ \cite{Grubb,Hormander}, the Weyl operator with
symbol $a^{\sigma}$ is the Bochner integral
\cite{Dubin,Folland,Gosson1,Pool,Wong}
\begin{equation}
\widehat{A} := \left(\frac{1}{2 \pi} \right)^n \int_{\mathbb{R}^{2n}} \mathcal{F}_{\sigma} a^{\sigma} (z) \widehat{D^{\sigma}} (z) d z.
\label{eqWeylOperator1}
\end{equation}
Here $\mathcal{F}_{\sigma} a$ denotes the symplectic Fourier transform
\begin{equation}
\mathcal{F}_{\sigma} a (z ) := \left(\frac{1}{2 \pi} \right)^n \int_{\mathbb{R}^{2n}} e^{-i \sigma (z,z^{\prime})} a(z^{\prime}) dz^{\prime}.
\label{eqSymplecticFourierTransform1}
\end{equation}
We note that $\mathcal{F}_{\sigma}$ in (\ref{eqSymplecticFourierTransform1}) extends into an involutive automorphism $\mathcal{S}^{\prime} (\mathbb{R}^n)\rightarrow \mathcal{S}^{\prime} (\mathbb{R}^n)$.

The Weyl correspondence, written
$\widehat{A}\overset{\mathrm{Weyl}}{\longleftrightarrow}a^{\sigma}$
or $a^{\sigma} \overset{\mathrm{Weyl}}{\longleftrightarrow}
\widehat{A}$ between a symbol $a^{\sigma} \in \mathcal{S}^{\prime}
(\mathbb{R}^{2n})$ and the associated Weyl operator $\widehat{A}$
is a one-to-one map from $\mathcal{S}^{\prime} (\mathbb{R}^{2n})$
onto the space $\mathcal{L} \left(\mathcal{S}(\mathbb{R}^n),
\mathcal{S}^{\prime} (\mathbb{R}^n) \right)$ of bounded linear
operators $\mathcal{S} (\mathbb{R}^n) \to \mathcal{S}^{\prime}
(\mathbb{R}^n)$. This can be proven using Schwartz's kernel
Theorem and the fact that the Weyl symbol $a^{\sigma}$ of the
operator $\widehat{A}$ is related with the distributional kernel
$K_a$ of that operator by a partial Fourier transform
\cite{Bracken,Gosson1,Wong}:
\begin{equation}
a^{\sigma}(x,p)= \int_{\mathbb{R}^n} e^{-i y \cdot p} K_a \left(x + \frac{y}{2},x - \frac{y}{2} \right) dy,
\label{eqWeylOperator2}
\end{equation}
where $K_a \in  \mathcal{S}^{\prime} (\mathbb{R}^{2n})$ and the Fourier transform is defined in the usual distributional sense.

Conversely, by the Fourier inversion theorem, the kernel $K_a$ can be expressed in terms of the symbol $a^{\sigma}$:
\begin{equation}
K_a (x,y) =  \left(\frac{1}{2 \pi}\right)^n \int_{\mathbb{R}^n} e^{i p \cdot (x-y)} a^{\sigma} \left( \frac{x+y}{2} , p \right) dp.
\label{eqWeylOperator3}
\end{equation}

A Weyl operator is formally self-adjoint if and only if its symbol $a^{\sigma}$ is real.

Now, let $\psi \in L^2 (\mathbb{R}^n)$ and define the rank-one operator $\widehat{\rho_{\psi}}$
\begin{equation}
\widehat{\rho_{\psi}} \phi = <\psi| \phi> \psi,
\label{eqWeylOperator4}
\end{equation}
for all $\phi \in L^2 (\mathbb{R}^n)$. This is a Hilbert-Schmidt operator with kernel:
\begin{equation}
K_{\rho_{\psi}} (x,y) = \psi (x) \overline{\psi (y)} .
\label{eqWeylOperator4}
\end{equation}
The corresponding Weyl symbol is (cf.(\ref{eqWeylOperator2})):
\begin{equation}
\rho^{\sigma}_{\psi} (x,p) = \int_{\mathbb{R}^n} e^{-i y \cdot p} \psi \left(x+ \frac{y}{2} \right) \overline{\psi \left(x- \frac{y}{2} \right)} dy
\label{eqWeylOperator5}
\end{equation}
which is proportional to the celebrated Wigner function \cite{Wigner}:
\begin{equation}
W^{\sigma} \psi (x,p) := \left(\frac{1}{2 \pi } \right)^n \rho_{\psi}^{\sigma} (x,p).
\label{eqWeylOperator6}
\end{equation}
The multiplicative constant $(2 \pi)^{-n}$ is included to ensure the correct normalization
\begin{equation}
\int_{\mathbb{R}^{2n}} W^{\sigma} \psi (z) dz =1,
\label{eqWeylOperator7}
\end{equation}
whenever $||\psi||_{L^2}=1$.

The operator $\widehat{\rho_{\psi}}$ is the density matrix of the pure state $\psi$. Density matrices  provide a unified formulation of pure and mixed states, playing a key role in many different contexts, notably semi-classical limit, decoherence, {\it etc} \cite{Giulini}. Let then $\left\{p_{\alpha} \right\}_{\alpha}$ be some probability distribution
\begin{equation}
0 \le p_{\alpha} \le 1, \hspace{1 cm} \sum_{\alpha} p_{\alpha} =1,
\label{eqWeylOperator8}
\end{equation}
and $\left\{\widehat{\rho}_{\alpha} \right\}_{\alpha}$ a collection of pure state density matrices associated with normalized states $\psi_{\alpha} \in L^2 (\mathbb{R}^n)$. The statistical mixture
\begin{equation}
\widehat{\rho} = \sum_{\alpha} p_{\alpha} \widehat{\rho}_{\alpha}
\label{eqWeylOperator9}
\end{equation}
is called the density matrix of a mixed state. One needs some caution in interpreting eq.(\ref{eqWeylOperator9}). The usual statement that the system is in state $\widehat{\rho}_{\alpha}$ with probability $ p_{\alpha}$ is not very rigorous, because there are (in general infinitely) many different decompositions of a density matrix of the form (\ref{eqWeylOperator9}).

An useful way to tell pure states from mixed states is to compute the so-called {\it purity} of the state. Moyal's identity \cite{Moyal} states that if $\psi , \phi \in L^2 (\mathbb{R}^n)$, then $W^{\sigma} \psi , W^{\sigma} \phi \in  L^2 (\mathbb{R}^{2n})$ and that:
\begin{equation}
<W^{\sigma} \psi|W^{\sigma} \phi>_{L^2 (\mathbb{R}^{2n})} =\frac{1}{(2 \pi)^n} \left|<\psi| \phi> _{L^2 (\mathbb{R}^n)} \right|^2.
\label{eqMoyalidentity1}
\end{equation}
In particular:
\begin{equation}
(2 \pi)^n||W^{\sigma} \psi||_{L^2 (\mathbb{R}^{2n})}^2 = ||\psi||_{L^2 (\mathbb{R}^n)}^4 =1.
\label{eqMoyalidentity2}
\end{equation}
On the other hand, if we have a mixed state $W^{\sigma} \rho $, then:
\begin{equation}
(2 \pi)^n||W^{\sigma} \rho||_{L^2 (\mathbb{R}^{2n})}^2 <1.
\label{eqMoyalidentity3}
\end{equation}
The {\it purity} of a state $W^{\sigma} \rho$ is defined by
\begin{equation}
0 < \mathcal{P} \left[W^{\sigma} \rho \right] = (2 \pi)^n || W^{\sigma} \rho||_{L^2 (\mathbb{R}^{2n})}^2  \le 1,
\label{eqMoyalidentity3}
\end{equation}
and we have
\begin{equation}
\begin{array}{l l}
\mathcal{P} \left[W^{\sigma} \rho \right] =1 & \mbox{if } W^{\sigma} \rho \mbox{ is a pure state}\\
& \\
\mathcal{P} \left[W^{\sigma} \rho \right] <1 & \mbox{if } W^{\sigma} \rho \mbox{ is a mixed state}
\end{array}
\label{eqMoyalidentity4}
\end{equation}

It is a well known fact from the theory of compact operators \cite{Reed} that an operator $\widehat{\rho}$ can be written as in (\ref{eqWeylOperator8},\ref{eqWeylOperator9}) if and only if it is a positive and trace-class operator with unit trace. If $\rho(x,y)$ is the corresponding kernel then the associated Wigner function is
\begin{equation}
\begin{array}{c}
W^{\sigma} \rho (x,p) = \left(\frac{1}{2 \pi}\right)^n \int_{\mathbb{R}^n}  e^{-i y \cdot p} \rho \left(x+ \frac{y}{2} ,x- \frac{y}{2} \right) dy =\\
\\
= \sum_{\alpha} p_{\alpha} W^{\sigma} \psi_{\alpha} (x,p),
\end{array}
\label{eqWeylOperator10}
\end{equation}
with uniform convergence. If $\widehat{A}\overset{\mathrm{Weyl}}{\longleftrightarrow}a^{\sigma}$ is some self-adjoint operator such that $\widehat{A} \widehat{\rho}$ is trace-class then the expectation value of $\widehat{A}$ in the state $\widehat{\rho}$ can be evaluated according to the following remarkable formula.
\begin{equation}
<\widehat{A}>_{\widehat{\rho}} = Tr (\widehat{A}\widehat{\rho}) = \int_{\mathbb{R}^{2n}} a^{\sigma} (z) W^{\sigma} \rho (z) dz.
\label{eqWeylOperator11}
\end{equation}
Because of this formula and of the fact that Wigner functions have positive marginal distributions, one is tempted to interpret them as joint position and momentum probability densities. However, this is precluded by their lack of positivity as stated in Hudson's Theorem \cite{Ellinas,Hudson,Soto}.

\subsection{Weyl quantization on $(\mathbb{R}^{2n}; \omega)$}

Alternatively, one may choose to quantize the system on a non-standard symplectic vector space $(\mathbb{R}^{2n}; \omega)$ with symplectic form $\omega$ given by (\ref{eqsymplecticform5}) \cite{Bastos1,Bastos2,Chowdhuri1,Chowdhuri2,Dias1,Dias2}.

Upon quantization, the Heisenberg-Weyl algebra (\ref{eqHeisenbergWeylAlgebra1}) is replaced by the algebra
\begin{equation}
\left[\widehat{\Xi}_{\alpha},\widehat{\Xi}_{\beta} \right] = i \Omega_{\alpha \beta} \widehat{I}, \hspace{0.5 cm} \left[\widehat{\Xi}_{\alpha},\widehat{I}\right] = 0, \hspace{1 cm} \alpha, \beta =1, \cdots, 2n.
\label{eqNonstandardWeylalgebra1}
\end{equation}
In the sequel, we shall assume that
\begin{equation}
\det {\bf \Omega} =1 , \hspace{1 cm} \det {\bf S}= \pm 1,
\label{eqNonstandardWeylalgebra1.1}
\end{equation}
for all ${\bf S} \in \mathcal{D}(n; \omega)$. This is because different values of the determinant of ${\Omega}$ simply amount to a rescaling of the variables $\left\{\widehat{\Xi}_{\alpha} \right\}_{1 \le \alpha \le 2n}$ (or, equivalently of a rescaling of Planck's constant).

Upon exponentiation of (\ref{eqNonstandardWeylalgebra1}), we obtain the non-standard Heisenberg-Weyl displacement operators:
\begin{equation}
\widehat{D}^{\omega} (\xi) = e^{i \omega (\xi, \widehat{\Xi})},
\label{eqNonstandardWeylalgebra2}
\end{equation}
satisfying the commutation relations
\begin{equation}
\widehat{D}^{\omega} (\xi ) \widehat{D}^{\omega} (\zeta) = e^{\frac{i}{2} \omega (\xi, \zeta)} \widehat{D}^{\omega} (\xi +  \zeta) = e^{i \omega (\xi, \zeta)}  \widehat{D}^{\omega} (\zeta) \widehat{D}^{\omega} (\xi) .
\label{eqNonstandardWeylalgebra3}
\end{equation}
As before, the elements of the form
\begin{equation}
\widehat{U}^{\omega} (\xi, \tau) = e^{i \tau} \widehat{D}^{\omega} (\xi ) ,
\label{eqNonstandardWeylalgebra4}
\end{equation}
constitute a unitary irreducible representation of the modified Heisenberg group $\mathbb{H}^{\omega} (n)$ with multiplication
\begin{equation}
(\xi, \tau) \cdot (\xi^{\prime}, \tau^{\prime}) = \left(\xi + \xi^{\prime}, \tau + \tau^{\prime} + \frac{1}{2} \omega (\xi, \xi^{\prime}) \right).
\label{eqNonstandardWeylalgebra5}
\end{equation}
Notice that if ${\bf S} \in \mathcal{D}(n; \omega)$ and $\widehat{Z}$ obey the Heisenberg-Weyl algebra (\ref{eqHeisenbergWeylAlgebra1}), then
\begin{equation}
\widehat{\Xi} = {\bf S}\widehat{Z}
\label{eqNonstandardWeylalgebra6}
\end{equation}
satisfy (\ref{eqNonstandardWeylalgebra1}). Consequently, we may write
\begin{equation}
 \widehat{D}^{\omega} (\xi )= e^{i \omega(\xi, \widehat{\Xi})} = e^{i \omega(\xi, {\bf S} \widehat{Z})}=e^{i \sigma({\bf S^{-1}} \xi, \widehat{Z})} = \widehat{D}^{\sigma} ({\bf S^{-1}} \xi ).
\label{eqNonstandardWeylalgebra6}
\end{equation}
From this relation and eq.(\ref{eqCommutationRelationsWeylOps1}), one easily proves the commutation relations (\ref{eqNonstandardWeylalgebra3}).

Also, if we perform the transformation $\xi \mapsto z= {\bf S} \xi$ with ${\bf S} \in \mathcal{D}(n; \omega)$ in (\ref{eqWeylOperator1}) and substitute (\ref{eqNonstandardWeylalgebra6}), we obtain:
\begin{equation}
\widehat{A} = \left(\frac{1}{2 \pi} \right)^n  \int_{\mathbb{R}^{2n}} \mathcal{F}_{\sigma} a^{\sigma} ({\bf S^{-1}} \xi)  \widehat{D}^{\omega} (\xi) d \xi
\label{eqNonstandardWeylalgebra7}
\end{equation}
Next notice that
\begin{equation}
\begin{array}{c}
\mathcal{F}_{\sigma} a ({\bf S^{-1}} \xi) = \left(\frac{1}{2 \pi} \right)^n \int_{\mathbb{R}^{2n}}e^{-i \sigma ({\bf S^{-1}} \xi, z)} a (z) d z = \\
\\
=\left(\frac{1}{2 \pi} \right)^n \int_{\mathbb{R}^{2n}}e^{-i \omega ( \xi, {\bf S}z)} a (z) d z=\\
\\
= \left(\frac{1}{2 \pi} \right)^n  \int_{\mathbb{R}^{2n}} e^{-i \omega ( \xi, \xi^{\prime})} a ({\bf S^{-1}} \xi^{\prime}) d \xi^{\prime} .
\end{array}
\label{eqNonstandardWeylalgebra8}
\end{equation}
This then suggests the following definition of the $\omega$-symplectic Fourier transform \cite{Dias2}:
\begin{equation}
\mathcal{F}_{\omega} a(\xi) := \frac{1}{(2 \pi)^n } \int_{\mathbb{R}^{2n}} e^{-i \omega ( \xi, \xi^{\prime})} a(\xi^{\prime}) d \xi^{\prime},
\label{eqNonstandardWeylalgebra9}
\end{equation}
for any $a \in \mathcal{S}^{\prime} (\mathbb{R}^{2n})$. With this choice of normalization $\mathcal{F}_{\omega}$ is involutive $\mathcal{F}_{\omega}\mathcal{F}_{\omega} a= a$.

Also we define the $\omega$-Weyl symbol of the operator $\widehat{A}$:
\begin{equation}
a^{\omega} (\xi) := a^{\sigma} ({\bf S^{-1}} \xi).
\label{eqNonstandardWeylalgebra10}
\end{equation}
Altogether, from (\ref{eqNonstandardWeylalgebra7}-\ref{eqNonstandardWeylalgebra10}), we obtain:
\begin{equation}
\widehat{A} = \frac{1}{(2 \pi)^n  } \int_{\mathbb{R}^{2n}} \mathcal{F}_{\omega} a^{\omega} (\xi) \widehat{D}^{\omega} (\xi) d \xi,
\label{eqNonstandardWeylalgebra11}
\end{equation}
which defines the correspondence principle between an operator $\widehat{A}$ and its $\omega$-Weyl symbol $a^{\omega}$. From eq.(\ref{eqWeylOperator6},\ref{eqWeylOperator11}) it follows that
\begin{equation}
\begin{array}{c}
<\widehat{A} >_{\widehat{\rho} } = \frac{1}{(2 \pi)^n} \int_{\mathbb{R}^{2n}} a^{\sigma} (z) \rho^{\sigma} (z) dz=\\
\\
= \frac{1}{(2 \pi)^n  } \int_{\mathbb{R}^{2n}} a^{\sigma} ({\bf S^{-1}} \xi) \rho^{\sigma} ({\bf S^{-1}} \xi ) d \xi=\\
\\
= \frac{1}{(2 \pi)^n  } \int_{\mathbb{R}^{2n}}  a^{\omega} (\xi) \rho^{\omega} ( \xi) d \xi =\\
\\
=  \int_{\mathbb{R}^{2n}}  a^{\omega} (\xi ) W^{\omega} \rho (\xi ) d \xi,
\end{array}
\label{eqNonstandardWeylalgebra12}
\end{equation}
where
\begin{equation}
 W^{\omega} \rho (\xi) =  W^{\sigma} \rho ({\bf S^{-1}} \xi),
\label{eqNonstandardWeylalgebra13}
\end{equation}
is the $\omega$-Wigner function associated with the density matrix $\widehat{\rho}$ \cite{Bastos1,Bastos2,Chowdhuri1,Chowdhuri2,Jiang}.

Regarding the purity of the states on the non-standard symplectic space $(\mathbb{R}^{2n}; \omega)$ we have:
\begin{equation}
\begin{array}{c}
\mathcal{P} \left[ W^{\omega} \rho \right] = (2 \pi)^n \int_{\mathbb{R}^{2n}} | W^{\omega} \rho(\xi)|^2 d \xi =(2 \pi)^n \int_{\mathbb{R}^{2n}} | W^{\sigma} \rho({\bf S^{-1}} \xi)|^2 d \xi = \\
\\
= (2 \pi)^n \int_{\mathbb{R}^{2n}} | W^{\sigma} \rho(z)|^2 d z=\mathcal{P} \left[ W^{\sigma} \rho \right]
\end{array}
\label{eqpurity1}
\end{equation}
And thus, we have as before
\begin{equation}
\begin{array}{l l}
\mathcal{P} \left[W^{\omega} \rho \right] =1 & \mbox{if } W^{\omega} \rho \mbox{ is a pure state}\\
& \\
\mathcal{P} \left[W^{\omega} \rho \right] <1 & \mbox{if } W^{\omega} \rho \mbox{ is a mixed state}
\end{array}
\label{eqpurity2}
\end{equation}
We conclude that the main structures of standard quantum mechanics extend trivially to the non-standard symplectic case. However, there are also significative differences concerning the properties of states. We will address these issues in the next section.

\begin{remark}\label{RemarkPPT}
At this point it is worth going back to the partial transpose transformation mentioned in the introduction in relation to the separability problem for systems with continuous variables.

Suppose that a system is constituted of two subsystems - Alice and Bob - with $n^A $ and $n^B$ degrees of freedom $(n=n^A+n^B)$. Their coordinates are $z^A=(x^A, p^A)$ and $z^B=(x^B, p^B)$. We write the collective coordinate $z=(x^A,x^B, p^A, p^B)$. The partial transpose transformation reverses Bob's momentum \cite{Simon,Werner}:
\begin{equation}
z \mapsto {\bf P} z= \left(
\begin{array}{c c c r}
{\bf I} & {\bf 0} & {\bf 0} & {\bf 0} \\
{\bf 0} & {\bf I} & {\bf 0} & {\bf 0} \\
{\bf 0} & {\bf 0} & {\bf I} & {\bf 0} \\
{\bf 0} & {\bf 0} & {\bf 0} & - {\bf I}
\end{array}
\right) ~ \left(
\begin{array}{c}
x^A\\
x^B\\
 p^A\\
  p^B
\end{array}
\right) =
\left(
\begin{array}{c}
x^A\\
x^B\\
 p^A\\
  - p^B
\end{array}
\right)
\label{eqPPT1}
\end{equation}
A straightforward computation reveals that the transformation ${\bf P}={\bf P^{-1}}$ is neither $\sigma$-symplectic nor $\sigma$-anti-symplectic. Consequently, in general, $W^{\sigma} \psi ({\bf P^{-1}}z)$ is not a $\sigma$-Wigner function. Rather, one can think of ${\bf P}$ as a Darboux map for the symplectic form
\begin{equation}
\omega (z, z^{\prime}) = z^T {\bf \Omega^{-1}} z^{\prime},
\label{eqPPT2}
\end{equation}
with
\begin{equation}
{\bf \Omega}= {\bf P} {\bf J} {\bf P^T}.
\label{eqPPT3}
\end{equation}
And hence, $W^{\omega} \psi (z)=W^{\sigma} \psi ({\bf P^{-1}}z)$ is a Wigner function on the non-standard symplectic space with symplectic form (\ref{eqPPT2},\ref{eqPPT3}).
\end{remark}

\section{States in phase-space}

In this section we shall always consider real, normalized and
square-integrable functions defined on the phase-space
$\mathbb{R}^{2n}$. A difficult problem consists in assessing
whether such a function $f$ qualifies as a $\omega$-Wigner
function for some symplectic form $\omega$. In other words, how
can one tell if $f$ is the $\omega$-Wigner function of some
density matrix $\widehat{\rho}$?

It is well known that the answer lies in the following positivity condition \cite{Dias3,Lions}. A real and normalized function $f \in L^2 (\mathbb{R}^{2n})$ is a $\sigma$-Wigner function if and only if
\begin{equation}
\int_{\mathbb{R}^{2n}} f(z) W^{\sigma} \psi (z)dz \ge 0,
\label{eqQuantumConditions1}
\end{equation}
for all $\psi \in L^2 (\mathbb{R}^{n})$. On the other hand, $f$ is a $\omega$-Wigner function if and only if there exist a $\sigma$-Wigner function $f^{\sigma}$ and a matrix ${\bf S} \in \mathcal{D}(n; \omega)$ such that $f(\xi) = f^{\sigma} ({\bf S}^{-1} \xi)$. That happens if and only if
\begin{equation}
\begin{array}{c}
\int_{\mathbb{R}^{2n}} f(\xi) W^{\omega} \psi (\xi)d \xi = \int_{\mathbb{R}^{2n}}  f^{\sigma} ({\bf S}^{-1}\xi) W^{\sigma} \psi ({\bf S}^{-1} \xi)d \xi =\\
\\
= \int_{\mathbb{R}^{2n}} f^{\sigma}(z) W^{\sigma} \psi (z)dz \ge 0,
\end{array}
\label{eqQuantumConditions2}
\end{equation}
for all $\psi \in L^2 (\mathbb{R}^{n})$.

The positivity conditions (\ref{eqQuantumConditions1},\ref{eqQuantumConditions2}) are somewhat tautological, as they require the knowledge of the entire set of $\sigma$- or $\omega$-Wigner functions of pure states. There are an alternative set of conditions which are more in the spirit of measure theory and Bochner's Theorem. They were derived by Kastler \cite{Kastler}, and Loupias and Miracle-Sole \cite{Loupias} using the machinery of $C^{\ast}$-algebras and were later synthesized by Narcowich \cite{Narcowich1} in the framework of the so-called Narcowich-Wigner (NW) spectrum. This proved to be a very powerful tool to address convolutions of functions defined on the phase-space as we shall shortly see.

Here is a brief summary of the remainder of this section. In the next subsection, we define the NW spectrum of a function for a given symplectic structure and show how it provides a condition of positivity alternative to (\ref{eqQuantumConditions1}) or (\ref{eqQuantumConditions2}). In Theorem \ref{TheoremQuantumConditions5}, we show how the NW spectrum behaves under the convolution. In Theorem \ref{TheoremPropertiesNWSpectra} we gather the main properties of NW spectra. In subsection 4.2 we consider the uncertainty principle in the Robertson-Schr\"odinger form and the associated concept of symplectic spectrum for a given symplectic form $\omega$. The main results of this subsection are the Williamson Theorem for a non-standard symplectic form (Theorem \ref{TheoremOmegaWilliamsonTheorem1}) and Theorem \ref{TheoremUnboudedQuotient1} which shows that one can find positive-definite matrices with lowest symplectic eigenvalues with respect to two different sympletic strucutes as far apart as one wishes. All these results culminate in Theorem \ref{TheoremSets1} of section 4.3, where we show how the sets of positive (classical) measures and the sets of Wigner functions for two distinct symplectic structures relate to each other. Finally, as an application, we prove in Theorem \ref{TheoremRepresentabilityProblem} how a Wigner function behaves under an arbitrary linear coordinate transformation.

\subsection{Narcowich-Wigner spectra}

\begin{definition}\label{DefinitionQuantumConditions1}
Let ${\bf \Sigma} \in A(2n; \mathbb{R})$ and $\alpha \in \mathbb{R}$. A complex-valued, continuous function $f$ on $\mathbb{R}^{2n}$ that has the property that the matrix $\left[ \left[M_{jk} \right] \right]$ with entries
\begin{equation}
M_{jk} =f(a_j-a_k) e^{\frac{i \alpha}{2} a_k \cdot {\bf \Sigma} a_j}
\label{eqQuantumConditions3}
\end{equation}
is a positive matrix on $\mathbb{C}^N$ for each $N \in \mathbb{N}$ and all $a_1, a_2 , \cdots, a_N \in \mathbb{R}^{2n}$ is called a function of the $(\alpha, {\bf \Sigma})$-positive type. If $\alpha=0$, then we shall simply say that $f$ is of the $0$-positive type.
\end{definition}

A function may be of the $(\alpha, {\bf \Sigma})$-positive type for various values of $\alpha$ and these can be assembled in a set.

\begin{definition}\label{DefinitionQuantumConditions2}
The ${\bf \Sigma}$-Narcowich-Wigner spectrum of a complex-valued, continuous function $f$ on $\mathbb{R}^{2n}$ is the set:
\begin{equation}
\mathcal{W}^{{\bf \Sigma}} (f) := \left\{ \alpha \in \mathbb{R}: ~ f \mbox{ is of the $(\alpha, {\bf \Sigma})$-positive type} \right\}.
\label{eqQuantumConditions4}
\end{equation}
More generally the Narcowich-Wigner spectrum of $f$ is the set:
\end{definition}
\begin{equation}
\mathcal{W} (f) := \left\{ (\alpha ,{\bf \Sigma}) \in \mathbb{R} \times A (2n; \mathbb{R}) : ~ f \mbox{ is of the $(\alpha, {\bf \Sigma})$-positive type} \right\}.
\label{eqQuantumConditions4.1}
\end{equation}

For each $\alpha \in \mathbb{R}$ and ${\bf \Sigma} \in A(2n; \mathbb{R})$ the set of all functions of $(\alpha, {\bf \Sigma})$-positive type is a convex cone.

Before we proceed let us briefly recall Bochner's Theorem.

\begin{theorem}\label{TheoremQuantumConditions3}
{\bf (Bochner)} The set of Fourier transforms of finite, positive measures on $\mathbb{R}^k$ $(k \in \mathbb{N})$ is exactly the cone of functions of $0$-positive type.
\end{theorem}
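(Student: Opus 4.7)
The statement has two implications, and I would treat them separately.

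The forward implication (Fourier transforms of finite positive measures are of $0$-positive type) is a direct computation. Given a finite positive Borel measure $\mu$ on $\mathbb{R}^k$ and its Fourier transform $f(a)=\int e^{-ia\cdot\xi}\,d\mu(\xi)$, for any $a_1,\dots,a_N\in\mathbb{R}^k$ and $c_1,\dots,c_N\in\mathbb{C}$ one computes
\begin{equation*}
\sum_{j,l} c_j\overline{c_l}\, f(a_j-a_l)=\int_{\mathbb{R}^k}\Bigl|\sum_{j}c_j e^{-ia_j\cdot\xi}\Bigr|^{2} d\mu(\xi)\ge 0,
\end{equation*}
so the matrix $[f(a_j-a_l)]$ is positive semidefinite. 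This gives one inclusion.

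For the converse, my plan is to use the GNS/Stone route. First I would record the elementary consequences of $0$-positive type (with $N=1,2$): $f(0)\ge 0$, $f(-a)=\overline{f(a)}$, and $|f(a)|\le f(0)$; continuity of $f$ at the origin combined with positive definiteness upgrades to uniform continuity on $\mathbb{R}^k$. Next, on $C_c(\mathbb{R}^k)$ I define the sesquilinear form
\begin{equation*}
\langle \phi,\psi\rangle_f := \int_{\mathbb{R}^k}\int_{\mathbb{R}^k} f(x-y)\,\phi(x)\,\overline{\psi(y)}\,dx\,dy,
\end{equation*}
which is nonnegative precisely because the matrices $[f(a_j-a_l)]$ are positive (integrals being limits of Riemann sums, whose nonnegativity is exactly the $0$-positive type condition). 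Quotienting out the null space and completing yields a Hilbert space $H_f$. Translations $\tau_a\phi(x)=\phi(x-a)$ preserve $\langle\cdot,\cdot\rangle_f$ and form a strongly continuous unitary one-parameter group in each coordinate; by Stone's theorem (or equivalently the SNAG theorem for the abelian group $\mathbb{R}^k$) there is a unique projection-valued measure $E$ on $\mathbb{R}^k$ with $U_a=\int e^{-ia\cdot\xi}\,dE(\xi)$.

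The remaining step, and the main technical hurdle, is to recover $f$ itself as the Fourier transform of a scalar measure derived from $E$. The natural candidate is $d\mu(\xi)=\langle dE(\xi)\Omega,\Omega\rangle$ where $\Omega$ plays the role of a "delta at $0$" in $H_f$, giving $f(a)=\langle U_a\Omega,\Omega\rangle$. Since $\delta_0\notin C_c(\mathbb{R}^k)$, one has to build $\Omega$ as a limit: take an approximate identity $\phi_n\in C_c(\mathbb{R}^k)$, compute
\begin{equation*}
\langle \tau_a\phi_n,\phi_n\rangle_f = \int\int f(x-y-a)\,\phi_n(x)\,\overline{\phi_n(y)}\,dx\,dy,
\end{equation*}
and use uniform continuity of $f$ to show this converges to $f(-a)$ (equivalently $\overline{f(a)}$) as $n\to\infty$, uniformly in $a$ on compacts. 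Boundedness of $\{\phi_n\}$ in $H_f$ (which follows since $f(0)<\infty$ and the integrals are uniformly controlled) lets one extract a weak limit $\Omega\in H_f$. Then $\mu$ is automatically a finite positive measure with $\mu(\mathbb{R}^k)=\|\Omega\|^2=f(0)$, and $f=\hat\mu$ by construction. As an alternative, one can bypass the GNS construction by directly defining the positive linear functional $L(\widehat{\phi})=\int f\phi\,dx$ on $\mathcal{F}(\mathcal{S}(\mathbb{R}^k))$, showing positivity by a Riemann-sum approximation that invokes the $0$-positive type matrices, and then invoking the Riesz--Markov theorem to represent $L$ by $\mu$; but the delicate approximate-identity argument reappears in that route as the justification that $L$ is well defined and positive on nonnegative test functions.
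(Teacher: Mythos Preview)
The paper does not prove this theorem at all: it is merely \emph{recalled} as a classical result (``Before we proceed let us briefly recall Bochner's Theorem'') and immediately used as background for the KLM conditions. So there is no proof in the paper to compare against.

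Your sketch is a reasonable outline of one of the standard proofs. The forward implication is the usual one-line computation and is fine. For the converse, the GNS/Stone route you describe is correct in spirit, though the step where you extract a weak limit $\Omega$ from the approximate identity $\{\phi_n\}$ is the one place I would tighten: you need to check that $\{\phi_n\}$ is bounded in $H_f$ (this follows from $\langle\phi_n,\phi_n\rangle_f \to f(0)$ by uniform continuity of $f$, which in turn is built into the paper's Definition of $(\alpha,\Sigma)$-positive type since continuity is assumed there), and then that any weak cluster point actually satisfies $\langle U_a\Omega,\Omega\rangle = f(a)$ rather than merely $\overline{f(a)}$ --- a sign/conjugation bookkeeping issue worth writing out once. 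Your alternative Riesz--Markov route is also standard and arguably cleaner: one shows that the tempered distribution $\mu$ with $\widehat{\mu}=f$ is nonnegative on nonnegative Schwartz functions (via the factorization $\varphi = \psi * \psi^{*}$ for $\varphi \ge 0$ of compact Fourier support, plus density), hence a finite positive measure with total mass $f(0)$.
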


The KLM (Kastler, Loupias, Miracle-Sole) conditions are a twisted generalization of Bochner's theorem.

\begin{theorem}\label{TheoremQuantumConditions4}
{\bf (Kastler, Loupias, Miracle-Sole)} Let $f$ be a real function on the phase-space $\mathbb{R}^{2n}$. Then $f$ is a $\sigma$-Wigner function if and only if its Fourier transform $\mathcal{F}(f)$ satisfies the $\sigma$-KLM conditions:

\vspace{0.3 cm}
\noindent
(i) $\mathcal{F}(f) (0)=1$;

\vspace{0.3 cm}
\noindent
(ii) $\mathcal{F}(f)$ is of the $(1,{\bf J})$-positive type.
\end{theorem}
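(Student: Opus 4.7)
The plan is to establish the two implications of the KLM theorem by translating between the language of Wigner functions (phase-space) and that of density operators on $L^{2}(\mathbb{R}^{n})$ via the Weyl correspondence of Subsection~4.1, using the commutation relation \eqref{eqCommutationRelationsWeylOps1} as the bridge that manufactures the $e^{\frac{i}{2}a_{k}\cdot\mathbf{J}a_{j}}$ twist.

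\textbf{Necessity.} Assume $f=W^{\sigma}\rho$ for some density matrix $\widehat{\rho}$. Condition (i) is immediate: by \eqref{eqWeylOperator7} and \eqref{eqWeylOperator10}, $\mathcal{F}(f)(0)=\int f(z)\,dz=\operatorname{tr}\widehat{\rho}=1$. For (ii), I will use the identity that, up to a Jacobian change of variables $z\mapsto \mathbf{J}\xi$, $\mathcal{F}(f)(\xi)$ coincides (modulo the conventional $(2\pi)^{n}$) with the characteristic function $\chi_{\rho}(\xi)=\operatorname{tr}\bigl(\widehat{\rho}\,\widehat{D^{\sigma}}(\mathbf{J}\xi)\bigr)$, obtained by recognizing that the Weyl symbol of $\widehat{D^{\sigma}}(\mathbf{J}\xi)$ is $e^{-i\xi\cdot z}$ and using \eqref{eqWeylOperator11}. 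Given any $a_{1},\dots,a_{N}\in\mathbb{R}^{2n}$ and $c_{1},\dots,c_{N}\in\mathbb{C}$, set $\widehat{Y}=\sum_{j}c_{j}\widehat{D^{\sigma}}(\mathbf{J}a_{j})$. Since $\widehat{\rho}\geq 0$, $\operatorname{tr}(\widehat{Y}^{*}\widehat{\rho}\widehat{Y})\geq 0$. Expanding the product $\widehat{D^{\sigma}}(-\mathbf{J}a_{j})\widehat{D^{\sigma}}(\mathbf{J}a_{k})$ using \eqref{eqCommutationRelationsWeylOps1} and the fact that $\sigma(\mathbf{J}a,\mathbf{J}b)=a\cdot\mathbf{J}b$ yields exactly the quadratic form $\sum_{j,k}\bar{c}_{j}c_{k}\,\mathcal{F}(f)(a_{j}-a_{k})\,e^{\frac{i}{2}a_{k}\cdot\mathbf{J}a_{j}}\geq 0$, which is the $(1,\mathbf{J})$-positive type condition of Definition~\ref{DefinitionQuantumConditions1}.

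\textbf{Sufficiency.} Conversely, suppose $f$ is real with $\mathcal{F}(f)$ continuous and satisfying (i), (ii). A standard consequence of twisted positivity (the $2\times 2$ minor using $a_{1}=0,a_{2}=\xi$) is $|\mathcal{F}(f)(\xi)|\leq\mathcal{F}(f)(0)=1$, so $\mathcal{F}(f)\in L^{\infty}$. Viewing $f$ as a tempered distribution, define $\widehat{\rho}$ to be the Weyl operator with symbol $(2\pi)^{n}f$ via \eqref{eqWeylOperator1}. Reality of $f$ makes $\widehat{\rho}$ formally self-adjoint, and condition (i) gives (formally) $\operatorname{tr}\widehat{\rho}=1$. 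The crucial step is positivity of $\widehat{\rho}$: for $\phi\in\mathcal{S}(\mathbb{R}^{n})$ one computes $\langle\phi|\widehat{\rho}\phi\rangle$ via \eqref{eqWeylOperator1} and expresses it as an integral of $\mathcal{F}(f)(\xi)$ against the Wigner-type kernel $\langle\phi|\widehat{D^{\sigma}}(\mathbf{J}\xi)\phi\rangle$. Approximating this integral by Riemann sums at nodes $a_{1},\dots,a_{N}$ with coefficients $c_{j}=w_{j}\langle\phi|\widehat{D^{\sigma}}(\mathbf{J}a_{j})\phi\rangle^{1/2}$-type weights --- more cleanly, applying the KLM condition to the function $\xi\mapsto\mathcal{F}(f)(\xi)$ tested against the positive-definite kernel generated by $\widehat{D^{\sigma}}(\mathbf{J}\xi)\phi$ --- yields $\langle\phi|\widehat{\rho}\phi\rangle\geq 0$. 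Density of $\mathcal{S}(\mathbb{R}^{n})$ upgrades this to $\widehat{\rho}\geq 0$. A positive operator with finite trace (the normalization coming from (i)) is automatically trace-class, so $\widehat{\rho}$ is a genuine density matrix and $f=W^{\sigma}\rho$ by \eqref{eqWeylOperator10}.

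\textbf{Main obstacle.} The delicate part is the sufficiency direction, specifically promoting the pointwise twisted positivity of the continuous function $\mathcal{F}(f)$ to operator positivity of the distributionally-defined $\widehat{\rho}$, and then ensuring $\widehat{\rho}$ is trace class rather than merely a positive tempered-distribution-valued functional. This is where the original proofs of Kastler and of Loupias--Miracle-Sole invoke the $C^{*}$-algebraic machinery of the twisted group algebra of $\mathbb{H}(n)$: the $(1,\mathbf{J})$-positive type condition says precisely that $\mathcal{F}(f)$ extends to a positive linear functional on that twisted convolution algebra, and the GNS construction together with the Stone--von Neumann uniqueness (invoked in Subsection~4.1) identifies the GNS representation with the Schr\"odinger representation, thereby producing the required $\widehat{\rho}\geq 0$. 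Bochner's theorem (Theorem~\ref{TheoremQuantumConditions3}) is the untwisted prototype; here it is the Weyl--Heisenberg twist $e^{\frac{i}{2}a_{k}\cdot\mathbf{J}a_{j}}$ that encodes the noncommutativity of quantum mechanics and distinguishes Wigner functions from classical probability densities.
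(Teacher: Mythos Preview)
The paper does not actually prove Theorem~\ref{TheoremQuantumConditions4}; it is stated as a classical result and attributed to the original references \cite{Kastler,Loupias}, then used as a black box to deduce the $\omega$-version (Theorem~\ref{TheoremQuantumConditions5}). So there is no ``paper's own proof'' to compare against --- you have effectively reconstructed a sketch of the original KLM argument.

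Your necessity direction is the standard and correct one: identifying $\mathcal{F}(W^{\sigma}\rho)(a)$ with $\operatorname{tr}\bigl(\widehat{\rho}\,\widehat{D^{\sigma}}(\mathbf{J}a)\bigr)$ via \eqref{eqWeylOperator11}, then expanding $\operatorname{tr}(\widehat{Y}^{*}\widehat{\rho}\,\widehat{Y})\geq 0$ with $\widehat{Y}=\sum_{j}c_{j}\widehat{D^{\sigma}}(\mathbf{J}a_{j})$ and using \eqref{eqCommutationRelationsWeylOps1} to produce the twist factor. (One minor slip: with the paper's convention $\sigma(z,z')=z\cdot\mathbf{J}^{-1}z'$ and $\mathbf{J}^{T}=\mathbf{J}^{-1}=-\mathbf{J}$, one has $\sigma(\mathbf{J}a,\mathbf{J}b)=\sigma(a,b)=-a\cdot\mathbf{J}b$, so the sign in your intermediate claim is off; it washes out after relabelling $j\leftrightarrow k$.)

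Your sufficiency direction has the right architecture but, as you yourself flag under ``Main obstacle'', the sentence ``a positive operator with finite trace \dots\ is automatically trace-class'' is precisely where the argument is not yet a proof. Condition (i) only says $\int f=1$ in the distributional sense; it does not by itself imply that the Weyl operator $\widehat{\rho}$ is trace-class, and positivity of $\widehat{\rho}$ on $\mathcal{S}(\mathbb{R}^{n})$ does not automatically promote it to a bounded, let alone trace-class, operator. This gap is exactly what the $C^{*}$-algebraic machinery (twisted group algebra of $\mathbb{R}^{2n}$, GNS, Stone--von Neumann) closes in \cite{Kastler,Loupias}, and you correctly identify that. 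An alternative, more pedestrian route found in later literature (e.g.\ Werner, or the treatment in \cite{Narcowich1}) is to show that $(1,\mathbf{J})$-positivity forces $\mathcal{F}(f)$ to be the uniform limit of finite-rank positive-type functions, each corresponding to a finite-rank positive operator, and then pass to the limit in trace norm. Either way, this step requires genuine additional work beyond the Riemann-sum heuristic you sketch.
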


The first condition ensures the normalization of $f$, while the second one is equivalent to the positivity condition (\ref{eqQuantumConditions1}). In terms of the Narcowich-Wigner spectrum, condition (ii) means
\begin{equation}
1  \in \mathcal{W}^{{\bf J}} (\mathcal{F}(f)).
\label{eqQuantumConditions5}
\end{equation}
On the other hand, if $f$ is everywhere non-negative, then according to Bochner's theorem
\begin{equation}
0 \in \mathcal{W}^{{\bf \Sigma}} (\mathcal{F}(f)),
\label{eqQuantumConditions6}
\end{equation}
for all ${\bf \Sigma} \in A(2n; \mathbb{R})$.

We now generalize the KLM conditions (Theorem \ref{TheoremQuantumConditions4}) for an arbitrary symplectic form $\omega$.

\begin{theorem}\label{TheoremQuantumConditions5}
Let $f$ be a real function on the phase-space $\mathbb{R}^{2n}$. Then $f$ is a $\omega$-Wigner function if and only if its Fourier transform $\mathcal{F}(f)$ satisfies the $\omega$-KLM conditions:

\vspace{0.3 cm}
\noindent
(i) $\mathcal{F}(f)(0)=1$;

\vspace{0.3 cm}
\noindent
(ii) $\mathcal{F}(f)$ is of the $(1,{\bf \Omega})$-positive type.
\end{theorem}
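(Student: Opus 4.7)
The plan is to reduce the theorem to its standard-form counterpart (Theorem \ref{TheoremQuantumConditions4}) via a Darboux map, which is the natural bridge between $(\mathbb{R}^{2n},\sigma)$ and $(\mathbb{R}^{2n},\omega)$. Fix any $\mathbf{S}\in\mathcal{D}(n;\omega)$, so that $\mathbf{\Omega}=\mathbf{S}\mathbf{J}\mathbf{S}^T$ and $|\det \mathbf{S}|=1$ by (\ref{eqNonstandardWeylalgebra1.1}). By the very definition of an $\omega$-Wigner function together with (\ref{eqNonstandardWeylalgebra13}), $f$ is a $\omega$-Wigner function if and only if the function $f^{\sigma}(z):=f(\mathbf{S}z)$ is a $\sigma$-Wigner function. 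Thus I only need to show that the two pairs of KLM conditions --- $\omega$-KLM for $f$ and $\sigma$-KLM for $f^{\sigma}$ --- are equivalent.

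First I would relate the two Fourier transforms. A direct change of variables $z=\mathbf{S}^{-1}\xi$ in the integral defining $\mathcal{F}(f)$, combined with $|\det\mathbf{S}|=1$, yields
\begin{equation}
\mathcal{F}(f)(\eta)=\mathcal{F}(f^{\sigma})(\mathbf{S}^{T}\eta),\qquad \eta\in\mathbb{R}^{2n}.
\label{eqProposalFT}
\end{equation}
In particular $\mathcal{F}(f)(0)=\mathcal{F}(f^{\sigma})(0)$, so condition (i) of the $\omega$-KLM conditions holds for $f$ iff condition (i) of the $\sigma$-KLM conditions holds for $f^{\sigma}$.

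Next I would verify the twisted positivity condition. Given $a_1,\dots,a_N\in\mathbb{R}^{2n}$, set $b_j:=\mathbf{S}^{T}a_j$. Using (\ref{eqProposalFT}) and the identity
\begin{equation}
a_k\cdot \mathbf{\Omega}\,a_j = a_k^{T}\mathbf{S}\mathbf{J}\mathbf{S}^{T}a_j = b_k\cdot\mathbf{J}\,b_j,
\end{equation}
the $(1,\mathbf{\Omega})$-matrix built from $\mathcal{F}(f)$ and the points $\{a_j\}$ coincides with the $(1,\mathbf{J})$-matrix built from $\mathcal{F}(f^{\sigma})$ and the points $\{b_j\}$:
\begin{equation}
\mathcal{F}(f)(a_j-a_k)\,e^{\tfrac{i}{2}a_k\cdot\mathbf{\Omega}a_j}
=\mathcal{F}(f^{\sigma})(b_j-b_k)\,e^{\tfrac{i}{2}b_k\cdot\mathbf{J}b_j}.
\end{equation}
Since $a\mapsto \mathbf{S}^{T}a$ is a bijection on $\mathbb{R}^{2n}$, the families $\{a_j\}$ and $\{b_j\}$ range over the same sets of $N$-tuples, so positivity of one family of matrices is equivalent to positivity of the other. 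Hence condition (ii) of the $\omega$-KLM conditions holds for $f$ iff condition (ii) of the $\sigma$-KLM conditions holds for $f^{\sigma}$.

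Combining these two equivalences with Theorem \ref{TheoremQuantumConditions4} applied to $f^{\sigma}$ closes the argument. I do not foresee any serious obstacle: the only subtlety is keeping the transpose/inverse placement consistent when pushing $\mathbf{S}$ through the Fourier kernel and through the symplectic form, but this is settled once and for all by the identity $\mathbf{\Omega}=\mathbf{S}\mathbf{J}\mathbf{S}^{T}$ together with $|\det\mathbf{S}|=1$. Independence of the choice of $\mathbf{S}\in\mathcal{D}(n;\omega)$ is automatic from Lemma \ref{LemmaDarbouxmap1}, since two Darboux matrices differ by a $\sigma$-symplectic factor which preserves the $\sigma$-KLM conditions.
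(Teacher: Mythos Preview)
Your proof is correct and follows essentially the same route as the paper: reduce to the $\sigma$-case via a Darboux map $\mathbf{S}\in\mathcal{D}(n;\omega)$, use the relation $\mathcal{F}(f)(a)=\mathcal{F}(f^{\sigma})(\mathbf{S}^{T}a)$, and check that the substitution $b_j=\mathbf{S}^{T}a_j$ turns the $(1,\mathbf{\Omega})$-matrix for $\mathcal{F}(f)$ into the $(1,\mathbf{J})$-matrix for $\mathcal{F}(f^{\sigma})$. Your additional remarks on the bijectivity of $a\mapsto\mathbf{S}^{T}a$ and on independence of the choice of $\mathbf{S}$ are fine but not needed for the argument.
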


\begin{proof}
As usual $f$ is a $\omega$-Wigner function if and only if there exists a $\sigma$-Wigner function $f^{\sigma}$ and a matrix ${\bf S} \in \mathcal{D}(n; \omega)$ such that $f(z)=f^{\sigma}({\bf S}^{-1}z)$. After a simple calculation, we conclude that $\mathcal{F}(f)(a) =\mathcal{F}(f^{\sigma})({\bf S}^T a)$. Consequently:
\begin{equation}
\begin{array}{c}
 \mathcal{F}(f)(a_j-a_k) e^{\frac{i \alpha}{2} a_k \cdot {\bf \Omega} a_j} = \mathcal{F}(f^{\sigma}) ({\bf S}^T(a_j-a_k)) e^{\frac{i \alpha}{2} a_k \cdot {\bf \Omega} a_j}=\\
\\
=\mathcal{F}(f^{\sigma}) (b_j-b_k) e^{\frac{i \alpha}{2} b_k \cdot {\bf S}^{-1} {\bf \Omega} ({\bf S}^T)^{-1}  b_j} =\mathcal{F}(f^{\sigma}) (b_j-b_k) e^{\frac{i \alpha}{2} b_k \cdot  {\bf J}  b_j}
\end{array}
\label{eqQuantumConditions7}
\end{equation}
with $b_j={\bf S}^T a_j$ for $j=1, \cdots, N$. We conclude that $\mathcal{W}^{{\bf \Omega}} (\mathcal{F}(f)) = \mathcal{W}^{{\bf J}} (\mathcal{F}(f^{\sigma}))$ and the result follows.
\end{proof}

\begin{remark}\label{Remark5.1}
It follows from the proof that $(\alpha , {\bf \Omega}) \in \mathcal{W} (\mathcal{F}(f))$ if and only if $(\alpha , {\bf J}) \in \mathcal{W} (\mathcal{F}(f^{\sigma}))$.
\end{remark}

Before we proceed, let us recapitulate the concept of Hamard-Schur (HS) product. Let ${\bf A}=\left[ \left[ A_{ij} \right] \right], {\bf B} =\left[ \left[ B_{ij} \right] \right] \in M (k \times n; \mathbb{C})$. Then the HS product is the map $\diamondsuit:  M (k \times n; \mathbb{C}) \times  M (k \times n; \mathbb{C}) \to  M (k \times n; \mathbb{C})$
\begin{equation}
{\bf A} \diamondsuit {\bf B} = \left[ \left[ A_{ij} B_{ij} \right] \right].
\label{eqQuantumConditions8}
\end{equation}
According to Schur's Theorem, the HS product has the property of being a closed operation in the convex cone $\mathbb{P}_{k \times k} (\mathbb{C})$ of positive $k \times k$ matrices, i.e.:
\begin{equation}
\diamondsuit: \mathbb{P}_{k \times k} (\mathbb{C}) \times \mathbb{P}_{k \times k} (\mathbb{C}) \to \mathbb{P}_{k \times k} (\mathbb{C}).
\label{eqQuantumConditions9}
\end{equation}

\begin{theorem}\label{TheoremQuantumConditions5}
Let $f,g \in L^1 (\mathbb{R}^{2n})$ be real-valued functions on
$\mathbb{R}^{2n}$, such that $\mathcal{F}(f),\mathcal{F}(g)$ are
of the $(\alpha, {\bf \Sigma})$ and of the $(\beta, {\bf
\Upsilon})$-positive type, respectively, and $\det(\alpha {\bf
\Sigma} + \beta  {\bf \Upsilon}) \ne 0$. Moreover, define $\gamma
:= \left(\det(\alpha {\bf \Sigma} + \beta  {\bf \Upsilon})
\right)^{\frac{1}{2n}}$ to be a real $(2n)$-th root of that
determinant (this is always possible since real antisymmetric
matrices have non-negative determinants; they are equal to the
square of a polynomial of its entries - the Pfaffian). Then the
Fourier transform of the convolution $f \star g$ is of the
$\left(\gamma , \frac{\alpha}{\gamma} {\bf \Sigma} +
\frac{\beta}{\gamma} {\bf \Upsilon}\right)$-positive type. For
later convenience we write:
\begin{equation}
(\alpha, {\bf \Sigma}) \oplus (\beta, {\bf \Upsilon}) = \left(\gamma , \frac{\alpha}{\gamma} {\bf \Sigma} + \frac{\beta}{\gamma} {\bf \Upsilon}\right).
\label{eqQuantumConditions9.1}
\end{equation}
\end{theorem}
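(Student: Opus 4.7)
The plan is to reduce everything to Schur's theorem (the positivity statement for Hadamard--Schur products already recalled in the excerpt) together with the usual convolution identity $\mathcal{F}(f\star g)=\mathcal{F}(f)\,\mathcal{F}(g)$.

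First, I would make the hypotheses concrete: for any finite collection $a_1,\ldots,a_N\in\mathbb{R}^{2n}$ the matrices
\begin{equation*}
M^f_{jk}=\mathcal{F}(f)(a_j-a_k)\,e^{\frac{i\alpha}{2}a_k\cdot{\bf \Sigma} a_j},\qquad M^g_{jk}=\mathcal{F}(g)(a_j-a_k)\,e^{\frac{i\beta}{2}a_k\cdot{\bf \Upsilon} a_j}
\end{equation*}
lie in $\mathbb{P}_{N\times N}(\mathbb{C})$, by Definition \ref{DefinitionQuantumConditions1} applied to $\mathcal{F}(f)$ and $\mathcal{F}(g)$.

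Next I would form the Hadamard--Schur product $M^f\diamondsuit M^g$. Because the exponentials multiply additively and because $\mathcal{F}(f)(a_j-a_k)\,\mathcal{F}(g)(a_j-a_k)=\mathcal{F}(f\star g)(a_j-a_k)$ by the convolution theorem, a direct computation gives
\begin{equation*}
(M^f\diamondsuit M^g)_{jk}=\mathcal{F}(f\star g)(a_j-a_k)\,e^{\frac{i}{2}a_k\cdot(\alpha{\bf \Sigma}+\beta{\bf \Upsilon})a_j}.
\end{equation*}
Schur's theorem, as recalled in (\ref{eqQuantumConditions9}), ensures that this matrix is again positive on $\mathbb{C}^N$, for every $N$ and every choice of the $a_j$'s.

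The last step is purely bookkeeping, matching the hypothesis of the theorem: rewrite the phase as
\begin{equation*}
e^{\frac{i}{2}a_k\cdot(\alpha{\bf \Sigma}+\beta{\bf \Upsilon})a_j}=\exp\!\left(\frac{i\gamma}{2}\,a_k\cdot\bigl(\tfrac{\alpha}{\gamma}{\bf \Sigma}+\tfrac{\beta}{\gamma}{\bf \Upsilon}\bigr)a_j\right),
\end{equation*}
which identifies $\mathcal{F}(f\star g)$ as a function of $(\gamma,\,\tfrac{\alpha}{\gamma}{\bf \Sigma}+\tfrac{\beta}{\gamma}{\bf \Upsilon})$-positive type. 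The assumption $\det(\alpha{\bf \Sigma}+\beta{\bf \Upsilon})\ne 0$ is what makes the real $(2n)$-th root $\gamma$ well defined and non-zero, and it is also the normalization that makes $\det(\tfrac{\alpha}{\gamma}{\bf \Sigma}+\tfrac{\beta}{\gamma}{\bf \Upsilon})=1$, consistent with (\ref{eqNonstandardWeylalgebra1.1}) if one later wants to read this as a Wigner-function statement.

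There is essentially no hard step here: the entire content of the theorem is the Schur product theorem packaged with the convolution identity. The only genuinely subtle point to flag is that the peculiar choice of $\gamma$ is a normalization choice rather than a constraint — the positivity of the matrix only depends on the combined antisymmetric form $\alpha{\bf \Sigma}+\beta{\bf \Upsilon}$, and splitting it as $\gamma\cdot\tfrac{\alpha{\bf \Sigma}+\beta{\bf \Upsilon}}{\gamma}$ is done precisely so that the new ``$\alpha$'' absorbs the determinant and the new ``$\Sigma$'' has determinant one, matching the standing convention of the paper.
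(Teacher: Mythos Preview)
Your proof is correct and follows essentially the same route as the paper: both invoke the convolution identity $\mathcal{F}(f\star g)=\mathcal{F}(f)\cdot\mathcal{F}(g)$, observe that the resulting matrix entries are the Hadamard--Schur product of the two positive matrices coming from the hypotheses, and then apply Schur's theorem. Your version is slightly more expansive in spelling out the bookkeeping role of $\gamma$, but the argument is identical in substance.
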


\begin{proof}
Since  $f,g \in L^1 (\mathbb{R}^{2n})$, the Fourier transform of the convolution amounts to the pointwise product $\mathcal{F}(f \star g) = \mathcal{F}(f) \cdot \mathcal{F}(g) $. Thus:
\begin{equation}
\begin{array}{c}
\mathcal{F}(f \star g) (a_j -a_k) e^{\frac{i \gamma}{2} \left( \frac{\alpha}{\gamma} a_k \cdot {\bf \Sigma} + \frac{\beta}{\gamma} {\bf \Upsilon}\right) a_j } =\\
\\
= \left(\mathcal{F}(f) (a_j -a_k) e^{\frac{i \alpha}{2} a_k \cdot {\bf \Sigma} a_j} \right) \left(\mathcal{F}(g) (a_j -a_k) e^{\frac{i \beta}{2} a_k \cdot {\bf \Upsilon} a_j} \right).
\end{array}
\label{eqQuantumConditions10}
\end{equation}
The last expression is the $(jk)$-th component of the HS product of two positive matrices and the result follows.
\end{proof}

Henceforth, we shall use the notation $\mathcal{G}_{{\bf A}} $ to represent the phase-space Gaussian with covariance matrix ${\bf A}$:
\begin{equation}
\mathcal{G}_{{\bf A}} (z) = \frac{1}{(2 \pi)^n \sqrt{\det {\bf A}}} \exp \left(- \frac{1}{2} z \cdot {\bf A^{-1}} z \right).
\label{eqGaussian1}
\end{equation}
We are tacitly assuming that the expectation values $<\widehat{Z}_i>$ are all equal to zero, something which can easily be achieved by a phase-space translation.

For later convenience we assemble the following properties of Narcowich-Wigner spectra (see \cite{Brocker,Dias4,Narcowich1} for the standard symplectic case):
\begin{theorem}\label{TheoremPropertiesNWSpectra}
Let $\Omega \in A (2n; \mathbb{R})$, $\mathcal{W}^{{\bf \Omega}} (\mathcal{F}(f))$ denote ${\Omega}$-Narcowich-Wigner spectrum of $\mathcal{F}(f)$ for some function $f \in L^2 (\mathbb{R}^{2n})$ with continuous Fourier transform. Then the following properties hold.

\begin{enumerate}

\item If $\mathcal{F}(f)$ is of $(\alpha,{\bf \Omega})$-positive type for some $\alpha \in \mathbb{R}$ and some ${\bf \Omega}\in A(2n; \mathbb{R})$, then $f$ is a real function.

\item $ \mathcal{W}^{{\bf \Omega}} (\mathcal{F}(f^{\vee}))  =\mathcal{W}^{{\bf \Omega}} (\mathcal{F}(f))$, where $f^{\vee } (z)=f(-z)$.

\item $\alpha \in \mathcal{W}^{{\bf \Omega}} (\mathcal{F}(f))$ if and only if $-\alpha  \in \mathcal{W}^{{\bf \Omega}} (\mathcal{F}(f))$.

\item Let $f_{\lambda}(z)= | \lambda|^{2n} f(\lambda z)$ for $\lambda \in \mathbb{R} \backslash \left\{0 \right\}$. Then $\mathcal{W}^{{\bf \Omega}} (\mathcal{F}(f_{\lambda})) = \frac{1}{\lambda^2} \mathcal{W}^{{\bf \Omega}} (\mathcal{F}(f))$.

\item Let $\det {\bf \Omega}=1$ and $\omega (z,z^{\prime}) = z \cdot    {\bf \Omega^{-1}} z^{\prime}$ is the associated symplectic form. For ${\bf M} \in Sl(2n; \mathbb{R})$ define $f_{{\bf M}} (z)=f({\bf M}z)$. Then if ${\bf M}$ is $\omega$-symplectic or $\omega$-anti symplectic, we have $\mathcal{W}^{{\bf \Omega}} (\mathcal{F}(f_{{\bf M}}))=\mathcal{W}^{{\bf \Omega}} (\mathcal{F}(f))$.

\item Let $\psi\in L^2 (\mathbb{R}^n)$ be some pure state and $\omega(z,z^{\prime})=z^{\prime}\cdot {\bf \Omega^{-1}} z$ an arbitrary symplectic form. If $\psi$ is a Gaussian function, then $    \mathcal{W}^{{\bf \Omega}} (\mathcal{F}(W^{\omega} \psi)) = \left[-1,1 \right]$. Otherwise $    \mathcal{W}^{{\bf \Omega}} (\mathcal{F}(W^{\omega} \psi)) = \left\{-1,1 \right\}$.
\end{enumerate}
\end{theorem}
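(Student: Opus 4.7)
My plan is to verify properties 1--5 by direct manipulation of the Narcowich-Wigner matrix in \eqref{eqQuantumConditions3}---essentially by relabelling or rescaling the sample points $a_j\in\mathbb{R}^{2n}$---and to reduce property 6 to the known standard-symplectic statement via Remark \ref{Remark5.1}.

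For property 1 I will take $N=2$ and use Hermiticity $M_{jk}=\overline{M_{kj}}$ of the positive matrix together with the skew-symmetry of ${\bf\Omega}$ to conclude $\mathcal{F}(f)(-a)=\overline{\mathcal{F}(f)(a)}$, which is exactly the condition for $f$ to be real. Property 2 will follow from $\mathcal{F}(f^{\vee})(a)=\mathcal{F}(f)(-a)$ combined with the substitution $a_j\mapsto -a_j$. For property 3 I plan to conjugate the matrix $M_{jk}$ entrywise (positivity is preserved), apply property 1 to rewrite $\overline{\mathcal{F}(f)(b)}=\mathcal{F}(f)(-b)$, and then perform one more sign flip $a_j\mapsto -a_j$; the resulting matrix is precisely the one associated with $-\alpha$. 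Properties 4 and 5 are handled by the substitutions $b_j=a_j/\lambda$ and $b_j=({\bf M}^T)^{-1}a_j$ respectively, after noting that $\mathcal{F}(f_\lambda)(a)=\mathcal{F}(f)(a/\lambda)$ and $\mathcal{F}(f_{\bf M})(a)=\mathcal{F}(f)(({\bf M}^T)^{-1}a)$ (the latter using $|\det{\bf M}|=1$). In property 5 the $\omega$-symplectic (resp.\ $\omega$-antisymplectic) hypothesis \eqref{eqomegasymplectic1}--\eqref{eqomegasymplectic2} is precisely what makes $a_k\cdot{\bf\Omega}a_j=\pm b_k\cdot{\bf\Omega}b_j$, and property 3 absorbs the sign in the anti-symplectic case.

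The main obstacle is property 6. Using $W^\omega\psi(\xi)=W^\sigma\psi({\bf S}^{-1}\xi)$ for any ${\bf S}\in\mathcal{D}(n;\omega)$, Remark \ref{Remark5.1} reduces the claim to computing $\mathcal{W}^{\bf J}(\mathcal{F}(W^\sigma\psi))$. The containment $\{\pm 1\}\subseteq\mathcal{W}^{\bf J}(\mathcal{F}(W^\sigma\psi))$ follows immediately from the KLM conditions (Theorem \ref{TheoremQuantumConditions4}) and property 3. If $\psi$ is Gaussian, $W^\sigma\psi\ge 0$, so Bochner's theorem (Theorem \ref{TheoremQuantumConditions3}) gives $0\in\mathcal{W}^{\bf J}(\mathcal{F}(W^\sigma\psi))$; I then plan to reduce the $\alpha$-KLM positivity of the Gaussian Fourier transform to the matrix inequality ${\bf\Sigma}+\frac{i\alpha}{2}{\bf J}\ge 0$, which for a saturated covariance ${\bf\Sigma}$ holds precisely for $|\alpha|\le 1$ via a Williamson-type diagonalization. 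For non-Gaussian $\psi$, Hudson's theorem excludes $\alpha=0$, and the exclusion of every other $\alpha\in(-1,1)$ (as well as $|\alpha|>1$) is obtained by the standard Narcowich argument \cite{Brocker,Dias4,Narcowich1}: membership of $\alpha$ in $\mathcal{W}^{\bf J}(\mathcal{F}(W^\sigma\psi))$ together with property 4 identifies $(W^\sigma\psi)_{\sqrt{|\alpha|}}$ with the Wigner function of some density matrix, whose purity constraint combined with Hudson's theorem at the rescaled value of Planck's constant pins $|\alpha|$ to $1$.
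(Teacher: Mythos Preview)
Your arguments for properties 1--5 are correct and essentially coincide with the paper's: the only cosmetic difference is in property~3, where you take the entrywise complex conjugate and invoke property~1, whereas the paper rewrites $\mathcal{F}(f)(a_j-a_k)e^{-\frac{i\alpha}{2}a_k\cdot{\bf\Omega}a_j}$ as $\mathcal{F}(f^{\vee})(a_k-a_j)e^{\frac{i\alpha}{2}a_j\cdot{\bf\Omega}a_k}$ and invokes property~2. Both routes are valid one-line manipulations.

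For property~6 the paper does \emph{not} reprove the $\sigma$-case: it simply cites \cite{Dias4} for the statement $\mathcal{W}^{\bf J}(\mathcal{F}(W^\sigma\psi))=[-1,1]$ or $\{-1,1\}$ and then applies Remark~\ref{Remark5.1}, exactly as you do. Your additional sketch goes beyond what the paper writes. The Gaussian half of your sketch is fine (it is effectively Theorem~\ref{TheoremNWspectrumGaussians} combined with Littlejohn's saturation \eqref{eqGaussian11}), and the purity bound correctly rules out $|\alpha|>1$. However, your one-line justification for excluding $0<|\alpha|<1$ in the non-Gaussian case (``purity constraint combined with Hudson's theorem at the rescaled value of Planck's constant'') does not stand on its own: the rescaled function $(W^\sigma\psi)_{\sqrt{|\alpha|}}$ being a \emph{mixed} $\hbar=1$ Wigner function and a \emph{pure} $\hbar=1/|\alpha|$ Wigner function is not, by itself, contradicted by Hudson's theorem. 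The actual argument in \cite{Dias4} requires more work. Since you are citing \cite{Dias4} anyway (as the paper does), this is not a gap in your proof---but you should be aware that the heuristic you offer is not the full story.
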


\begin{proof}
\begin{enumerate}
\item Suppose that $(\alpha,{\bf \Omega}) \in \mathcal{W}(\mathcal{F}(f))$. Then the matrices with entries $M_{jk}=\mathcal{F}(f)(a_j-a_k) e^{\frac{i \alpha}{2} a_k \cdot {\bf \Omega} a_j}$ are positive in $\mathbb{C}^N$. For $N=2$, we have that
    $$
    \left(\begin{array}{c c}
    \mathcal{F}(f)(0) & \mathcal{F}(f) (a_1-a_2)e^{\frac{i \alpha}{2} a_2 \cdot {\bf \Omega} a_1}\\
    & \\
    \mathcal{F}(f) (a_2-a_1)e^{\frac{i \alpha}{2} a_1 \cdot {\bf \Omega} a_2} & \mathcal{F}(f)(0)
    \end{array} \right)
    $$
is positive. In particular it has to be self-adjoint, and setting $\omega:=a_1-a_2$, we obtain:
$$
\overline{\mathcal{F}(f) (\omega)}=  \mathcal{F}(f) (-\omega)
$$
for all $\omega \in \mathbb{R}^{2n}$, where we used the continuity of $\mathcal{F}(f)$. This is equivalent to $f$ being a real function.

\item A simple calculation shows that $\mathcal{F}(f^{\vee})=\left[\mathcal{F}(f) \right]^{\vee}$. It follows that
    $$
    \mathcal{F}(f^{\vee}) (a_j-a_k)e^{\frac{i \alpha}{2} a_k \cdot {\bf \Omega} a_j}=\mathcal{F}(f) (a_k-a_j)e^{\frac{i \alpha}{2} a_k \cdot {\bf \Omega} a_j}=\mathcal{F}(f) (b_j-b_k)e^{\frac{i \alpha}{2} b_k \cdot {\bf \Omega} b_j}
$$
where $b_j=-a_j$, etc. If follows that $(\alpha , {\bf \Omega}) \in \mathcal{W}^{{\bf \Omega}} (\mathcal{F}(f^{\vee}))$ if and only if $(\alpha , {\bf \Omega}) \in \mathcal{W}^{{\bf \Omega}} (\mathcal{F}(f))$.

\item We have
$$
\begin{array}{c}
\mathcal{F}(f) (a_j-a_k)e^{-\frac{i \alpha}{2} a_k \cdot {\bf \Omega} a_j} =\left[\mathcal{F}(f) \right]^{\vee} (a_k-a_j)e^{\frac{i \alpha}{2} a_j \cdot {\bf \Omega} a_k}=\\
 \\
 =\mathcal{F}(f^{\vee}) (a_k-a_j)e^{\frac{i \alpha}{2} a_j \cdot {\bf \Omega} a_k}
\end{array}
$$
and the result follows from 2.

\item Since $\mathcal{F} f_{\lambda} (z) = \mathcal{F} f \left(\frac{z}{\lambda} \right)$, we have that
$$
\mathcal{F} f_{\lambda} (a_j-a_k)e^{-\frac{i \alpha}{2} a_k \cdot {\bf \Omega} a_j} = \mathcal{F} f \left(\frac{a_j}{\lambda} - \frac{a_k}{\lambda}\right) e^{-\frac{i \alpha}{2} a_k \cdot {\bf \Omega} a_j}=\mathcal{F} f (b_j-b_k)e^{-\frac{i \alpha \lambda^2}{2} b_k \cdot {\bf \Omega} b_j},
$$
with $a_j= \lambda b_j$. We conclude that $\alpha \in \mathcal{W}^{{\bf \Omega}} (\mathcal{F}(f_{\lambda}))$ if and only if $\alpha \lambda^2 \in \mathcal{W}^{{\bf \Omega}} (\mathcal{F}(f))$.

\item We have that ${\bf M}{\bf \Omega} {\bf M^T} = \epsilon {\bf \Omega}$ with $\epsilon=+1$ (resp. $\epsilon=-1$) if ${\bf M}$ is $\omega$-symplectic (resp. $\omega$-anti symplectic).

  On the other hand, we have $\mathcal{F} f_{{\bf M}} (z)= \mathcal{F} f (({\bf M^{-1}})^T z)$. Thus
 $$
 \begin{array}{c}
\mathcal{F} f_{{\bf M}} (a_j-a_k)e^{-\frac{i \alpha}{2} a_k \cdot {\bf \Omega} a_j} = \mathcal{F} f \left(({\bf M^{-1}})^T a_j - ({\bf M^{-1}})^T a_k \right) e^{-\frac{i \alpha}{2} a_k \cdot {\bf \Omega} a_j}=\\
\\
=\mathcal{F} f (b_j-b_k)e^{-\frac{i \alpha }{2} b_k \cdot {\bf M} {\bf \Omega} {\bf M^T }b_j}=\mathcal{F} f (b_j-b_k)e^{-\frac{i \alpha \epsilon }{2} b_k \cdot  {\bf \Omega} b_j},
\end{array}
$$
where $a_j={\bf M^T }b_j$. We conclude that $\alpha \in \mathcal{W}^{{\bf \Omega}} (\mathcal{F}(f_{{\bf M}})$ if and only if $\epsilon \alpha  \in \mathcal{W}^{{\bf \Omega}} (\mathcal{F}(f))$. The rest is a consequence of 3.

\item We proved in \cite{Dias4} that $    \mathcal{W}^{{\bf \sigma}} (\mathcal{F}(W^{\sigma} \psi)) = \left[-1,1 \right]$ if $\psi$ is a Gaussian function, and $    \mathcal{W}^{{\bf \sigma}} (\mathcal{F}(W^{\sigma} \psi)) = \left\{-1,1 \right\}$ otherwise. The rest is a consequence of Remark \ref{Remark5.1}.
\end{enumerate}
\end{proof}

\subsection{Uncertainty principle and symplectic spectra}

Consider again the Gaussian measure (\ref{eqGaussian1}). It is well known that this is the Wigner function associated with some density matrix if and only if \cite{Narcowich1}:
\begin{equation}
{\bf A}+\frac{i}{2} {\bf J} \ge 0,
\label{eqGaussian2}
\end{equation}
that is ${\bf A}+\frac{i}{2} {\bf J} $ is a positive matrix in $\mathbb{C}^{2n}$. This matrix inequality is known in the literature as the Robertson-Schr\"odinger uncertainty principle (RSUP). It is stronger than the more familiar Heisenberg inequality (see below), as it also accounts for the position-momentum correlations. It also has the advantage of being invariant under linear symplectic transformations. Indeed, if $W^{\sigma} \rho (z) $ is the Wigner function of some density matrix $\widehat{\rho}$ and ${\bf P} \in Sp (n; \sigma)$, then $W^{\sigma} \rho ({\bf P} z) $ is the Wigner function $W^{\sigma} \rho_{{\bf P}}$ of some other density matrix $\widehat{\rho}_{{\bf P}}$, related to $\widehat{\rho}$ by a metaplectic transformation \cite{Leray,Shale,Weil}. Accordingly, if ${\bf A_{\rho}}$ and ${\bf A_{\rho_{P}}}$ are the corresponding covariance matrices, then the two are related by the following transformation:
\begin{equation}
{\bf A_{\rho}} = {\bf P} {\bf A_{\rho_{P}}} {\bf P^T},
\label{eqGaussian3.1}
\end{equation}
where we used the fact that $\det {\bf P}=1$ for ${\bf P} \in Sp(n; \sigma)$. Thus:
\begin{equation}
{\bf A_{\rho}} +\frac{i}{2} {\bf J} \ge 0 \Leftrightarrow {\bf A_{\rho_{P}}}   +\frac{i}{2} {\bf P^{-1} } {\bf J} {\bf\left( P^{-1} \right)^T}  \ge 0 \Leftrightarrow {\bf A_{\rho_{P}}}   +\frac{i}{2} {\bf J} \ge 0,
\label{eqGaussian3}
\end{equation}
since ${\bf P^{-1} }  \in Sp(n; \sigma)$.

To establish whether a state satisfies the RSUP (\ref{eqGaussian2}), one computes the eigenvalues of the matrix ${\bf A}+\frac{i}{2} {\bf J} $ and verifies if they are all greater or equal to zero. Alternatively, we may verify the RSUP by resorting to the symplectic eigenvalues of ${\bf A}$ and Williamson's Theorem.

First notice that if ${\bf A} \in \mathbb{P}^{\times}_{2n \times 2n} (\mathbb{R})$, then the matrix ${\bf A} {\bf J^{-1}}$ has the same eigenvalues as ${\bf A^{1/2}} {\bf J^{-1}} {\bf A^{1/2}}$ \cite{Williamson}. So the eigenvalues of ${\bf A} {\bf J^{-1}}$ come in pairs $\pm i \lambda_{\sigma, j}({\bf A})$, where $\lambda_{\sigma, j}({\bf A}) >0$, $j=1, \cdots, n$.

\begin{definition}\label{DefinitionWilliamsontheorem1}
Let $\lambda_{\sigma, j}({\bf A})$, $j=1, \cdots, n$ denote the moduli of the eigenvalues of ${\bf A} {\bf J^{-1}}$ written as an increasing sequence
\begin{equation}
0  < \lambda_{\sigma, 1}({\bf A}) \le \lambda_{\sigma, 2}({\bf A}) \le \cdots \le \lambda_{\sigma, n}({\bf A}).
\label{eqGaussian3}
\end{equation}
They are called the $\sigma$-Williamson invariants or $\sigma$-symplectic eigenvalues of ${\bf A}$, and the $n$-tuple
\begin{equation}
Spec_{\sigma} ({\bf A}):= \left( \lambda_{\sigma, 1}({\bf A}) , \lambda_{\sigma, 2}({\bf A}) , \cdots , \lambda_{\sigma, n}({\bf A}) \right)
\label{eqGaussian4}
\end{equation}
is called the $\sigma$-symplectic spectrum of ${\bf A}$.
\end{definition}

The invariance of the $\sigma$-symplectic spectrum of ${\bf A}$ under the transformation ${\bf A} \mapsto {\bf A^{\prime}} ={\bf P} {\bf A} {\bf P^T}$ with ${\bf P} \in Sp(n; \sigma)$ is easily established. Indeed suppose that $\lambda \in Spec_{\sigma} ({\bf A})$. Then
\begin{equation}
\begin{array}{c}
0 = \det ({\bf A}{\bf J^{-1}} \pm i \lambda {\bf I}) = \det ({\bf P^{-1}}{\bf A^{\prime}}{\bf P^{-1 T}}{\bf J^{-1}} \pm i \lambda {\bf I}) =\\
 \\
 = \det ({\bf P^{-1}}{\bf A^{\prime}}{\bf J^{-1}}{\bf P} \pm i \lambda {\bf I}) = \det ({\bf A^{\prime}}{\bf J^{-1}} \pm i \lambda {\bf I}),
\end{array}
\label{eqGaussian5}
\end{equation}
where we used $ {\bf J} {\bf P^T} = {\bf P^{-1}} {\bf J}$. Thus $Spec_{\sigma} ({\bf A^{\prime}})=Spec_{\sigma} ({\bf A})$.

The following theorem is due to Williamson \cite{Williamson}:

\begin{theorem}\label{Williamsontheorem2}
{\bf (Williamson)} Let $A \in \mathbb{P}^{\times}_{2n \times 2n} (\mathbb{R})$. There exists a symplectic matrix ${\bf P} \in Sp(n; \sigma)$ such that
\begin{equation}
{\bf P}  {\bf A} {\bf P^T} = diag \left(\lambda_{\sigma, 1}({\bf A}) ,  \cdots , \lambda_{\sigma, n}({\bf A}), \lambda_{\sigma, 1}({\bf A}) ,  \cdots , \lambda_{\sigma, n}({\bf A}) \right) ,
\label{eqGaussian6}
\end{equation}
where $\lambda_{\sigma, j}({\bf A})$, $j=1, \cdots, n$ are the $\sigma$-Williamson invariants of ${\bf A}$.
\end{theorem}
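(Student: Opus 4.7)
The plan is to construct the required symplectic matrix explicitly via an auxiliary antisymmetric matrix built from $\mathbf{A}^{1/2}$, and then exploit the canonical form of real antisymmetric matrices under orthogonal similarity. First I would form
\begin{equation}
\mathbf{K} := \mathbf{A}^{1/2} \mathbf{J}^{-1} \mathbf{A}^{1/2},
\end{equation}
which is real antisymmetric since $(\mathbf{J}^{-1})^T = -\mathbf{J}^{-1}$ and $\mathbf{A}^{1/2}$ is symmetric. Since $\mathbf{A}^{1/2}$ is invertible, $\mathbf{K}$ is similar to $\mathbf{A}\mathbf{J}^{-1}$ (via conjugation by $\mathbf{A}^{1/2}$), so $\mathbf{K}$ has the same eigenvalues $\pm i\lambda_{\sigma,j}(\mathbf{A})$ as $\mathbf{A}\mathbf{J}^{-1}$.

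Next I would invoke the classical normal form: every real antisymmetric matrix is orthogonally similar to a block-diagonal matrix with $2\times 2$ blocks $\begin{pmatrix} 0 & \mu_j \\ -\mu_j & 0\end{pmatrix}$, and after a permutation of coordinates (which is itself orthogonal) these can be grouped into
\begin{equation}
\mathbf{J}_{\Lambda} = \begin{pmatrix} \mathbf{0} & \mathbf{\Lambda} \\ -\mathbf{\Lambda} & \mathbf{0} \end{pmatrix}, \qquad \mathbf{\Lambda} = \mathrm{diag}(\lambda_{\sigma,1}(\mathbf{A}),\ldots,\lambda_{\sigma,n}(\mathbf{A})).
\end{equation}
So I obtain an orthogonal $\mathbf{O}$ with $\mathbf{O}^T \mathbf{K} \mathbf{O} = \mathbf{J}_{\Lambda}$, which inverting gives $\mathbf{O}^T \mathbf{A}^{-1/2} \mathbf{J} \mathbf{A}^{-1/2} \mathbf{O} = \mathbf{J}_{\Lambda}^{-1}$.

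With $\mathbf{D} := \mathrm{diag}(\mathbf{\Lambda},\mathbf{\Lambda})$, I would then define the candidate
\begin{equation}
\mathbf{P} := \mathbf{D}^{1/2} \mathbf{O}^T \mathbf{A}^{-1/2}.
\end{equation}
The diagonalization identity is immediate: $\mathbf{P}\mathbf{A}\mathbf{P}^T = \mathbf{D}^{1/2}\mathbf{O}^T \mathbf{O} \mathbf{D}^{1/2} = \mathbf{D}$, which matches (\ref{eqGaussian6}). The symplecticity requires a short block computation: $\mathbf{P}\mathbf{J}\mathbf{P}^T = \mathbf{D}^{1/2}\mathbf{J}_{\Lambda}^{-1}\mathbf{D}^{1/2}$, and using the explicit form $\mathbf{J}_{\Lambda}^{-1} = \begin{pmatrix} \mathbf{0} & -\mathbf{\Lambda}^{-1} \\ \mathbf{\Lambda}^{-1} & \mathbf{0}\end{pmatrix}$ together with the fact that $\mathbf{\Lambda}^{1/2}$ and $\mathbf{\Lambda}^{-1}$ commute (both diagonal), the $\mathbf{\Lambda}$-factors cancel cleanly to leave $\mathbf{J}$. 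Hence $\mathbf{P}\in Sp(n;\sigma)$.

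The only subtle step is the canonical-form argument: one must ensure that the orthogonal conjugation produces the \emph{grouped} block structure $\mathbf{J}_{\Lambda}$ (with two $n\times n$ blocks) rather than the coordinate-interleaved one, and that the positive moduli $\lambda_{\sigma,j}$ can be arranged in the prescribed increasing order. Both are achieved by composing the orthogonal diagonalizer with an appropriate permutation matrix, which remains orthogonal. The identification of the entries of $\mathbf{\Lambda}$ with the $\sigma$-symplectic eigenvalues of $\mathbf{A}$ is then forced by the similarity of $\mathbf{K}$ and $\mathbf{J}_{\Lambda}$, since the eigenvalues $\pm i\lambda_{\sigma,j}(\mathbf{A})$ of $\mathbf{A}\mathbf{J}^{-1}$ coincide with those of $\mathbf{J}_{\Lambda}$.
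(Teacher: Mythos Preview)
The paper does not actually prove this theorem: it is stated as the classical result of Williamson and simply cited. Your argument is the standard modern proof (the one given, for instance, in de Gosson's \emph{Symplectic Geometry and Quantum Mechanics}), and its structure is correct.

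There is, however, a sign slip in the final verification. With the paper's convention $\mathbf{J}=\left(\begin{smallmatrix}\mathbf{0}&\mathbf{I}\\-\mathbf{I}&\mathbf{0}\end{smallmatrix}\right)$ and your $\mathbf{J}_{\Lambda}=\left(\begin{smallmatrix}\mathbf{0}&\mathbf{\Lambda}\\-\mathbf{\Lambda}&\mathbf{0}\end{smallmatrix}\right)$, one has $\mathbf{J}_{\Lambda}^{-1}=\left(\begin{smallmatrix}\mathbf{0}&-\mathbf{\Lambda}^{-1}\\\mathbf{\Lambda}^{-1}&\mathbf{0}\end{smallmatrix}\right)$ and hence
\[
\mathbf{D}^{1/2}\mathbf{J}_{\Lambda}^{-1}\mathbf{D}^{1/2}
=\left(\begin{smallmatrix}\mathbf{0}&-\mathbf{I}\\\mathbf{I}&\mathbf{0}\end{smallmatrix}\right)=-\mathbf{J},
\]
so the matrix $\mathbf{P}$ you wrote down is $\sigma$-\emph{anti}-symplectic rather than symplectic. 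The repair is painless: replace $\mathbf{O}$ by $\mathbf{O}\mathbf{R}$ with $\mathbf{R}=\mathrm{diag}(\mathbf{I},-\mathbf{I})$, which is orthogonal, flips $\mathbf{J}_{\Lambda}$ to $-\mathbf{J}_{\Lambda}$, and leaves the diagonalization $\mathbf{P}\mathbf{A}\mathbf{P}^{T}=\mathbf{D}$ intact. After this adjustment $\mathbf{P}\mathbf{J}\mathbf{P}^{T}=\mathbf{J}$ as required.
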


We are now in a position to reexpress the RSUP in terms of the $\sigma$-symplectic spectrum. Let ${\bf A}$ be the covariance matrix of some Wigner function $W^{\sigma} \rho (z)$ and let ${\bf D}$ denote the diagonal matrix appearing in (\ref{eqGaussian6}). We thus have
\begin{equation}
{\bf A} + \frac{i}{2} {\bf J} \ge 0 \Leftrightarrow {\bf D} + \frac{i}{2} {\bf P} {\bf J} {\bf P^T} \ge 0 \Leftrightarrow {\bf D} + \frac{i}{2} {\bf J} \ge 0 .
\label{eqGaussian7}
\end{equation}
The eigenvalues of the last matrix are easily computed. They are equal to $\lambda_{\sigma, j} ({\bf A}) \pm \frac{1}{2}$, $j=1, \cdots, n$. Thus the last inequality in (\ref{eqGaussian7}) holds if and only if \cite{Narcowich1}
\begin{equation}
\lambda_{\sigma, 1} ({\bf A}) \ge \frac{1}{2}.
\label{eqGaussian8}
\end{equation}
Let now $W^{\sigma} \rho (z)$ be some Wigner function with covariance matrix ${\bf A}$. Let ${\bf P} \in Sp(n; \sigma)$ be the symplectic matrix diagonalizing ${\bf A}$ as in (\ref{eqGaussian6}). As we argued before,  $W^{\sigma} \rho ({\bf P^{-1}}z)$ corresponds to another Wigner function $W^{\sigma} \rho_{{\bf P}} (z)$ with covariance matrix ${\bf P} {\bf A} {\bf P^T} = {\bf D}$ (cf.(\ref{eqGaussian3.1},\ref{eqGaussian6})). This means that since the covariance matrix of $W^{\sigma} \rho_{{\bf P}} (z)$ is diagonal, we have
\begin{equation}
<\widehat{q}_j^2>_{\rho_{{\bf P}}} = <\widehat{p}_j^2>_{\rho_{{\bf P}}} = \lambda_{\sigma, j} ({\bf A}), \hspace{1 cm} j=1, \cdots, n
\label{eqGaussian9}
\end{equation}
Thus
\begin{equation}
<\widehat{q}_j^2>_{\rho_{{\bf P}}} <\widehat{p}_j^2>_{\rho_{{\bf P}}} = \left(\lambda_{\sigma, j} ({\bf A}) \right)^2 \ge \frac{1}{4} , \hspace{1 cm} j=1, \cdots, n
\label{eqGaussian10}
\end{equation}
which is the Heisenberg uncertainty principle. Consequently, the lowest value of $\frac{1}{2}$ for each $\sigma$-symplectic eigenvalue (cf.(\ref{eqGaussian8})), corresponds to a minimal uncertainty in some direction in the phase-space. Thus the extremal situation, where
\begin{equation}
\lambda_{\sigma, 1} ({\bf A}) =\lambda_{\sigma, 2} ({\bf A})  = \cdots = \lambda_{\sigma, n} ({\bf A})=\frac{1}{2},
\label{eqGaussian11}
\end{equation}
(which is equivalent to $2 {\bf A} \in Sp(n; \sigma)$) can only be achieved by a Gaussian pure state. This is known in the literature as Littlejohn's Theorem \cite{Bastiaans,Littlejohn}.

For future reference we state and prove the following proposition.

\begin{proposition}\label{PropostionEigenvector}
Let $e, f \in \mathbb{R}^{2n}$ be such that $\sigma (f,e) \ne 0$.
Then there exists a matrix ${\bf A} \in \mathbb{P}^{\times}_{2n
\times 2n} ( \mathbb{R})$ such that $u=e + if$ is an eigenvector
of $i {\bf A} {\bf J^{-1}}$ with eigenvalue equal to
$\lambda_{\sigma,1} ({\bf A})$ (if $\sigma (f,e)>0$) or $-
\lambda_{\sigma,1} ({\bf A})$ (if $\sigma (f,e)<0$), where
$\lambda_{\sigma,1} ({\bf A})$ is the smallest $\sigma$-Williamson
invariant of ${\bf A}$.
\end{proposition}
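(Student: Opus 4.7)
The strategy is to construct ${\bf A}$ by prescribing its Williamson normal form in a symplectic basis tailored to $e$ and $f$: choose a $\sigma$-symplectic basis of $\mathbb{R}^{2n}$ whose first and $(n+1)$-th vectors are proportional to $e$ and $\pm f$, and then declare ${\bf A}$ to be diagonal in that basis, with a single strictly smallest symplectic eigenvalue $a$ placed precisely on the directions spanned by $e$ and $f$.

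Concretely, set $\epsilon:=\mathrm{sign}(\sigma(f,e))\in\{-1,+1\}$, $v_1:=e/\sqrt{|\sigma(f,e)|}$, and $v_{n+1}:=\epsilon f/\sqrt{|\sigma(f,e)|}$. A direct computation gives $\sigma(v_1,v_{n+1})=-1=({\bf J^{-1}})_{1,n+1}$, so that $\{v_1,v_{n+1}\}$ is a symplectic pair in the convention $\sigma(v_i,v_j)=v_i\cdot{\bf J^{-1}}v_j$. By the standard symplectic Gram-Schmidt procedure, this extends to a $\sigma$-symplectic basis $\{v_1,\ldots,v_{2n}\}$, and the matrix ${\bf P}$ with columns $v_1,\ldots,v_{2n}$ satisfies ${\bf P^T J^{-1} P}={\bf J^{-1}}$, i.e.\ ${\bf P}\in Sp(n;\sigma)$. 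Pick $0<a<b$, set ${\bf D}:=\mathrm{diag}(a,b,\ldots,b,a,b,\ldots,b)$ and define ${\bf A}:={\bf P D P^T}\in\mathbb{P}_{2n\times 2n}^{\times}(\mathbb{R})$. The invariance calculation (\ref{eqGaussian5}) gives $Spec_{\sigma}({\bf A})=Spec_{\sigma}({\bf D})=(a,b,\ldots,b)$, so $\lambda_{\sigma,1}({\bf A})=a$.

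To locate the eigenvector, use ${\bf P^T J^{-1}}={\bf J^{-1} P^{-1}}$ (equivalent to ${\bf P}\in Sp(n;\sigma)$) to rewrite ${\bf A J^{-1}}={\bf P}\,({\bf D J^{-1}})\,{\bf P^{-1}}$. In block form with $D_1:=\mathrm{diag}(a,b,\ldots,b)$ one has ${\bf D J^{-1}}=\left(\begin{array}{cc}{\bf 0} & -D_1\\ D_1 & {\bf 0}\end{array}\right)$, and a direct two-line calculation shows that $\epsilon_1-i\epsilon_{n+1}$ is an eigenvector with eigenvalue $ia$. Conjugating by ${\bf P}$, the vector $v_1-iv_{n+1}$ is an eigenvector of ${\bf A J^{-1}}$ with eigenvalue $ia$, and hence of $i{\bf A J^{-1}}$ with eigenvalue $-a$. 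Taking the complex conjugate, $v_1+iv_{n+1}$ is an eigenvector of $i{\bf A J^{-1}}$ with eigenvalue $+a$. Substituting the definitions,
$$
v_1+iv_{n+1}\;=\;\tfrac{1}{\sqrt{|\sigma(f,e)|}}\,(e+i\epsilon f).
$$
For $\sigma(f,e)>0$ ($\epsilon=+1$) this is a positive multiple of $e+if$, so $e+if$ is an eigenvector with eigenvalue $+\lambda_{\sigma,1}({\bf A})$; for $\sigma(f,e)<0$ ($\epsilon=-1$) it is a positive multiple of $e-if$, and its complex conjugate $e+if$ is then an eigenvector with eigenvalue $-\lambda_{\sigma,1}({\bf A})$.

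The only delicate point, and thus the main obstacle, is careful sign bookkeeping: one must choose $\epsilon$ so that $\{v_1,v_{n+1}\}$ forms a symplectic, not antisymplectic, pair in the convention $\sigma(v_i,v_j)=({\bf J^{-1}})_{ij}$, and then track how this sign propagates through the complex conjugation to the eigenvalue of $i{\bf A J^{-1}}$ on $e+if$. Once the conventions are pinned down, every remaining step of the argument is a short linear-algebra verification.
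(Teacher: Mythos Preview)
Your proof is correct and follows essentially the same approach as the paper: both extend the (normalized) pair $e,f$ to a symplectic basis and define ${\bf A}$ to have a prescribed Williamson diagonal form in that basis, then read off the eigenvector from the block structure of ${\bf D}{\bf J^{-1}}$. The only cosmetic differences are that the paper defines ${\bf A}$ implicitly via the Gram condition $v_\alpha^T{\bf A^{-1}}v_\beta=\delta_{\alpha\beta}$ (which unravels to exactly your ${\bf A}={\bf PDP^T}$), fixes $\lambda_{\sigma,1}({\bf A})=\sigma(f,e)$ rather than leaving $a$ free, and treats the two sign cases separately instead of via your $\epsilon$.
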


\begin{proof}
Suppose that $\lambda_1:= \sigma (f,e) >0$. Set $f^{\prime} =
\frac{f}{\sqrt{\lambda_1}}$ and $e^{\prime} =
\frac{e}{\sqrt{\lambda_1}}$. It follows that $\sigma (f^{\prime},
e^{\prime}) =1$. A well known theorem in symplectic geometry
\cite{Cannas,Gosson1} states that we can find a symplectic basis
$\left\{e_i^{\prime},f_j^{\prime} \right\}_{1\le i,j \le n}$ such
that $e_1^{\prime}= e^{\prime}$ and $f_1^{\prime}=f^{\prime}$.
Next choose an arbitrary set of positive numbers
$\lambda_2,\lambda_3, \cdots , \lambda_n$ such that $\lambda_1 \le
\lambda_2 \le  \cdots \le \lambda_n$, and define
\begin{equation}
e_i = \sqrt{\lambda_i} e_i^{\prime}, \hspace{1 cm} f_j = \sqrt{\lambda_j} f_j^{\prime}, \hspace{1 cm} 1 \le i,j \le n.
\label{eqEigenvector1}
\end{equation}
Now define $v_i=e_i$, and $v_{i+n}=f_i$ (and similarly $v_i^{\prime}=e_i^{\prime}$, and $v_{i+n}^{\prime}=f_i^{\prime}$), and $\lambda_{i+n}= \lambda_i$ for $i=1, \cdots, n$. We can now rewrite (\ref{eqEigenvector1}) in a more compact manner:
\begin{equation}
v_{\alpha} = \sqrt{\lambda_{\alpha}} v_{\alpha}^{\prime}, \hspace{1 cm} 1 \le \alpha \le 2n .
\label{eqEigenvector2}
\end{equation}
Since $\left\{v_{\alpha}^{\prime} \right\}_{1 \le \alpha \le 2n} $ is a symplectic basis, we have:
\begin{equation}
\sigma (v_{\alpha}, v_{\beta})=  \sqrt{\lambda_{\alpha} \lambda_{\beta}} \sigma (v_{\alpha}^{\prime}, v_{\beta}^{\prime}) =  \sqrt{\lambda_{\alpha} \lambda_{\beta}} J_{\beta \alpha} = \lambda_{\alpha} J_{\beta \alpha},
\label{eqEigenvector3}
\end{equation}
where we used the fact that $\lambda_{i+n}= \lambda_i$, for $i=1,2, \cdots, n$. In general the basis $\left\{v_{\alpha} \right\}_{1 \le \alpha \le 2n} $ is not orthonormal. However, there exists a matrix ${\bf A} \in \mathbb{P}^{\times}_{2n \times 2n} ( \mathbb{R})$ such that
\begin{equation}
v_{\alpha}^T {\bf A^{-1}} v_{\beta} = \delta_{\alpha \beta}, \hspace{1 cm} 1 \le \alpha, \beta \le 2n.
\label{eqEigenvector4}
\end{equation}
The matrix ${\bf A}$ defines an inner product in $\mathbb{C}^{2n}$:
\begin{equation}
<z, z^{\prime}>_{{\bf A}} := \overline{z} \cdot {\bf A^{-1}} z^{\prime}, \hspace{1 cm} z, z^{\prime} \in \mathbb{C}^{2n}.
\label{eqEigenvector5}
\end{equation}

We conclude the proof by showing that $(\lambda_1, \lambda_2, \cdots, \lambda_n)$ is the symplectic spectrum of ${\bf A}$ and that ${\bf A} {\bf J^{-1}}(e +i f) = -i \lambda_1 (e +i f)$.

Let $T : \mathbb{R}^{2n}\to \mathbb{R}^{2n}$ be the linear transformation $T(v)={\bf AJ^{-1}} v$, with matrix representation ${\bf AJ^{-1}}$ in the canonical basis. Then
\begin{equation}
<v_{\alpha}, T\left( v_{\beta} \right)>_{{\bf A}} = \sigma (v_{\alpha}, v_{\beta})=  \lambda_{\alpha} J_{\beta \alpha}.
\label{eqEigenvector6}
\end{equation}
So the matrix representation ${\bf T}$ of $T$ with respect to the basis $\left\{v_{\alpha} \right\}_{1 \le \alpha \le 2n}$ is
\begin{equation}
{\bf T}= \left(
\begin{array}{c c}
{\bf 0} & - {\bf \Lambda}\\
 {\bf \Lambda} & {\bf 0}
\end{array}
\right),
\label{eqEigenvector7}
\end{equation}
where ${\bf \Lambda} = diag (\lambda_1, \cdots, \lambda_n)$. The eigenvalues of ${\bf T}$ (and hence of $T$) are easily computed. They are: $\left\{\pm i \lambda_j\right\}_{j=1, \cdots,n}$ and the associated eigenvectors of ${\bf AJ^{-1}}$ are:
\begin{equation}
{\bf AJ^{-1}} (e_j \pm i f_j) = \mp i \lambda_j (e_j \pm i f_j), \hspace{1 cm} j=1, \cdots, n.
\label{eqEigenvector8}
\end{equation}
In particular ${\bf AJ^{-1}} (e+if) = - i \lambda_1 (e+if)$.

If $\sigma (f,e)<0$, then by following an identical procedure, we would obtain: ${\bf AJ^{-1}} (e+if) =  i \lambda_1 (e+if)$.
\end{proof}

Now we turn to the non-standard symplectic space $(\mathbb{R}^{2n}; \omega)$. Suppose that $W^{\omega} \rho (z)$ is some $\omega$-Wigner function and let ${\bf S} \in \mathcal{D} (n; \omega)$ denote an arbitrary Darboux matrix. According to (\ref{eqNonstandardWeylalgebra13})
\begin{equation}
W^{\sigma} \rho (z) :=  W^{\omega} \rho ({\bf S} z)
\label{eqGaussian12}
\end{equation}
is a $\sigma$-Wigner function. Hence it has to comply with RSUP. Let ${\bf A}^{\sigma}$ and ${\bf A}^{\omega}$ denote the covariance matrices of $W^{\sigma} \rho $ and $W^{\omega} \rho $, respectively. We thus have
\begin{equation}
{\bf A}^{\omega} = {\bf S} {\bf A}^{\sigma} {\bf S^T}.
\label{eqGaussian13}
\end{equation}
Consequently,
\begin{equation}
{\bf A}^{\sigma} + \frac{i}{2} {\bf J} \ge 0 \Leftrightarrow {\bf A}^{\omega} + \frac{i}{2} {\bf S} {\bf J} {\bf S^T }\ge 0 \Leftrightarrow  {\bf A}^{\omega} + \frac{i}{2} {\bf \Omega} \ge 0.
\label{eqGaussian14}
\end{equation}
The last inequality is valid irrespective of whatever Darboux matrix ${\bf S}$ we use. We shall call this inequality the $\omega$-RSUP.

Next let ${\bf A} \in \mathbb{P}^{\times}_{2n \times 2n} (\mathbb{R})$, and consider the eigenvalues of the matrix ${\bf A} {\bf \Omega^{-1}}$. As in the standard case ${\bf A J^{-1}}$, these come in pairs $\pm i \lambda_{\omega, j} ({\bf A})$, $\lambda_{\omega, j} ({\bf A}) >0$, $j=1, \cdots, n$.

\begin{definition}\label{DefinitionOmegaSymplecticSpectrum1}
The moduli of the eigenvalues of ${\bf A} {\bf \Omega^{-1}}$ written as an increasing sequence $0 < \lambda_{\omega, 1} ({\bf A}) \le \lambda_{\omega, 2} ({\bf A}) \le \cdots \le \lambda_{\omega, n} ({\bf A})$ will be called the $\omega$-Williamson invariants or the $\omega$-symplectic eigenvalues of ${\bf A}$. Moreover, the $n$-tuple
\begin{equation}
Spec_{\omega} ({\bf A}) := \left( \lambda_{\omega, 1} ({\bf A}) , \lambda_{\omega, 2} ({\bf A}) , \cdots , \lambda_{\omega, n} ({\bf A}) \right)
\label{eqGaussian15}
\end{equation}
is called the $\omega$-symplectic spectrum of ${\bf A}$.
\end{definition}
It is easy to show that the $\omega$-symplectic spectrum of ${\bf A}$ is invariant under the $\omega$-symplectic transformation ${\bf A} \mapsto {\bf P} {\bf A} {\bf P^T}$, with ${\bf P} \in Sp (n; \omega)$. The proof follows the same steps as in (\ref{eqGaussian5}).

\begin{lemma}\label{LemmaRescalings1}
Let ${\bf A} \in \mathbb{P}^{\times}_{2n \times 2n} (\mathbb{R})$ and $\mu >0$. The matrix ${\bf B} = \mu {\bf A}$ has $\omega$-symplectic spectrum $Spec_{\omega} ({\bf B}) = \mu Spec_{\omega} ({\bf A})$.
\end{lemma}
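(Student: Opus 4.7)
The plan is to unwind the definition of the $\omega$-symplectic spectrum and observe that scaling ${\bf A}$ by $\mu>0$ scales the matrix ${\bf A}{\bf \Omega^{-1}}$ by the same factor, which in turn scales its eigenvalues (and hence their moduli) by $\mu$.

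More concretely, first I would note that ${\bf B}{\bf \Omega^{-1}} = \mu {\bf A}{\bf \Omega^{-1}}$, so if $v$ is an eigenvector of ${\bf A}{\bf \Omega^{-1}}$ with eigenvalue $\pm i\lambda_{\omega,j}({\bf A})$, then $v$ is also an eigenvector of ${\bf B}{\bf \Omega^{-1}}$ with eigenvalue $\pm i\mu \lambda_{\omega,j}({\bf A})$. By Definition \ref{DefinitionOmegaSymplecticSpectrum1}, the $\omega$-Williamson invariants of ${\bf B}$ are the moduli of these eigenvalues arranged in increasing order. Since $\mu>0$, we have $|\pm i \mu \lambda_{\omega,j}({\bf A})| = \mu \lambda_{\omega,j}({\bf A})$, and multiplication by a positive constant preserves the ordering $\lambda_{\omega,1}({\bf A}) \le \cdots \le \lambda_{\omega,n}({\bf A})$. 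Therefore $\lambda_{\omega,j}({\bf B}) = \mu \lambda_{\omega,j}({\bf A})$ for each $j=1,\dots,n$, which gives the claimed identity $Spec_{\omega}({\bf B}) = \mu \, Spec_{\omega}({\bf A})$.

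There is no genuine obstacle here; the only point requiring a brief mention is that the positivity $\mu>0$ is essential both to preserve the sign under the modulus and to preserve the increasing ordering of the invariants (a negative $\mu$ would reverse the ordering, though the set of moduli would still scale by $|\mu|$). I would present the argument in just a few lines.
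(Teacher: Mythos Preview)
Your proposal is correct and follows essentially the same approach as the paper: both arguments reduce to the elementary fact that the eigenvalues of ${\bf B}{\bf \Omega^{-1}} = \mu\,{\bf A}{\bf \Omega^{-1}}$ are $\mu$ times those of ${\bf A}{\bf \Omega^{-1}}$. The paper phrases this via the characteristic-polynomial identity $\det({\bf B}{\bf \Omega^{-1}} - \lambda {\bf I}) = \mu^{2n}\det({\bf A}{\bf \Omega^{-1}} - (\lambda/\mu){\bf I})$, whereas you argue directly with eigenvectors, but the content is the same.
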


\begin{proof}
This is a simple consequence of the following identity:
\begin{equation}
\det ({\bf B} {\bf \Omega}^{-1} - \lambda {\bf I} ) = \mu^{2n} \det \left({\bf A} {\bf \Omega}^{-1} - \frac{\lambda}{\mu} {\bf I} \right),
\label{eqLemmaRescalings1}
\end{equation}
and thus $\lambda \in Spec_{\omega} ({\bf B})$ if and only if $\frac{\lambda}{\mu} \in Spec_{\omega} ({\bf A})$.
\end{proof}

Next, we state and prove the counterpart of Williamson's Theorem on $(\mathbb{R}^{2n}; \omega)$.

\begin{theorem}\label{TheoremOmegaWilliamsonTheorem1}
Let $A \in \mathbb{P}^{\times}_{2n \times 2n} (\mathbb{R})$. Then there exists
a Darboux matrix ${\bf S} \in \mathcal{D}(n; \omega)$ such that
\begin{equation}
{\bf S^{-1}} {\bf A}{\bf  \left(S^{-1}\right)^T} = diag \left(\lambda_{\omega,1} ({\bf A}) , \cdots, \lambda_{\omega,n} ({\bf A})
, \lambda_{\omega,1} ({\bf A}) , \cdots, \lambda_{\omega,n} ({\bf A})\right),
\label{eq12.4}
\end{equation}
where $\left\{\lambda_{\omega,j} ({\bf A})\right\}_{1 \le j \le n}$ are the $\omega$-Williamson invariants of ${\bf A}$.
\end{theorem}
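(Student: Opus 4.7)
The plan is to reduce the $\omega$-Williamson problem to the standard $\sigma$-Williamson theorem by conjugating with an arbitrary Darboux matrix, then adjust by a $\sigma$-symplectic matrix to absorb the standard diagonalization back into a Darboux matrix.

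First I would fix any Darboux matrix $\mathbf{S}_0 \in \mathcal{D}(n;\omega)$ (the set is nonempty by the symplectomorphism theorem recalled around equation (\ref{eqsymplecticform9})) and set $\mathbf{B} := \mathbf{S}_0^{-1} \mathbf{A} (\mathbf{S}_0^{-1})^T$. Since $\mathbf{A}$ is real symmetric positive-definite, so is $\mathbf{B}$, so the standard Williamson theorem (Theorem \ref{Williamsontheorem2}) supplies a $\mathbf{P} \in Sp(n;\sigma)$ with $\mathbf{P}\mathbf{B}\mathbf{P}^T = \mathbf{D}$, where $\mathbf{D} = \mathrm{diag}(\lambda_{\sigma,1}(\mathbf{B}), \ldots, \lambda_{\sigma,n}(\mathbf{B}), \lambda_{\sigma,1}(\mathbf{B}), \ldots, \lambda_{\sigma,n}(\mathbf{B}))$.

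Next I would define $\mathbf{S} := \mathbf{S}_0 \mathbf{P}^{-1}$ and verify that $\mathbf{S} \in \mathcal{D}(n;\omega)$: since $\mathbf{P}^{-1} \in Sp(n;\sigma)$, we have $\mathbf{P}^{-1} \mathbf{J} (\mathbf{P}^{-1})^T = \mathbf{J}$, hence
\begin{equation}
\mathbf{S}\mathbf{J}\mathbf{S}^T = \mathbf{S}_0 \mathbf{P}^{-1} \mathbf{J} (\mathbf{P}^{-1})^T \mathbf{S}_0^T = \mathbf{S}_0 \mathbf{J} \mathbf{S}_0^T = \mathbf{\Omega}.
\end{equation}
A direct computation then gives $\mathbf{S}^{-1} \mathbf{A} (\mathbf{S}^{-1})^T = \mathbf{P} \mathbf{B} \mathbf{P}^T = \mathbf{D}$, which is exactly the diagonal form (\ref{eq12.4}) provided we can identify the diagonal entries as the $\omega$-Williamson invariants of $\mathbf{A}$.

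The step that does the actual symplectic-spectrum bookkeeping is showing $Spec_\sigma(\mathbf{B}) = Spec_\omega(\mathbf{A})$. Using $\mathbf{\Omega}^{-1} = (\mathbf{S}_0^{-1})^T \mathbf{J}^{-1} \mathbf{S}_0^{-1}$, I would compute
\begin{equation}
\mathbf{B}\mathbf{J}^{-1} = \mathbf{S}_0^{-1} \mathbf{A} (\mathbf{S}_0^{-1})^T \mathbf{J}^{-1} = \mathbf{S}_0^{-1}\bigl(\mathbf{A}\mathbf{\Omega}^{-1}\bigr)\mathbf{S}_0,
\end{equation}
so $\mathbf{B}\mathbf{J}^{-1}$ and $\mathbf{A}\mathbf{\Omega}^{-1}$ are similar; in particular they have the same eigenvalues and therefore the same moduli of eigenvalues. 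By Definitions \ref{DefinitionWilliamsontheorem1} and \ref{DefinitionOmegaSymplecticSpectrum1}, this yields $\lambda_{\sigma,j}(\mathbf{B}) = \lambda_{\omega,j}(\mathbf{A})$ for $j=1,\ldots,n$, which finishes the proof.

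The only potential obstacle is purely bookkeeping: being careful about transposes versus inverses in the Darboux relation $\mathbf{\Omega} = \mathbf{S}_0 \mathbf{J} \mathbf{S}_0^T$ when rewriting $\mathbf{B}\mathbf{J}^{-1}$ as a conjugate of $\mathbf{A}\mathbf{\Omega}^{-1}$. Once that similarity is established, everything else follows mechanically from the standard Williamson theorem and the group property $Sp(n;\sigma) \ni \mathbf{P} \mapsto \mathbf{S}_0\mathbf{P}^{-1} \in \mathcal{D}(n;\omega)$.
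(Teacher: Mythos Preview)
Your proof is correct and follows essentially the same approach as the paper: conjugate $\mathbf{A}$ by an arbitrary Darboux matrix, apply the standard Williamson theorem, and absorb the resulting $\sigma$-symplectic matrix back into the Darboux matrix. The only cosmetic differences are that the paper cites Lemma~\ref{LemmaDarbouxmap1} to conclude $\mathbf{S}\in\mathcal{D}(n;\omega)$ (whereas you verify the Darboux relation directly), and the paper identifies $Spec_\sigma(\mathbf{B})=Spec_\omega(\mathbf{A})$ via a chain of determinant equalities rather than your (equivalent) observation that $\mathbf{B}\mathbf{J}^{-1}$ and $\mathbf{A}\mathbf{\Omega}^{-1}$ are similar.
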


\begin{proof} Let ${\bf S^{\prime}}  \in \mathcal{D} (n; \omega)$ be a Darboux matrix. Then ${\bf S^{\prime -1}} {\bf A} ({\bf  S^{\prime -1}})^T \in \mathbb{P}^{\times}_{2n \times 2n} (\mathbb{R})$. By the $\sigma$-Williamson Theorem, there exists ${\bf M} \in Sp(n; \sigma)$ such that
\begin{equation}
{\bf M^T} {\bf S^{\prime -1}} {\bf A} ({\bf S^{\prime -1}})^T  {\bf M} = {\bf D},
\label{eq12.5}
\end{equation}
where
\begin{equation}
{\bf D}=diag \left(\lambda_{\sigma,1} ({\bf B}) , \cdots, \lambda_{\sigma,n} ({\bf B})
,  \lambda_{\sigma,1} ({\bf B}) , \cdots, \lambda_{\sigma,n} ({\bf B})\right)
\label{eq12.6}
\end{equation}
and $\left(\lambda_{\sigma,1} ({\bf B}) , \cdots, \lambda_{\sigma,n} ({\bf B}) \right)$ is the $\sigma$-symplectic spectrum of ${\bf B} = {\bf S^{\prime -1}} {\bf A} ({\bf S^{\prime -1}})^T $.

Let ${\bf S}= {\bf S^{\prime}} ({\bf M^T})^{-1}$. By Lemma \ref{LemmaDarbouxmap1}, ${\bf S} \in \mathcal{D} (n; \omega)$ is a Darboux matrix. The result then follows if we prove that $\left(\lambda_{\sigma,1} ({\bf B}) , \cdots, \lambda_{\sigma,n} ({\bf B}) \right)$ is the $\omega$-symplectic spectrum of ${\bf A}$. We have that $\lambda \in Spec_{\omega}({\bf A})$ if and only if
\begin{equation}
\begin{array}{c}
0=\det({\bf A} {\bf \Omega^{-1}} \pm i  \lambda {\bf I}) =  \det ({\bf S} {\bf D} {\bf S^T} \pm i \lambda {\bf \Omega}) = \\
\\
=\det ({\bf D}\pm i \lambda {\bf S^{-1}} {\bf \Omega} ({\bf S^T})^{-1} ) = \det ({\bf D} \pm i \lambda {\bf J}),
\end{array}
\label{eq12.7}
\end{equation}
that is if and only if $\lambda \in Spec_{\sigma}({\bf B})$.
\end{proof}

There is a converse to the previous result.

\begin{theorem}\label{theorem2}
Let ${\bf A} \in \mathbb{P}^{\times}_{2n \times 2n} (\mathbb{R})$ and consider a set of positive numbers $0 < \lambda_1 \le \lambda_2 \le \cdots \le \lambda_n$. Then there exists a matrix ${\bf P} \in Gl (2n)$ such that
\begin{equation}
{\bf P^{-1}} {\bf A} {\bf \left(P^{-1}\right)^T} = diag (\lambda_1 , \cdots , \lambda_n, \lambda_1 , \cdots , \lambda_n).
\label{eqTheo15}
\end{equation}
Moreover, the set $Spec_{\delta} ({\bf A})= (\lambda_1 , \cdots , \lambda_n)$ is the $\delta$-symplectic spectrum with respect to the symplectic form
\begin{equation}
\delta (z, z^{\prime}) = z \cdot {\bf \Delta^{-1}} z,
\label{eqTheo16}
\end{equation}
with
\begin{equation}
{\bf \Delta} = {\bf P} {\bf J} {\bf P^T}.
\label{eqTheo17}
\end{equation}
\end{theorem}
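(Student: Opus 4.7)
The plan is to construct $\mathbf{P}$ explicitly and then verify the spectrum claim by a direct similarity argument.

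First, set $\mathbf{D} = \mathrm{diag}(\lambda_1, \ldots, \lambda_n, \lambda_1, \ldots, \lambda_n)$. Since $\mathbf{A} \in \mathbb{P}^{\times}_{2n \times 2n}(\mathbb{R})$ and all $\lambda_j > 0$, both $\mathbf{A}$ and $\mathbf{D}$ admit unique positive-definite square roots $\mathbf{A}^{1/2}$ and $\mathbf{D}^{1/2}$. Define
\begin{equation}
\mathbf{P} := \mathbf{A}^{1/2}\,\mathbf{D}^{-1/2}.
\end{equation}
This is invertible (a product of invertibles), so $\mathbf{P} \in Gl(2n)$, and a direct computation gives $\mathbf{P}\mathbf{D}\mathbf{P}^T = \mathbf{A}^{1/2}\mathbf{D}^{-1/2}\mathbf{D}\mathbf{D}^{-1/2}\mathbf{A}^{1/2} = \mathbf{A}$, which is equivalent to (\ref{eqTheo15}).

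Next, I would verify that (\ref{eqTheo16}) really defines a symplectic form. Since $\mathbf{J}^T = -\mathbf{J}$, the matrix $\mathbf{\Delta} = \mathbf{P}\mathbf{J}\mathbf{P}^T$ is skew-symmetric, and invertibility of $\mathbf{P}$ and $\mathbf{J}$ ensures that $\mathbf{\Delta} \in Gl(2n)$. Hence $\mathbf{\Delta}^{-1}$ is skew-symmetric and non-degenerate, so $\delta(z,z') = z \cdot \mathbf{\Delta}^{-1} z'$ is a symplectic form on $\mathbb{R}^{2n}$. By construction $\mathbf{P} \in \mathcal{D}(n;\delta)$ is a Darboux matrix for $\delta$.

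Finally, to identify the $\delta$-symplectic spectrum of $\mathbf{A}$, I would compute
\begin{equation}
\mathbf{A}\mathbf{\Delta}^{-1} = (\mathbf{P}\mathbf{D}\mathbf{P}^T)(\mathbf{P}^T)^{-1}\mathbf{J}^{-1}\mathbf{P}^{-1} = \mathbf{P}\,(\mathbf{D}\mathbf{J}^{-1})\,\mathbf{P}^{-1},
\end{equation}
so $\mathbf{A}\mathbf{\Delta}^{-1}$ and $\mathbf{D}\mathbf{J}^{-1}$ are similar and share the same spectrum. Writing $\mathbf{\Lambda} = \mathrm{diag}(\lambda_1,\ldots,\lambda_n)$, one has
\begin{equation}
\mathbf{D}\mathbf{J}^{-1} = \begin{pmatrix} \mathbf{\Lambda} & \mathbf{0} \\ \mathbf{0} & \mathbf{\Lambda} \end{pmatrix}\begin{pmatrix} \mathbf{0} & -\mathbf{I} \\ \mathbf{I} & \mathbf{0} \end{pmatrix} = \begin{pmatrix} \mathbf{0} & -\mathbf{\Lambda} \\ \mathbf{\Lambda} & \mathbf{0} \end{pmatrix},
\end{equation}
whose eigenvalues are exactly $\{\pm i\lambda_j\}_{j=1}^n$. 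Taking moduli and arranging in increasing order recovers $(\lambda_1, \ldots, \lambda_n)$ as $Spec_{\delta}(\mathbf{A})$, per Definition \ref{DefinitionOmegaSymplecticSpectrum1}.

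There is essentially no obstacle: the positive-definiteness of $\mathbf{A}$ and of $\mathbf{D}$ makes the congruence $\mathbf{A} \sim \mathbf{D}$ automatic by Sylvester's law of inertia, and the rest is a one-line similarity computation using the relation $\mathbf{\Delta} = \mathbf{P}\mathbf{J}\mathbf{P}^T$. The only mild point to watch is that $\mathbf{P}$ is far from unique (any right-multiplication by a $\sigma$-symplectic matrix also gives a Darboux matrix for $\delta$, cf.~Lemma \ref{LemmaDarbouxmap1}), but for an existence statement a single canonical choice suffices.
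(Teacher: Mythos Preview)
Your proof is correct and takes a genuinely different, more elementary route than the paper's. The paper first invokes the standard $\sigma$-Williamson theorem to diagonalize ${\bf A}$ symplectically as ${\bf M^T A M} = \mathrm{diag}(\lambda_{\sigma,j}({\bf A}))$, then applies a diagonal rescaling ${\bf R}$ to reach the prescribed $\lambda_j$, setting ${\bf P} = ({\bf R M^T})^{-1}$; the spectrum claim is then deduced by appeal to the preceding $\omega$-Williamson theorem (Theorem~\ref{TheoremOmegaWilliamsonTheorem1}). Your construction bypasses Williamson's theorem entirely: the congruence ${\bf A} = {\bf P D P^T}$ is obtained in one line via square roots, and the $\delta$-spectrum is read off from the explicit similarity ${\bf A\Delta^{-1}} = {\bf P}({\bf D J^{-1}}){\bf P^{-1}}$. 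What your approach buys is self-containment and brevity; what the paper's approach buys is that it exhibits the result as a direct converse to Theorem~\ref{TheoremOmegaWilliamsonTheorem1} and makes the relationship between $\mathrm{Spec}_\sigma({\bf A})$ and $\mathrm{Spec}_\delta({\bf A})$ transparent through the rescaling factor ${\bf R}$.
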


\begin{proof}
Let $Spec_{\sigma} ({\bf A})=(\lambda_{\sigma,1} ({\bf A}) , \cdots , \lambda_{\sigma,n} ({\bf A}))$ denote the $\sigma$-symplectic spectrum of ${\bf A}$ with respect to the standard symplectic form $\sigma$. By Williamson's Theorem, there exists a $\sigma$-symplectic matrix ${\bf M} \in Sp(n; \sigma)$ such that
\begin{equation}
{\bf M^T} {\bf A} {\bf M} = diag (\lambda_{\sigma,1} ({\bf A}) , \cdots , \lambda_{\sigma,n} ({\bf A}),\lambda_{\sigma,1} ({\bf A}) , \cdots , \lambda_{\sigma,n} ({\bf A})).
\label{eqTheo18}
\end{equation}
Let ${\bf R}$ be the matrix:
\begin{equation}
{\bf R}= diag \left( \sqrt{\frac{\lambda_1}{\lambda_{\sigma,1} ({\bf A})}}, \cdots, \sqrt{\frac{\lambda_n}{\lambda_{\sigma,n} ({\bf A})}} , \sqrt{\frac{\lambda_1}{\lambda_{\sigma,1} ({\bf A})}}, \cdots, \sqrt{\frac{\lambda_n}{\lambda_{\sigma,n} ({\bf A})}} \right).
\label{eqTheo19}
\end{equation}
We then recover (\ref{eqTheo15}) with ${\bf P}=({\bf R} {\bf M^T})^{-1}$. If we define ${\bf \Delta}$ as in (\ref{eqTheo17}), then ${\bf P} \in \mathcal{D} (n; \delta)$ (cf. (\ref{eqTheo16})). The remaining statement is then a simple consequence of the previous theorem.
\end{proof}

Obviously, the matrix ${\bf P}$ in the previous theorem is not unique. Indeed, in the proof, we used the $\sigma$-symplectic spectrum $Spec_{\sigma} ({\bf A})$ with respect to the standard symplectic form $\sigma$. But we could equally have used the spectrum $Spec_{\omega} ({\bf A})$ with respect to any other symplectic form $\omega$. That would obviously lead to a different matrix ${\bf P}$.

We are now able to restate the $\omega$-RSUP (\ref{eqGaussian14}) in terms of the $\omega$-symplectic spectrum of the covariance matrix ${\bf A}$.

\begin{theorem}\label{TheoremOmegaRSUP1}
Let ${\bf A} \in \mathbb{P}^{\times}_{2n \times 2n} (\mathbb{R})$ and let $\lambda_{\omega,1} ({\bf A})$ denote its smallest $\omega$-Williamson invariant. Then it is the covariance matrix of a $\omega$-Wigner function if and only if
\begin{equation}
\lambda_{\omega ,1} ({\bf A}) \ge \frac{1}{2}.
\label{eqOmegaRSUP1}
\end{equation}
\end{theorem}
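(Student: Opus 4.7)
The plan is to reduce the $\omega$-RSUP to the standard $\sigma$-RSUP via a Darboux map, and then exploit Williamson's theorem to translate a matrix positivity condition into a constraint on the smallest symplectic eigenvalue. This is basically the non-standard analogue of the chain of equivalences \eqref{eqGaussian7}--\eqref{eqGaussian8} from the $\sigma$-case.

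First, I would prove the "only if" direction. Suppose ${\bf A}$ is the covariance matrix of a $\omega$-Wigner function $W^{\omega}\rho$. By \eqref{eqGaussian14}, this forces the matrix inequality
\begin{equation}
{\bf A} + \frac{i}{2} {\bf \Omega} \ge 0. \notag
\end{equation}
By the $\omega$-Williamson Theorem (Theorem \ref{TheoremOmegaWilliamsonTheorem1}), there exists a Darboux matrix ${\bf S} \in \mathcal{D}(n;\omega)$ such that ${\bf S^{-1}} {\bf A} ({\bf S^{-1}})^T = {\bf D}$, where ${\bf D} = \mathrm{diag}(\lambda_{\omega,1}({\bf A}), \ldots, \lambda_{\omega,n}({\bf A}), \lambda_{\omega,1}({\bf A}), \ldots, \lambda_{\omega,n}({\bf A}))$. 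Since ${\bf S} {\bf J} {\bf S^T} = {\bf \Omega}$ implies ${\bf S^{-1}} {\bf \Omega} ({\bf S^{-1}})^T = {\bf J}$, conjugating the inequality by ${\bf S^{-1}}$ yields ${\bf D} + \frac{i}{2} {\bf J} \ge 0$. The eigenvalues of ${\bf D} + \frac{i}{2}{\bf J}$ are precisely $\lambda_{\omega,j}({\bf A}) \pm \tfrac{1}{2}$ for $j = 1, \ldots, n$ (as in the argument preceding \eqref{eqGaussian8}), so non-negativity forces $\lambda_{\omega,1}({\bf A}) \ge \tfrac{1}{2}$.

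For the converse, I would construct an explicit $\omega$-Wigner function with the prescribed covariance. Assume $\lambda_{\omega,1}({\bf A}) \ge \tfrac{1}{2}$. Pick again a Darboux matrix ${\bf S} \in \mathcal{D}(n;\omega)$ from Theorem \ref{TheoremOmegaWilliamsonTheorem1} diagonalizing ${\bf A}$ to ${\bf D}$. Then the Gaussian $\mathcal{G}_{{\bf D}}$ satisfies the standard RSUP ${\bf D} + \tfrac{i}{2}{\bf J} \ge 0$ and is therefore a $\sigma$-Wigner function of some density matrix $\widehat{\rho}$. Define $f(\xi) := \mathcal{G}_{{\bf D}}({\bf S^{-1}}\xi)$; by \eqref{eqNonstandardWeylalgebra13} this is the $\omega$-Wigner function of the same $\widehat{\rho}$. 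A direct computation (using the change of variables $z = {\bf S^{-1}}\xi$ and $\det {\bf S} = \pm 1$ from \eqref{eqNonstandardWeylalgebra1.1}) shows that $f = \mathcal{G}_{{\bf A}}$ with covariance ${\bf S}{\bf D}{\bf S^T} = {\bf A}$, completing the proof.

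The only real subtlety is to make sure that the diagonalization and the Darboux-transport commute cleanly. This is guaranteed by the identity ${\bf S^{-1}} {\bf \Omega} ({\bf S^{-1}})^T = {\bf J}$, which is just a restatement of ${\bf S} \in \mathcal{D}(n;\omega)$, and by the covariance-transformation formula \eqref{eqGaussian13}. Everything else is a direct transposition of the standard symplectic argument, so I do not expect any genuine obstacle beyond correctly tracking which side of the equivalence uses ${\bf S}$ versus ${\bf S^{-1}}$.
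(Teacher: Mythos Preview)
Your proof is correct and follows essentially the same route as the paper: both directions proceed by conjugating ${\bf A} + \tfrac{i}{2}{\bf \Omega}$ with the Darboux matrix from Theorem~\ref{TheoremOmegaWilliamsonTheorem1} to reduce to ${\bf D} + \tfrac{i}{2}{\bf J}$, and the converse is handled by the Gaussian $\mathcal{G}_{{\bf A}}$. Your converse is slightly more explicit than the paper's (which just says the Gaussian with covariance ${\bf A}$ is a $\omega$-Wigner function once the $\omega$-RSUP holds), but the content is identical.
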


\begin{proof}
If ${\bf A}$ is the covariance matrix of a $\omega$-Wigner function, then it satisfies the $\omega$-RSUP (\ref{eqGaussian14}). Let ${\bf D} = diag (\lambda_{\omega ,1} ({\bf A}) , \cdots, \lambda_{\omega ,n} ({\bf A}),\lambda_{\omega , 1} ({\bf A})  , \cdots, \lambda_{ \omega , n} ({\bf A}) )$, where $\left\{\lambda_{ \omega , j} ({\bf A}) \right\}_{1 \le j \le n}$ are the $\omega$-Williamson invariants of ${\bf A}$. Thus, from (\ref{eq12.4}) we have
\begin{equation}
{\bf A} + \frac{i}{2} {\bf \Omega} \ge 0 \Leftrightarrow {\bf D}+ \frac{i}{2} {\bf S^{-1}}{\bf \Omega}{\bf S^{-1 T}} \ge 0 \Leftrightarrow {\bf D}+ \frac{i}{2} {\bf J} \ge 0.
\label{eqOmegaRSUP2}
\end{equation}
Again, the eigenvalues of ${\bf D}+ \frac{i}{2} {\bf J} $ are of the form $\lambda_{ \omega ,j} ({\bf A}) \pm \frac{1}{2}$, $j=1, \cdots, n$, and so the last inequality in (\ref{eqOmegaRSUP2}) is equivalent to (\ref{eqOmegaRSUP1}).

Conversely, if ${\bf A}$ verifies (\ref{eqOmegaRSUP1}), then it also satisfies the $\omega$-RSUP and the Gaussian measure with covariance matrix ${\bf A}$ is a Wigner function.
\end{proof}

It is worth mentioning that Narcowich-Wigner spectra of Gaussians are completely determined by the lowest Williamson invariant for every symplectic form. More specifically, we have:

\begin{theorem}\label{TheoremNWspectrumGaussians}
Let $\mathcal{G}_{{\bf A}}$ denote a Gaussian measure in phase space (\ref{eqGaussian1}) and $\omega(z,z^{\prime})= z^{\prime} \cdot {\bf \Omega^{-1}} z$ a symplectic form. Then the following statements are equivalent

\begin{enumerate}
\item $\alpha \in \mathcal{W}^{{\bf \Omega}} (\mathcal{F}\mathcal{G}_{{\bf A}})$,

\item ${\bf A} + \frac{i \alpha}{2 }{\bf \Omega} \ge 0$,

\item $\lambda_{\omega,1} ({\bf A}) \ge \frac{|\alpha|}{2}$.
\end{enumerate}

\end{theorem}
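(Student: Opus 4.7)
The three statements split naturally into two pairs. First, (2)$\Leftrightarrow$(3) is proved by mimicking the proof of Theorem \ref{TheoremOmegaRSUP1} with the coefficient $\alpha$ in place of $1$: by Theorem \ref{TheoremOmegaWilliamsonTheorem1} there is a Darboux matrix $\mathbf{S}\in\mathcal{D}(n;\omega)$ with $\mathbf{S}^{-1}\mathbf{A}(\mathbf{S}^{-1})^{T}=\mathbf{D}$, the diagonal matrix of $\omega$-symplectic eigenvalues $\lambda_{\omega,1}(\mathbf{A}),\ldots,\lambda_{\omega,n}(\mathbf{A})$ (each repeated twice). Since also $\mathbf{S}^{-1}\mathbf{\Omega}(\mathbf{S}^{-1})^{T}=\mathbf{J}$, condition (2) is equivalent to $\mathbf{D}+\tfrac{i\alpha}{2}\mathbf{J}\ge 0$; a block-diagonal $2\times 2$ calculation shows the eigenvalues of this matrix are $\lambda_{\omega,j}(\mathbf{A})\pm\alpha/2$, and their joint non-negativity is exactly $\lambda_{\omega,1}(\mathbf{A})\ge|\alpha|/2$, i.e. (3).

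For (1)$\Leftrightarrow$(2), I would reduce to the standard-symplectic case using the same $\mathbf{S}$. Setting $\mathbf{A}^{\sigma}:=\mathbf{S}^{-1}\mathbf{A}(\mathbf{S}^{-1})^{T}$ and using $|\det\mathbf{S}|=1$, a change of variable in the quadratic form gives $\mathcal{G}_{\mathbf{A}}(z)=\mathcal{G}_{\mathbf{A}^{\sigma}}(\mathbf{S}^{-1}z)$, so Remark \ref{Remark5.1} turns (1) into $\alpha\in\mathcal{W}^{\mathbf{J}}(\mathcal{F}\mathcal{G}_{\mathbf{A}^{\sigma}})$. On the other side, conjugating (2) by $\mathbf{S}^{-1}$ rewrites it as $\mathbf{A}^{\sigma}+\tfrac{i\alpha}{2}\mathbf{J}\ge 0$. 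It therefore suffices to prove the standard-symplectic statement: for a positive-definite real symmetric matrix $\mathbf{B}$,
\[
\alpha\in\mathcal{W}^{\mathbf{J}}\bigl(\mathcal{F}\mathcal{G}_{\mathbf{B}}\bigr) \ \Longleftrightarrow \ \mathbf{B}+\tfrac{i\alpha}{2}\mathbf{J}\ge 0.
\]

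For this, compute $(\mathcal{F}\mathcal{G}_{\mathbf{B}})(a)=\exp(-\tfrac{1}{2}a\cdot\mathbf{B}a)$ and note that, after peeling off the diagonal factors $d_{j}=\exp(-\tfrac{1}{2}a_{j}^{T}\mathbf{B}a_{j})$ by a positivity-preserving diagonal similarity, the KLM matrix collapses to $N_{jk}=\exp(a_{k}^{T}\mathbf{C}a_{j})$ with $\mathbf{C}:=\mathbf{B}+\tfrac{i\alpha}{2}\mathbf{J}$. For the $(\Leftarrow)$ direction, factor $\mathbf{C}=\mathbf{L}^{*}\mathbf{L}$; the exponent becomes the Gram entry $(\mathbf{L}a_{k})^{*}(\mathbf{L}a_{j})$, and Schur's theorem delivers positivity of its entrywise exponential. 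For the $(\Rightarrow)$ direction, suppose $u_{0}^{*}\mathbf{C}u_{0}<0$ for some $u_{0}=x+iy\in\mathbb{C}^{2n}$, take the three-point family $a_{1}=0$, $a_{2}=\epsilon x$, $a_{3}=\epsilon y$ and the test vector $v=(-1+i,\,1,\,-i)$ (which kills the zeroth-order term since $v_{1}+v_{2}+v_{3}=0$); expanding $v^{*}Nv$ in powers of $\epsilon$ isolates the leading term $\epsilon^{2}\,u_{0}^{*}\mathbf{C}u_{0}<0$, contradicting positivity of $N$ for small $\epsilon>0$.

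The main obstacle I anticipate is this last converse step, i.e. aligning the bilinear form $w^{T}\mathbf{C}\overline{w}$ (with $w=\sum_{k}v_{k}b_{k}$) that arises naturally from the $N$-matrix entries with the Hermitian form $u_{0}^{*}\mathbf{C}u_{0}$ that witnesses non-positivity of $\mathbf{C}$, and then choosing the real vectors $b_{k}$ and complex coefficients $v_{k}$ so as to isolate the offending sign. The remaining ingredients---the $\omega$-Williamson diagonalization, Schur's theorem on Hadamard exponentials of PSD matrices, and Remark \ref{Remark5.1}---are already available from the preceding material.
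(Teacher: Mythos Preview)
Your proof is correct and follows the same overall architecture as the paper: reduce the general $\omega$ to the standard $\sigma$ via a Darboux matrix $\mathbf{S}$ (using Remark \ref{Remark5.1} for item (1) and conjugation by $\mathbf{S}^{-1}$ for item (2)), and handle (2)$\Leftrightarrow$(3) exactly as in Theorem \ref{TheoremOmegaRSUP1}.

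The one genuine difference is in the standard-symplectic equivalence (1)$\Leftrightarrow$(2). The paper simply invokes Narcowich \cite{Narcowich1}, where this is proved for $\omega=\sigma$. You instead give a self-contained argument: compute $\mathcal{F}\mathcal{G}_{\mathbf{B}}(a)=\exp(-\tfrac{1}{2}a^{T}\mathbf{B}a)$, strip the positive diagonal to reduce the KLM matrix to $N_{jk}=\exp(a_{k}^{T}\mathbf{C}a_{j})$ with $\mathbf{C}=\mathbf{B}+\tfrac{i\alpha}{2}\mathbf{J}$, then use a Cholesky-type factorization plus Schur's product theorem for the forward direction, and a three-point $\epsilon$-expansion for the converse. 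Your worry about ``aligning the bilinear form with the Hermitian form'' is unfounded: with $v=(-1+i,1,-i)$, $a_{1}=0$, $a_{2}=\epsilon x$, $a_{3}=\epsilon y$, the zeroth-order term vanishes because $\sum v_{j}=0$, and the $\epsilon^{2}$ coefficient is
\[
\sum_{j,k=2}^{3}\overline{v_{j}}v_{k}\,b_{k}^{T}\mathbf{C}b_{j}
= x^{T}\mathbf{C}x+y^{T}\mathbf{C}y+i\,x^{T}\mathbf{C}y-i\,y^{T}\mathbf{C}x
= u_{0}^{*}\mathbf{C}u_{0}<0,
\]
exactly the Hermitian quantity you need. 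So the argument goes through. What you gain over the paper's proof is self-containedness (no external reference); what the paper gains is brevity.
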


\begin{proof}
The equivalence of these statements was proven in \cite{Narcowich1} for the case $\omega=\sigma$. Let $\mathcal{G}_{{\bf A}}^{\sigma} (z)=\mathcal{G}_{{\bf A}} ({\bf S} z) = \mathcal{G}_{{\bf S^{-1}  A \left(S^{-1} \right)^T}} (z)$ for ${\bf S} \in \mathcal{D}(n; \omega)$. From Remark \ref{Remark5.1} we have:
\begin{equation}
\begin{array}{c}
\alpha \in \mathcal{W}^{{\bf \Omega}} (\mathcal{F}\mathcal{G}_{{\bf A}}) \Leftrightarrow \alpha \in \mathcal{W}^{{\bf J}} (\mathcal{F}\mathcal{G}_{{\bf S^{-1}  A \left(S^{-1} \right)^T}}) \Leftrightarrow \\
\\
{\bf S^{-1}  A \left(S^{-1} \right)^T} + \frac{i \alpha}{2} {\bf J} \ge 0 \Leftrightarrow {A } + \frac{i \alpha}{2} {\bf \Omega} \ge 0
\end{array}
\label{eqOmegaRSUP2.1}
\end{equation}
This proves the equivalence of 1 and 2. The equivalence of 2 and 3 was shown in the proof of Theorem \ref{TheoremOmegaRSUP1}.
\end{proof}

\begin{remark}\label{Remark1}
Notice that, if $\sigma \ne \pm \omega$, then in general $Spec_{\sigma} ({\bf A}) \ne Spec_{\omega} ({\bf A})$. Here is an example. For $n=2$, let
\begin{equation}
{\bf A} = \left(
\begin{array}{c c c c}
\alpha & 0 & 0 &0 \\
0 & \alpha & 0 & 0\\
0 & 0 & \beta & 0\\
0 & 0 & 0 & \beta
\end{array}
\right),
\label{eqRemark1}
\end{equation}
and
\begin{equation}
{\bf \Omega} = \left(
\begin{array}{c c c c}
0 & 0 & \theta_1 &0 \\
0 & 0 & 0 & \theta_2\\
- \theta_1 & 0 & 0 & 0\\
0 & -\theta_2 & 0 & 0
\end{array}
\right),
\label{eqRemark2}
\end{equation}
with $\alpha, \beta >0$, and $\theta_1 =\frac{1}{\theta_2} \ge 1$. After a straightforward computation, we conclude that
\begin{equation}
Spec_{\sigma} ({\bf A}) = \left( \sqrt{\alpha \beta} , \sqrt{\alpha \beta} \right),
\label{eqRemark3}
\end{equation}
while
\begin{equation}
Spec_{\omega} ({\bf A}) = \left( \frac{\sqrt{\alpha \beta}}{\theta_1} , \frac{\sqrt{\alpha \beta}}{\theta_2} \right).
\label{eqRemark4}
\end{equation}
So if $\theta_1=\theta_2=1$, then the two spectra coincide, otherwise they differ.
\end{remark}

The example in the previous remark is just a particular instance of Theorem \ref{TheoremIdenticalSpectra1} (see below). But first we recall the following Lemma which was proven in \cite{Dias5}. We denote by $Sp^+ (n; \sigma) = Sp(n; \sigma) \cap \mathbb{P}^{\times}_{2n \times 2n} (\mathbb{R})$ the set of real symmetric positive-definite symplectic $2n \times 2n$ matrices.

\begin{lemma}\label{LemmaSymplecticAntiSymplectic1}
Let ${\bf M} \in Gl(2n)$ and assume that ${\bf M}^T {\bf G} {\bf M} \in Sp(n; \sigma)$ for every
\begin{equation}
{\bf G} = \left(
\begin{array}{c c}
{\bf X} & {\bf 0}\\
{\bf 0} & {\bf X}^{-1}
\end{array}
\right) \in Sp^+ (n; \sigma).
\label{eqLemmaSymplecticAntiSymplectic1}
\end{equation}
Then ${\bf M}$ is either $\sigma$-symplectic or $\sigma$-anti-symplectic.
\end{lemma}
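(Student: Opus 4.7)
The plan is to extract strong constraints on the matrix $\mathbf{K} := \mathbf{M}\mathbf{J}\mathbf{M}^T$ by varying $\mathbf{G}$, and ultimately to show that $\mathbf{K} = \pm\mathbf{J}$. First I would unwind the symplecticity condition on $\mathbf{M}^T\mathbf{G}\mathbf{M}$. Since $\mathbf{G}^T = \mathbf{G}$, the identity $(\mathbf{M}^T\mathbf{G}\mathbf{M})\mathbf{J}(\mathbf{M}^T\mathbf{G}\mathbf{M})^T = \mathbf{J}$ becomes
$$
\mathbf{M}^T\mathbf{G}\mathbf{K}\mathbf{G}\mathbf{M} = \mathbf{J}.
$$
Specializing to $\mathbf{X} = \mathbf{I}$ (so $\mathbf{G} = \mathbf{I}_{2n}$) gives $\mathbf{M}^T\mathbf{K}\mathbf{M} = \mathbf{J}$, and comparing with the general case forces
$$
\mathbf{G}\mathbf{K}\mathbf{G} = \mathbf{K}
$$
for every $\mathbf{X} \in \mathbb{P}^\times_{n\times n}(\mathbb{R})$. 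Note that $\mathbf{K}$ is invertible since $\mathbf{M} \in Gl(2n)$.

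Next I would block-decompose $\mathbf{K} = \begin{pmatrix} \mathbf{A} & \mathbf{B} \\ \mathbf{C} & \mathbf{D} \end{pmatrix}$. Writing out $\mathbf{G}\mathbf{K}\mathbf{G} = \mathbf{K}$ with $\mathbf{G} = \mathrm{diag}(\mathbf{X}, \mathbf{X}^{-1})$ produces the four identities
$$
\mathbf{X}\mathbf{A}\mathbf{X} = \mathbf{A}, \qquad \mathbf{X}\mathbf{B}\mathbf{X}^{-1} = \mathbf{B}, \qquad \mathbf{X}^{-1}\mathbf{C}\mathbf{X} = \mathbf{C}, \qquad \mathbf{X}^{-1}\mathbf{D}\mathbf{X}^{-1} = \mathbf{D}.
$$
Setting $\mathbf{X} = \lambda\mathbf{I}$ with $\lambda>0$ arbitrary forces $\mathbf{A} = \mathbf{D} = 0$, while the middle two say that $\mathbf{B}$ and $\mathbf{C}$ commute with every symmetric positive-definite matrix. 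Since $\mathbb{P}^\times_{n\times n}(\mathbb{R})$ spans $\mathrm{Sym}(n;\mathbb{R})$ as a real vector space (any symmetric $\mathbf{Y}$ equals $(\mathbf{Y}+t\mathbf{I})-t\mathbf{I}$ for $t$ large), $\mathbf{B}$ and $\mathbf{C}$ commute with all symmetric matrices, and so each must be a scalar multiple of the identity: commutation with diagonal matrices of distinct eigenvalues forces diagonality, and commutation with the elementary matrices $\mathbf{e}_i\mathbf{e}_j^T + \mathbf{e}_j\mathbf{e}_i^T$ equates the diagonal entries. Thus $\mathbf{B} = c\mathbf{I}$ and $\mathbf{C} = c'\mathbf{I}$; antisymmetry of $\mathbf{K}$ (inherited from $\mathbf{J}$) gives $c'=-c$, so $\mathbf{K} = c\mathbf{J}$ with $c \neq 0$.

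Finally, I need to pin down $c$. Transposing and inverting $\mathbf{M}\mathbf{J}\mathbf{M}^T = c\mathbf{J}$ (using $\mathbf{J}^{-1} = -\mathbf{J}$) yields the companion identity $\mathbf{M}^T\mathbf{J}\mathbf{M} = c\mathbf{J}$. Substituting $\mathbf{K} = c\mathbf{J}$ back into $\mathbf{M}^T\mathbf{G}\mathbf{K}\mathbf{G}\mathbf{M} = \mathbf{J}$ and using that $\mathbf{G}$ is both symplectic and symmetric (so $\mathbf{G}\mathbf{J}\mathbf{G} = \mathbf{J}$) gives $c\,\mathbf{M}^T\mathbf{J}\mathbf{M} = \mathbf{J}$, i.e. $c^2\mathbf{J} = \mathbf{J}$, hence $c = \pm 1$. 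The main technical step is the centralizer argument in the middle paragraph — that a matrix commuting with every element of $\mathbb{P}^\times_{n\times n}(\mathbb{R})$ must be a real scalar multiple of $\mathbf{I}$ — but this reduces to the standard density of the positive-definite cone inside $\mathrm{Sym}(n;\mathbb{R})$ and Schur's lemma for the irreducible action by conjugation, so no genuine obstacle remains.
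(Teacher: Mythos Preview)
Your argument is correct. Note, however, that the present paper does not actually prove this lemma: it is quoted from \cite{Dias5} (Dias, de~Gosson, Prata, \emph{Proc.\ Amer.\ Math.\ Soc.} \textbf{142} (2014) 3183--3192), so there is no in-paper proof to compare against directly.

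Your approach is clean and self-contained. You rewrite the hypothesis as $\mathbf{M}^T\mathbf{G}\mathbf{K}\mathbf{G}\mathbf{M}=\mathbf{J}$ with $\mathbf{K}:=\mathbf{M}\mathbf{J}\mathbf{M}^T$, specialize to $\mathbf{G}=\mathbf{I}$ to get $\mathbf{M}^T\mathbf{K}\mathbf{M}=\mathbf{J}$, and conclude $\mathbf{G}\mathbf{K}\mathbf{G}=\mathbf{K}$ for every block-diagonal $\mathbf{G}=\mathrm{diag}(\mathbf{X},\mathbf{X}^{-1})$ with $\mathbf{X}\in\mathbb{P}^\times_{n\times n}(\mathbb{R})$. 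The block analysis then forces $\mathbf{A}=\mathbf{D}=0$ (from $\mathbf{X}=\lambda\mathbf{I}$) and $\mathbf{B},\mathbf{C}$ scalar (from the centralizer argument), so $\mathbf{K}=c\mathbf{J}$ by antisymmetry; combining $\mathbf{M}\mathbf{J}\mathbf{M}^T=c\mathbf{J}$ with its inverse yields $\mathbf{M}^T\mathbf{J}\mathbf{M}=c\mathbf{J}$, and then $c\,\mathbf{M}^T\mathbf{J}\mathbf{M}=\mathbf{J}$ gives $c^2=1$. Every step checks out. One cosmetic remark: the reference to ``Schur's lemma'' at the end is superfluous, since you already gave the explicit centralizer argument using diagonal matrices and the elementary symmetric matrices $\mathbf{e}_i\mathbf{e}_j^T+\mathbf{e}_j\mathbf{e}_i^T$; you may simply drop that sentence.
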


\begin{theorem}\label{TheoremIdenticalSpectra1}
Let $\omega_1$, $\omega_2$ denote two symplectic forms on $\mathbb{R}^{2n}$. Then $Spec_{\omega_1} ({\bf A})= Spec_{\omega_2} ({\bf A})$ for all ${\bf A} \in \mathbb{P}^{\times}_{2n \times 2n} (\mathbb{R})$ if and only if $\omega_1 = \pm \omega_2$.
\end{theorem}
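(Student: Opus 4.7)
The plan is as follows. The forward direction is immediate: if $\omega_2 = \pm \omega_1$, then $\mathbf{\Omega}_2^{-1} = \pm \mathbf{\Omega}_1^{-1}$, so the eigenvalues of $\mathbf{A}\mathbf{\Omega}_1^{-1}$ and $\mathbf{A}\mathbf{\Omega}_2^{-1}$ agree up to sign and the moduli coincide. For the converse, I would first reduce to the standard case. Pick a Darboux matrix $\mathbf{S} \in \mathcal{D}(n;\omega_1)$ and substitute $\mathbf{A}' = \mathbf{S}^{-1}\mathbf{A}\mathbf{S}^{-T}$; the map $\mathbf{A} \mapsto \mathbf{A}'$ is a bijection of $\mathbb{P}^{\times}_{2n\times 2n}(\mathbb{R})$. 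A routine similarity computation shows $Spec_{\omega_1}(\mathbf{A}) = Spec_\sigma(\mathbf{A}')$ and $Spec_{\omega_2}(\mathbf{A}) = Spec_{\omega_3}(\mathbf{A}')$, where $\mathbf{\Omega}_3 = \mathbf{S}^{-1}\mathbf{\Omega}_2\mathbf{S}^{-T}$ still satisfies $\det \mathbf{\Omega}_3 = 1$. Hence it suffices to prove: if $Spec_\sigma(\mathbf{A}) = Spec_\omega(\mathbf{A})$ for every $\mathbf{A} \in \mathbb{P}^{\times}_{2n\times 2n}(\mathbb{R})$, then $\mathbf{\Omega} = \pm\mathbf{J}$.

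Next I would establish the auxiliary characterization: for $\mathbf{A} \in \mathbb{P}^{\times}_{2n\times 2n}(\mathbb{R})$, one has $Spec_\sigma(\mathbf{A}) = (1,\ldots,1)$ if and only if $\mathbf{A} \in Sp^+(n;\sigma)$. For $(\Leftarrow)$, use $\mathbf{J}^{-1} = -\mathbf{J}$ and $\mathbf{A}^T = \mathbf{A}$ to compute $(\mathbf{A}\mathbf{J}^{-1})^2 = \mathbf{A}\mathbf{J}\mathbf{A}\mathbf{J} = \mathbf{J}\mathbf{J} = -\mathbf{I}$, so the eigenvalues of $\mathbf{A}\mathbf{J}^{-1}$ are $\pm i$. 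For $(\Rightarrow)$, Williamson's theorem supplies $\mathbf{P} \in Sp(n;\sigma)$ with $\mathbf{P}\mathbf{A}\mathbf{P}^T = \mathbf{I}$, so $\mathbf{A} = \mathbf{P}^{-1}\mathbf{P}^{-T}$, and closure of $Sp(n;\sigma)$ under transposition and inversion yields $\mathbf{A}\mathbf{J}\mathbf{A}^T = \mathbf{J}$. The same argument with $\mathbf{J}$ replaced by $\mathbf{\Omega}$ (invoking Theorem~\ref{TheoremOmegaWilliamsonTheorem1}) gives the analogous statement for $\omega$. Therefore the hypothesis $Spec_\sigma = Spec_\omega$ on $\mathbb{P}^{\times}_{2n\times 2n}(\mathbb{R})$ forces $Sp^+(n;\sigma) = Sp^+(n;\omega)$.

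To finish, I would probe this equality against a one-parameter family. For each $\mathbf{X} \in \mathbb{P}^{\times}_{n\times n}(\mathbb{R})$, the block matrix $\mathbf{A}_{\mathbf{X}} = \bigl(\begin{smallmatrix} \mathbf{X} & \mathbf{0} \\ \mathbf{0} & \mathbf{X}^{-1}\end{smallmatrix}\bigr)$ is symmetric positive-definite and lies in $Sp^+(n;\sigma)$, hence also in $Sp^+(n;\omega)$. Writing $\mathbf{\Omega}$ in block form $\bigl(\begin{smallmatrix} \mathbf{\Theta} & \mathbf{K}\\ -\mathbf{K}^T & \mathbf{\Upsilon}\end{smallmatrix}\bigr)$ with $\mathbf{\Theta}, \mathbf{\Upsilon} \in A(n;\mathbb{R})$, the relation $\mathbf{A}_{\mathbf{X}}\mathbf{\Omega}\mathbf{A}_{\mathbf{X}}^T = \mathbf{\Omega}$ is equivalent to the three conditions $\mathbf{X}\mathbf{\Theta}\mathbf{X} = \mathbf{\Theta}$, $\mathbf{X}^{-1}\mathbf{\Upsilon}\mathbf{X}^{-1} = \mathbf{\Upsilon}$, and $\mathbf{X}\mathbf{K} = \mathbf{K}\mathbf{X}$, valid for every $\mathbf{X} \in \mathbb{P}^{\times}_{n\times n}(\mathbb{R})$. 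Taking $\mathbf{X} = t\mathbf{I}$ with $t \ne 1$ kills $\mathbf{\Theta}$ and $\mathbf{\Upsilon}$, while commutation with every symmetric positive-definite matrix forces $\mathbf{K}$ to commute with every real $n\times n$ matrix, so $\mathbf{K} = c\mathbf{I}$. Finally $\det \mathbf{\Omega} = c^{2n} = 1$ gives $c = \pm 1$, whence $\mathbf{\Omega} = \pm \mathbf{J}$, and unwinding the reduction yields $\omega_2 = \pm \omega_1$.

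The main obstacle is the auxiliary characterization of $Sp^+(n;\sigma)$ by the symplectic-spectrum condition $(1,\ldots,1)$: this is where one must extract a structural statement about $\mathbf{A}$ (symplecticity) from a purely spectral hypothesis, using Williamson's theorem and the group-theoretic properties of $Sp(n;\sigma)$. Once that equivalence is in hand, the matrix-algebraic step with the block-diagonal family $\mathbf{A}_{\mathbf{X}}$ is entirely elementary.
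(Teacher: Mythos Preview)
Your reduction to the case $\omega_1=\sigma$, $\omega_2=\omega$ is fine and matches the paper. The gap is in the auxiliary characterization you state for the non-standard form: it is \emph{not} true in general that $Spec_\omega(\mathbf{A})=(1,\dots,1)$ if and only if $\mathbf{A}\in Sp^+(n;\omega)$. Your argument for the $\sigma$-case silently uses $\mathbf{J}^2=-\mathbf{I}$, and for a general $\mathbf{\Omega}$ with $\det\mathbf{\Omega}=1$ one need not have $\mathbf{\Omega}^2=-\mathbf{I}$. Concretely, for $n=2$ take $\mathbf{\Omega}=\mathrm{diag}(2,\tfrac12,1,1)\,\mathbf{J}\,\mathrm{diag}(2,\tfrac12,1,1)$. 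Then $\mathbf{I}\in Sp^+(n;\omega)$ trivially, yet the eigenvalues of $\mathbf{\Omega}^{-1}$ are $\pm i/2,\pm 2i$, so $Spec_\omega(\mathbf{I})=(\tfrac12,2)\neq(1,1)$; conversely $\mathbf{A}=\mathrm{diag}(4,\tfrac14,1,1)$ has $Spec_\omega(\mathbf{A})=(1,1)$ but $\mathbf{A}\mathbf{\Omega}\mathbf{A}\neq\mathbf{\Omega}$. So neither implication holds, and the step ``hence $Sp^+(n;\sigma)=Sp^+(n;\omega)$'' is unjustified as written.

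The fix is short. What \emph{does} follow from $Spec_\omega(\mathbf{A})=(1,\dots,1)$ (for $\mathbf{A}\in\mathbb{P}^\times_{2n\times 2n}(\mathbb{R})$) is that the skew-symmetric matrix $\mathbf{A}^{1/2}\mathbf{\Omega}^{-1}\mathbf{A}^{1/2}$ has all eigenvalues $\pm i$ and is diagonalizable, hence squares to $-\mathbf{I}$; equivalently $\mathbf{A}\mathbf{\Omega}^{-1}\mathbf{A}=-\mathbf{\Omega}=\mathbf{\Omega}^T$. Applying this first to $\mathbf{A}=\mathbf{I}\in Sp^+(n;\sigma)$ gives $\mathbf{\Omega}^2=-\mathbf{I}$, i.e.\ $\mathbf{\Omega}^{-1}=-\mathbf{\Omega}$, and \emph{with that in hand} your equivalence and the block-diagonal test with $\mathbf{A}_{\mathbf{X}}$ go through verbatim to force $\mathbf{\Omega}=\pm\mathbf{J}$. (This is essentially the paper's route: it derives $\mathbf{A}\mathbf{\Omega}^{-1}\mathbf{A}=\mathbf{\Omega}^T$ for all $\mathbf{A}\in Sp^+(n;\sigma)$ via the two Williamson theorems, rewrites it as $\mathbf{S}^{-1}\mathbf{A}(\mathbf{S}^{-1})^T\in Sp(n;\sigma)$ for a fixed Darboux $\mathbf{S}$, and then invokes Lemma~\ref{LemmaSymplecticAntiSymplectic1}; your $\mathbf{A}_{\mathbf{X}}$ computation is effectively a direct proof of that lemma.)
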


\begin{proof}
Let us first assume that $\omega_1= \sigma$, $\omega_2= \omega$ and $Spec_{\sigma} ({\bf A})= Spec_{\omega} ({\bf A})$ for all ${\bf A} \in \mathbb{P}^{\times}_{2n \times 2n} (\mathbb{R})$. By the $\sigma$-Williamson Theorem (Theorem \ref{Williamsontheorem2}) and the $\omega$-Williamson Theorem (Theorem \ref{TheoremOmegaWilliamsonTheorem1}), there exist ${\bf P_{A}} \in Sp(n; \sigma)$ and ${\bf S_A} \in \mathcal{D} (n; \omega)$ such that
\begin{equation}
{\bf S_A^{-1} A S_A^{-1 T}} = {\bf P_A A P_A^T}.
\label{eqTheoremIdenticalSpectra1}
\end{equation}
Consequently
\begin{equation}
{\bf A} = {\bf S_A^{\prime} A S_A^{\prime T}},
\label{eqTheoremIdenticalSpectra2}
\end{equation}
where
\begin{equation}
{\bf S_A^{\prime}} = {\bf S_A P_A} \in \mathcal{D} (n; \omega).
\label{eqTheoremIdenticalSpectra3}
\end{equation}
 From (\ref{eqTheoremIdenticalSpectra2}) we conclude that there exists ${\bf S_A^{\prime}} \in \mathcal{D} (n; \omega)$ such that:
\begin{equation}
{\bf S_A^{\prime}} = {\bf A} ({\bf S_A^{\prime T}})^{-1} {\bf A}^{-1}.
\label{eqTheoremIdenticalSpectra4}
\end{equation}
Since this holds for all ${\bf A} \in \mathbb{P}_{2n \times 2n}^{\times} (\mathbb{R})$, in particular it is true for ${\bf A} \in Sp^+ (n; \sigma)$. Consequently:
\begin{equation}
\begin{array}{c}
{\bf \Omega} = {\bf S_A^{\prime} J} {\bf S_A^{\prime}}^T = {\bf A} ({\bf S_A^{\prime T}})^{-1} {\bf A}^{-1} {\bf J} ({\bf A}^{-1})^T ({\bf S_A^{\prime }})^{-1} {\bf A} =\\
\\
={\bf A} ({\bf S_A^{\prime T}})^{-1} {\bf J} ({\bf S_A^{\prime }})^{-1} {\bf A}
= - {\bf A} \left( {\bf S_A^{\prime} J} {\bf S_A^{\prime}}^T \right)^{-1} {\bf A } =\\
\\
= -{\bf A}{\bf \Omega}^{-1}{\bf A}
\end{array}
\label{eqTheoremIdenticalSpectra5}
\end{equation}
So, we have proven that
\begin{equation}
{\bf \Omega}^T= {\bf A}{\bf \Omega}^{-1}{\bf A}
\label{eqTheoremIdenticalSpectra6}
\end{equation}
for all ${\bf A} \in Sp^+ (n; \sigma)$.

Next, fix an arbitrary ${\bf S} \in \mathcal{D} (n; \omega)$. From (\ref{eqTheoremIdenticalSpectra5}) and (\ref{eqsymplecticform9}) we have:
\begin{equation}
{\bf S} {\bf J} {\bf S}^T = {\bf \Omega}= - {\bf A} ({\bf S} {\bf J} {\bf S}^T )^{-1} {\bf A} \Leftrightarrow {\bf S}^{-1}  {\bf A} ({\bf S}^T)^{-1} {\bf J} \left( {\bf S}^{-1}  {\bf A} ({\bf S}^T)^{-1}\right)^T =  {\bf J},
\label{eqTheoremIdenticalSpectra7}
\end{equation}
for all ${\bf A} \in Sp^+ (n; \sigma)$. In other words ${\bf S}^{-1}  {\bf A} ({\bf S}^T)^{-1} \in Sp^+ (n; \sigma)$ for all ${\bf A} \in Sp^+ (n; \sigma)$. But from Lemma \ref{LemmaSymplecticAntiSymplectic1}, this is possible if and only if ${\bf S}^{-1}$ is $\sigma$-symplectic or $\sigma$-antisymplectic. In the first case, we have $\omega= \sigma$ and in the second case $\omega= - \sigma$.

Next consider arbitrary symplectic forms $\omega_1 (z,z^{\prime})=z \cdot {\bf \Omega_1}^{-1} z^{\prime}$, $\omega_2 (z,z^{\prime})=z \cdot {\bf \Omega_2}^{-1} z^{\prime}$ such that $Spec_{\omega_1} ({\bf A})= Spec_{\omega_2} ({\bf A})$ for all ${\bf A} \in \mathbb{P}^{\times}_{2n \times 2n} (\mathbb{R})$. Now let ${\bf S_1} \in \mathcal{D} (n; \omega_1)$. For any  ${\bf A} \in \mathbb{P}^{\times}_{2n \times 2n} (\mathbb{R})$, we have that  ${\bf S_1}{\bf A} {\bf S_1}^T\in \mathbb{P}^{\times}_{2n \times 2n} (\mathbb{R})$. On the other hand, it is easy to verify that (cf. (\ref{eq12.7}))
\begin{equation}
 Spec_{\omega_1} ({\bf S_1}{\bf A} {\bf S_1}^T)=  Spec_{\sigma} ({\bf A}),
\label{eqTheoremIdenticalSpectra8}
\end{equation}
and
\begin{equation}
 Spec_{\omega_2} ({\bf S_1}{\bf A} {\bf S_1}^T)=  Spec_{\omega} ({\bf A}),
\label{eqTheoremIdenticalSpectra9}
\end{equation}
with
\begin{equation}
\omega (z,z^{\prime})=z \cdot {\bf \Omega}^{-1} z^{\prime} \hspace{1 cm} {\bf \Omega} = {\bf S_1}^{-1} {\bf \Omega_2} ({\bf S_1}^{-1})^T.
\label{eqTheoremIdenticalSpectra10}
\end{equation}
We conclude that
\begin{equation}
 Spec_{\omega_1} ({\bf S_1}{\bf A} {\bf S_1}^T)=  Spec_{\omega_2} ({\bf S_1}{\bf A} {\bf S_1}^T) \Leftrightarrow Spec_{\sigma} ({\bf A}) =  Spec_{\omega} ({\bf A}),
\label{eqTheoremIdenticalSpectra11}
\end{equation}
for all ${\bf A} \in \mathbb{P}^{\times}_{2n \times 2n} (\mathbb{R})$. From the previous analysis $\omega = \pm \sigma$ or, equivalently, $\omega_1 = \pm \omega_2$.
\end{proof}

Let us now prove the following refinement of Theorem \ref{TheoremIdenticalSpectra1}. We have proven that for $\omega_2 \ne \pm \omega_1$, we can always find matrices ${\bf A} \in \mathbb{P}^{\times}_{2n \times 2n} (\mathbb{R})$ with $Spec_{\omega_1} ({\bf A}) \ne Spec_{\omega_2} ({\bf A})$. We now show that in fact the smallest $\omega_1$- and $\omega_2$-Williamson invariants cannot coincide for all ${\bf A} \in \mathbb{P}^{\times}_{2n \times 2n} (\mathbb{R})$.

\begin{theorem}\label{TheoremUnboudedQuotient1}
Let $\omega_1, \omega_2$ be symplectic forms on $\mathbb{R}^{2n}$ such that $\omega_1 \ne \pm \omega_2$. For any $K>0$ there exists a matrix ${\bf A} \in \mathbb{P}^{\times}_{2n \times 2n} (\mathbb{R})$ such that
\begin{equation}
\frac{\lambda_{\omega_1,1} ({\bf  A})}{\lambda_{\omega_2,1} ({\bf A})} \ge K.
\label{eqUnboudedQuotient1}
\end{equation}
\end{theorem}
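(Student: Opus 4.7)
The plan is to exhibit a one-parameter family ${\bf A}_t\in\mathbb{P}^\times_{2n\times 2n}(\mathbb{R})$ along which $\lambda_{\omega_1,1}({\bf A}_t)$ stays constant while $\lambda_{\omega_2,1}({\bf A}_t)$ decays to zero as $t\to\infty$, so that the ratio can be made to exceed any $K$.

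First I would reduce to the case $\omega_1=\sigma$: passing to coordinates via any Darboux matrix ${\bf S}\in\mathcal{D}(n;\omega_1)$ and replacing ${\bf A}$ by ${\bf S}^{-1}{\bf A}{\bf S}^{-T}$ preserves both symplectic spectra (by the computation in (\ref{eq12.7})) and carries $\omega_1$ into $\sigma$ and $\omega_2$ into $\tilde\omega_2={\bf S}^*\omega_2$ with $\tilde\omega_2\ne\pm\sigma$. The paper's normalisation $\det{\bf \Omega_i}=1$ ensures that $\omega_1\ne\pm\omega_2$ rules out proportionality of ${\bf J}$ and ${\bf \Omega_2}$, and hence that $sp(n;\sigma)\ne sp(n;\omega_2)$. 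Writing $sp(n;\sigma)=\mathfrak{k}\oplus\mathfrak{p}$ for the Cartan decomposition with $\mathfrak{p}=sp(n;\sigma)\cap\mathrm{Sym}$, and using the identity $[\mathfrak{p},\mathfrak{p}]=\mathfrak{k}$ valid for the simple algebra $sp(2n,\mathbb{R})$, I would show that $\mathfrak{p}\subset sp(n;\omega_2)$ would force $sp(n;\sigma)\subset sp(n;\omega_2)$ and hence proportionality. Therefore I can pick a symmetric matrix ${\bf Y}$ with ${\bf Y}{\bf J}+{\bf J}{\bf Y}=0$ and ${\bf Y}{\bf \Omega_2}+{\bf \Omega_2}{\bf Y}\ne 0$.

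Next I would set ${\bf A}_t:=\exp(t{\bf Y})$. The identity $\exp(t{\bf Y}){\bf J}={\bf J}\exp(-t{\bf Y})$, together with ${\bf A}_t^T={\bf A}_t$, yields ${\bf A}_t{\bf J}{\bf A}_t^T={\bf J}$, so ${\bf A}_t\in Sp^+(n;\sigma)$. Williamson's Theorem \ref{Williamsontheorem2} then gives $\sigma$-symplectic spectrum $(1,\ldots,1)$, i.e.\ $\lambda_{\sigma,1}({\bf A}_t)=1$ for every $t$. For the $\omega_2$-side I would use the Rayleigh-type bound derived from Theorem \ref{TheoremNWspectrumGaussians} by maximising $\alpha$ in ${\bf A}+\frac{i\alpha}{2}{\bf \Omega_2}\ge 0$:
\[
\lambda_{\omega_2,1}({\bf A})\le\frac{u^*{\bf A} u}{-i\,u^*{\bf \Omega_2} u},\qquad u=e+if,\ \ e^T{\bf \Omega_2} f>0.
\]
Because ${\bf Y}$ is symmetric and traceless it has a strictly negative eigenvalue $-\mu$, and because ${\bf Y}$ fails to anticommute with ${\bf \Omega_2}$ the bilinear form $(e,f)\mapsto e^T{\bf \Omega_2} f$ restricts non-trivially to some ${\bf Y}$-eigenspace, which after possibly swapping ${\bf Y}\leftrightarrow-{\bf Y}$ I may take to be the $(-\mu)$-eigenspace. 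Choosing $e,f$ there with $e^T{\bf \Omega_2} f>0$ and $u=e+if$ gives $u^*{\bf A}_t u=e^{-t\mu}(|e|^2+|f|^2)$ and a constant positive denominator, so $\lambda_{\omega_2,1}({\bf A}_t)\to 0$ exponentially and the ratio $\lambda_{\sigma,1}/\lambda_{\omega_2,1}$ exceeds any prescribed $K$ for $t$ large enough.

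The main obstacle is the algebraic step of producing the symmetric generator ${\bf Y}$: the Cartan-decomposition argument above relies on the identity $[\mathfrak{p},\mathfrak{p}]=\mathfrak{k}$ for $sp(2n,\mathbb{R})$, which lies slightly outside the paper. A more elementary alternative is to construct ${\bf Y}$ directly from a pair of indices where the matrix entries of ${\bf J}$ and ${\bf \Omega_2}$ disagree, at the cost of some case analysis; either way, once ${\bf Y}$ is in hand the remaining spectral estimate follows cleanly from the variational formula.
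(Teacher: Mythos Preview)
Your constructive strategy is genuinely different from the paper's. The paper argues by contradiction: assuming a uniform bound $K$, it tests the inequality ${\bf A}+i\lambda_{\sigma,1}({\bf A})\bigl(\pm{\bf \Omega}/K-{\bf J}\bigr)\ge 0$ on the eigenvector $u_1=e_1+if_1$ of ${\bf A}{\bf J}^{-1}$ associated with the smallest Williamson invariant, obtaining $|\eta(f,e)|\le|\sigma(f,e)|$ for a rescaled form $\eta$. Proposition~\ref{PropostionEigenvector} then shows that every pair $(e,f)$ with $\sigma(f,e)\ne 0$ arises this way, and Proposition~\ref{PropositionSymplecticInequality} forces $\eta=a\sigma$, hence $\omega=\pm\sigma$. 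Your route---building an explicit family ${\bf A}_t=\exp(t{\bf Y})\in Sp^+(n;\sigma)$ with $\lambda_{\sigma,1}\equiv 1$ and a Rayleigh bound driving $\lambda_{\omega_2,1}$ to zero---is more direct and avoids both auxiliary propositions; when it works it also gives an explicit rate.

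However, one step fails as written. You claim that ``because ${\bf Y}$ fails to anticommute with ${\bf \Omega_2}$ the bilinear form $(e,f)\mapsto e^T{\bf \Omega_2}f$ restricts non-trivially to some ${\bf Y}$-eigenspace.'' This is false for a generic such ${\bf Y}$. Take $n=2$, ${\bf Y}=\mathrm{diag}(1,2,-1,-2)$ (symmetric, anticommutes with the standard ${\bf J}$), and ${\bf \Omega_2}$ with $({\bf \Omega_2})_{12}=a$, $({\bf \Omega_2})_{34}=1/a$ and all other independent entries zero. Then ${\bf Y}{\bf \Omega_2}+{\bf \Omega_2}{\bf Y}\ne 0$ (the $(1,2)$-entry is $3a$), yet every eigenspace of ${\bf Y}$ is one-dimensional, so the skew form ${\bf \Omega_2}$ vanishes identically on each. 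With this ${\bf Y}$ your numerator $u^*{\bf A}_t u=e^{t\lambda}|e|^2+e^{t\mu}|f|^2$ cannot be forced to zero unless both $\lambda,\mu<0$, which you cannot arrange from the non-anticommutation alone.

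The fix is to choose ${\bf Y}$ more carefully. What you actually need is a $\sigma$-Lagrangian subspace $W$ on which $\omega_2$ does \emph{not} vanish; then ${\bf Y}:=P_{W^\perp}-P_W$ is symmetric, anticommutes with ${\bf J}$ (since ${\bf J}W=W^\perp$ for any $\sigma$-Lagrangian $W$), and has $V_{-1}=W$ carrying a non-trivial restriction of ${\bf \Omega_2}$. That such a $W$ exists whenever $\omega_2\ne\pm\sigma$ follows from testing the graph Lagrangians $\{(x,Mx):x\in\mathbb{R}^n\}$ for symmetric $M$: if all of them were $\omega_2$-isotropic one deduces ${\bf \Omega_2}=c{\bf J}$. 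Once ${\bf Y}$ is chosen this way your Rayleigh-quotient estimate goes through verbatim.
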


\begin{proof}
Following an argument similar to that in Theorem \ref{TheoremIdenticalSpectra1}, we may assume without loss of generality that $\omega_1 = \sigma$ and set $\omega_2 = \omega$. Suppose that there exists $K>0$ such that
\begin{equation}
\frac{\lambda_{\sigma,1} ( {\bf A})}{\lambda_{\omega,1} ({\bf A})} < K,
\label{eqUnboudedQuotient2}
\end{equation}
for all ${\bf A} \in \mathbb{P}^{\times}_{2n \times 2n} (\mathbb{R})$.

It follows from items 2 and 3 of Theorem \ref{TheoremNWspectrumGaussians} (setting $\alpha = \pm \lambda_{\sigma,1} ({\bf A})$ and ${\bf \Omega}={\bf J}$) that
\begin{equation}
{\bf A} \pm i \lambda_{\sigma,1} ({\bf A}) {\bf J} \ge 0,
\label{eqUnboudedQuotient3}
\end{equation}
while (setting $\alpha = \pm \frac{\lambda_{\sigma,1} ({\bf A})}{K}$, and taking (\ref{eqUnboudedQuotient2}) into account)
\begin{equation}
{\bf A} \pm i \frac{\lambda_{\sigma,1} ({\bf A})}{K} {\bf \Omega} \ge 0.
\label{eqUnboudedQuotient4}
\end{equation}
From the previous inequality, we also have:
\begin{equation}
{\bf A} +i \lambda_{\sigma,1} ({\bf A}) {\bf J} +  i\lambda_{\sigma,1} ({\bf A}) \left(\pm \frac{ {\bf \Omega}}{K} - {\bf J} \right) \ge 0.
\label{eqUnboudedQuotient5}
\end{equation}
Let $u_1=e_1 + i f_1$ be an eigenvector of ${\bf A J^{-1}} $ associated with the eigenvalue $-i \lambda_{\sigma,1} ({\bf A})$, i.e.
\begin{equation}
{\bf AJ^{-1}} u_1 = -i \lambda_{\sigma,1} ({\bf A}) u_1.
\label{eqUnboudedQuotient6}
\end{equation}
Multiplying (\ref{eqUnboudedQuotient5}) on the left by $\overline{ u_1}\cdot {\bf J}$, on the right by ${\bf J^T} u_1={\bf J^{-1}} u_1$, and taking into account that $\lambda_{\sigma,1} ({\bf A}) >0$ yields:
\begin{equation}
i \overline{u_1} \cdot {\bf J} \left(\pm \frac{ {\bf \Omega}}{K} - {\bf J} \right) {\bf J^{-1}} u_1 \ge 0.
\label{eqUnboudedQuotient7}
\end{equation}
If we define the symplectic form
\begin{equation}
\eta (z, z^{\prime}):= \frac{1}{K} z \cdot {\bf J \Omega J^T} z^{\prime} ,
\label{eqUnboudedQuotient8}
\end{equation}
for $z,z^{\prime} \in \mathbb{R}^{2n}$, we get from (\ref{eqUnboudedQuotient7}):
\begin{equation}
\pm \eta (f_1,e_1) + \sigma (f_1,e_1) \ge 0.
\label{eqUnboudedQuotient9}
\end{equation}
So the previous equation holds for all vectors $e_1,f_1$ such that $u_1= e_1 + i f_1$ is an eigenvector of $i {\bf A J^{-1}}$ for some matrix ${\bf A} \in  \mathbb{P}^{\times}_{2n \times 2n} (\mathbb{R})$ with eigenvalue equal to the smallest Williamson invariant of ${\bf A}$. By Proposition \ref{PropostionEigenvector} such a matrix exists for all vectors $e, f \in \mathbb{R}^{2n}$ such that $\sigma (f,e) >0$. So, in fact, we have
\begin{equation}
|\eta (f,e)| \le  \sigma (f,e) ,
\label{eqUnboudedQuotient10}
\end{equation}
for all vectors $e,f$ such that $\sigma (f,e) >0$. If $\sigma (f,e) <0$, then (since $\sigma (f,e) = -\sigma (e,f) )$, it follows that
\begin{equation}
|\eta (f,e)| \le  \sigma (e,f) .
\label{eqUnboudedQuotient11}
\end{equation}
Altogether, we may write
\begin{equation}
|\eta (f,e)| \le  |\sigma (f,e)| ,
\label{eqUnboudedQuotient12}
\end{equation}
whenever $\sigma (f,e) \ne 0$. Suppose that $\sigma (f,e) = 0$. We may regard $f$ and $e$ as elements of some Lagrangian plane \cite{Cannas}. Since Lagrangian planes have no interior points, we can find a sequence $\left\{e_n \right\}_{n \in \mathbb{N}}$ converging to $e$, such that $\sigma (f,e_n) \ne 0$. But from (\ref{eqUnboudedQuotient12}), we have:
\begin{equation}
|\eta (f,e_n)| \le  |\sigma (f,e_n)|.
\label{eqUnboudedQuotient12.1}
\end{equation}
However, since symplectic forms are continuous, inequality (\ref{eqUnboudedQuotient12}) must hold for all $e,f \in \mathbb{R}^{2n}$. By Proposition \ref{PropositionSymplecticInequality}, there exists a constant $0< |a | \le 1$ such that
\begin{equation}
\eta = a \sigma .
\label{eqUnboudedQuotient13}
\end{equation}
From (\ref{eqUnboudedQuotient8}), we have:
\begin{equation}
{\bf \Omega} = - a  K {\bf J}.
\label{eqUnboudedQuotient14}
\end{equation}
Since the symplectic form $\det ({\bf \Omega})=1$, we conclude that $\omega = \pm \sigma$ and we have a contradiction. Thus (\ref{eqUnboudedQuotient2}) cannot hold.
\end{proof}

\subsection{The landscape of Wigner functions}

As an application of the previous result, we prove the following Theorem on the symplectic covariance of Wigner functions.

\begin{theorem}\label{TheoremSymplecticCovariance}
Let ${\bf M} \in Gl(2n; \mathbb{R})$. Then the operator
\begin{equation}
F (z) \mapsto (\mathcal{U}_{{\bf M}} F)(z) := | \det{\bf M}| F (Mz)
\label{eqSymplecticCovariance1}
\end{equation}
maps (pure or mixed) $\omega$-Wigner functions to $\omega$-Wigner functions if and only if ${\bf M}$ is either $\omega$-symplectic or $\omega$-anti-symplectic.
\end{theorem}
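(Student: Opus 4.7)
The proof splits into a sufficiency and a necessity direction.

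\emph{Sufficiency.} Suppose ${\bf M}$ is $\omega$-symplectic or $\omega$-anti-symplectic. By (\ref{eqomegasymplectic3}) we write ${\bf M}={\bf S}{\bf P}{\bf S}^{-1}$ with ${\bf S}\in\mathcal{D}(n;\omega)$ and ${\bf P}$ respectively $\sigma$-symplectic or $\sigma$-anti-symplectic, so $|\det{\bf M}|=|\det{\bf P}|=1$. Combining the identity $W^{\omega}\rho(z)=W^{\sigma}\rho({\bf S}^{-1}z)$ of (\ref{eqNonstandardWeylalgebra13}) with the definition of $\mathcal{U}_{\bf M}$,
\begin{equation*}
(\mathcal{U}_{\bf M}W^{\omega}\rho)(z)=W^{\omega}\rho({\bf M}z)=W^{\sigma}\rho({\bf P}{\bf S}^{-1}z)=(\mathcal{U}_{\bf P}W^{\sigma}\rho)({\bf S}^{-1}z).
\end{equation*}
Classical $\sigma$-symplectic covariance (through the metaplectic representation for ${\bf P}\in Sp(n;\sigma)$, or its composition with the momentum reflection ${\bf R}$ of (\ref{eqsymplecticform4.4}) in the anti-symplectic case) gives $\mathcal{U}_{\bf P}W^{\sigma}\rho=W^{\sigma}\rho'$ for some density matrix $\widehat{\rho}'$, so a second application of (\ref{eqNonstandardWeylalgebra13}) yields $\mathcal{U}_{\bf M}W^{\omega}\rho=W^{\omega}\rho'$.

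\emph{Necessity, Gaussian step.} Assume now that $\mathcal{U}_{\bf M}$ preserves the set of $\omega$-Wigner functions. A direct calculation gives $\mathcal{U}_{\bf M}\mathcal{G}_{\bf A}=\mathcal{G}_{{\bf M}^{-1}{\bf A}{\bf M}^{-T}}$, while Theorem \ref{TheoremNWspectrumGaussians} characterises the Gaussian $\omega$-Wigner functions as those with ${\bf B}+\tfrac{i}{2}{\bf \Omega}\ge 0$; conjugating this condition for ${\bf B}={\bf M}^{-1}{\bf A}{\bf M}^{-T}$ on both sides by ${\bf M}$ turns it into ${\bf A}+\tfrac{i}{2}{\bf M}{\bf \Omega}{\bf M}^T\ge 0$. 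Setting ${\bf \Omega}'':={\bf M}{\bf \Omega}{\bf M}^T$ and invoking the equivalence (2)$\Leftrightarrow$(3) of Theorem \ref{TheoremNWspectrumGaussians} for the (possibly unnormalised) antisymmetric form $\omega''$, the hypothesis reads $\lambda_{\omega,1}({\bf A})\ge\tfrac{1}{2}\Rightarrow\lambda_{\omega'',1}({\bf A})\ge\tfrac{1}{2}$ for every ${\bf A}\in\mathbb{P}^{\times}_{2n\times 2n}(\mathbb{R})$, which by the scaling Lemma \ref{LemmaRescalings1} strengthens to $\lambda_{\omega,1}({\bf A})\le\lambda_{\omega'',1}({\bf A})$ for all such ${\bf A}$. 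Dividing ${\bf \Omega}''$ by $|\det{\bf M}|^{1/n}$ produces a normalised antisymmetric matrix $\widehat{\bf \Omega}$ with $\det\widehat{\bf \Omega}=1$ and $\lambda_{\widehat{\omega},1}=|\det{\bf M}|^{1/n}\lambda_{\omega'',1}$, so the quotient $\lambda_{\omega,1}({\bf A})/\lambda_{\widehat{\omega},1}({\bf A})$ is uniformly bounded above by $|\det{\bf M}|^{-1/n}$. Theorem \ref{TheoremUnboudedQuotient1} then forces $\widehat{\omega}=\pm\omega$, equivalently
\begin{equation*}
{\bf M}{\bf \Omega}{\bf M}^T=\epsilon\,|\det{\bf M}|^{1/n}{\bf \Omega},\qquad\epsilon\in\{-1,+1\}.
\end{equation*}

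\emph{Necessity, non-Gaussian step.} Pick any non-Gaussian pure state $\psi\in L^2(\mathbb{R}^n)$; then $W^{\omega}\psi$ is an $\omega$-Wigner function and item 6 of Theorem \ref{TheoremPropertiesNWSpectra} gives $\mathcal{W}^{\bf \Omega}(\mathcal{F}(W^{\omega}\psi))=\{-1,+1\}$. The Fourier identity $\mathcal{F}(\mathcal{U}_{\bf M}W^{\omega}\psi)(a)=\mathcal{F}(W^{\omega}\psi)({\bf M}^{-T}a)$ together with the substitution $b_j={\bf M}^{-T}a_j$ in Definition \ref{DefinitionQuantumConditions1} shows that the $\omega$-KLM criterion (Theorem \ref{TheoremQuantumConditions5}) for $\mathcal{U}_{\bf M}W^{\omega}\psi$ is exactly the statement that $\mathcal{F}(W^{\omega}\psi)$ is of $(1,{\bf M}{\bf \Omega}{\bf M}^T)$-positive type. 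Inserting the identity from the Gaussian step, this reduces to $\epsilon|\det{\bf M}|^{1/n}\in\mathcal{W}^{\bf \Omega}(\mathcal{F}(W^{\omega}\psi))=\{-1,+1\}$, which forces $|\det{\bf M}|=1$ and hence ${\bf M}{\bf \Omega}{\bf M}^T=\pm{\bf \Omega}$. The main obstacle is exactly this normalisation subtlety: the Gaussian step, alone, cannot rule out an overall positive rescaling of ${\bf \Omega}$ because such a rescaling is absorbed by Lemma \ref{LemmaRescalings1}; it is the rigidity of the discrete NW spectrum $\{-1,+1\}$ of non-Gaussian pure-state Wigner functions that closes the argument.
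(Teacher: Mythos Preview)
Your proof is correct and follows essentially the same strategy as the paper's: sufficiency via reduction to the known $\sigma$-covariance, then necessity in two stages --- a Gaussian step that uses Theorem \ref{TheoremUnboudedQuotient1} to pin down ${\bf M}{\bf \Omega}{\bf M}^T$ up to a positive scalar, followed by a non-Gaussian step that uses the rigid NW spectrum $\{-1,+1\}$ of pure non-Gaussian states (Theorem \ref{TheoremPropertiesNWSpectra}, item 6) to kill that scalar. The only organizational difference is that the paper first proves the $\omega=\sigma$ case and then reduces the general case to it via a Darboux conjugation ${\bf P}={\bf S}^{-1}{\bf M}{\bf S}$, whereas you carry the general $\omega$ through the whole argument directly; your observation that the equivalence (2)$\Leftrightarrow$(3) of Theorem \ref{TheoremNWspectrumGaussians} extends verbatim to unnormalised antisymmetric forms is what makes this shortcut work.
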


\begin{proof}
One calls a transformation which maps quantum states to quantum states a quantum channel or a quantum dynamical map.

We start by considering the case $\omega =\pm \sigma$. It is a well documented fact that $\mathcal{U}_{{\bf M}}$ is a quantum dynamical map if ${\bf M}$ is a $\sigma$-symplectic or $\sigma$-anti-symplectic matrix \cite{Dias5}.

Conversely, suppose that $\mathcal{U}_{{\bf M}}$ is a quantum dynamical map. Then, in particular it maps every Gaussian $\sigma$-Wigner function $\mathcal{G}_{{\bf A}}$ with covariance matrix ${\bf A}$ to another Gaussian $\sigma$-Wigner function with covariance matrix ${\bf M^{-1}} {\bf A} ({\bf M^{-1}})^T$.

For every ${\bf A} \in \mathbb{P}_{2n \times 2n}^{\times} (\mathbb{R})$, the matrix $\frac{1}{2 \lambda_{\sigma,1} ({\bf A})} {\bf A} $ satisfies the $\sigma$-RSUP. Indeed, acccording to Lemma \ref{LemmaRescalings1}, its smallest Williamson invariant is $\frac{1}{2 \lambda_{\sigma,1} ({\bf A})} \lambda_{\sigma,1} ({\bf A})=\frac{1}{2}$. Hence the matrix $\frac{1}{2 \lambda_{\sigma,1} ({\bf A})} {\bf M^{-1}} {\bf A} ({\bf M^{-1}})^T$ must also satisfy the $\sigma$-RSUP:
\begin{equation}
\frac{1}{2 \lambda_{\sigma,1} ({\bf A})} {\bf M^{-1}} {\bf A} ({\bf M^{-1}})^T + \frac{i}{2} {\bf J} \ge 0 \Leftrightarrow {\bf A} + i \lambda_{\sigma,1} ({\bf A}){\bf M} {\bf J} {\bf M}^T \ge 0.
\label{eqWignerCovariance2}
\end{equation}
Next define ${\bf \Omega} = \gamma^{-1} {\bf M} {\bf J} {\bf M}^T$, where $\gamma = | \det {\bf M}|^{1/n}$. Thus $\omega (z,z^{\prime}) = z^T {\bf \Omega^{-1}} z^{\prime}$ is a normalized symplectic form. From (\ref{eqWignerCovariance2}), we have:
\begin{equation}
{\bf A} + i \mu {\bf \Omega} \ge 0,
\label{eqWignerCovariance3}
\end{equation}
where $\mu =  \lambda_{\sigma,1} ({\bf A}) \gamma$. Consequently:
\begin{equation}
\lambda_{\omega,1} ({\bf A}) \ge \mu \Leftrightarrow \frac{\lambda_{\sigma,1} ({\bf A})}{\lambda_{\omega,1} ({\bf A})} \le \frac{1}{\gamma},
\label{eqWignerCovariance4}
\end{equation}
for all ${\bf A} \in \mathbb{P}_{2n \times 2n}^{\times} (\mathbb{R})$. But according to Theorem \ref{TheoremUnboudedQuotient1}, this can happen only if $\omega = \pm \sigma$, or ${\bf \Omega} = \pm {\bf J}$. This means that $\gamma^{-1/2} {\bf M}$ is a $\sigma$-symplectic or a $\sigma$-anti-symplectic matrix. Consequently, if $\mathcal{W}^{{\bf J}} (\mathcal{F}W \rho)$ denotes the Narcowich-Wigner spectrum of some Wigner function $W \rho$, then the Narcowich-Wigner spectrum of $\mathcal{U}_M (W \rho)$ becomes $\mathcal{W}^{{\bf J}} \left( \mathcal{F}\mathcal{U}_M (W \rho)\right)= \gamma^{-1} \mathcal{W}^{{\bf J}}  (\mathcal{F}W \rho)$. Let $\psi \in L^2 (\mathbb{R}^n)$ be some non-Gaussian state. Then $\mathcal{W}^{{\bf J}} (\mathcal{F}W \psi) = \left\{ - 1, 1 \right\}$ (see Theorem \ref{TheoremPropertiesNWSpectra}, item 6). It follows that $\mathcal{W}^{{\bf J}}  (\mathcal{F} \mathcal{U}_M W \psi) = \left\{ - \frac{1}{\gamma}, \frac{1}{\gamma} \right\}$. And thus, $\mathcal{U}_M W \psi$ is a Wigner function if and only if $\gamma =1$.

Next, assume that $\omega \ne \pm \sigma$. Let $W^{\omega} \rho$ be an arbitrary $\omega$-Wigner function. Then, given a Darboux matrix ${\bf S} \in \mathcal{D}(n; \omega)$, the function
\begin{equation}
W^{\sigma} \rho (z)= W^{\omega} \rho ({\bf S} z)
\label{eqWignerCovariance4.1}
\end{equation}
is a $\sigma$-Wigner function.

Define the matrix ${\bf P}$ by way of
\begin{equation}
{\bf P}:= {\bf S^{-1} M S}.
\label{eqWignerCovariance4.2}
\end{equation}
We then have that $\left(\mathcal{U}_{{\bf M}} W^{\omega} \rho\right) (z)$ is a $\omega$-Wigner function if and only if $\left(\mathcal{U}_{{\bf M}} W^{\omega} \rho \right) ({\bf S} z)$ is a $\sigma$-Wigner function. On the other hand:
\begin{equation}
\begin{array}{c}
\left(\mathcal{U}_{{\bf M}} W^{\omega} \rho \right) ({\bf S} z)= | \det {\bf M} | W^{\omega} \rho ({\bf MS} z) = | \det {\bf M} | W^{\sigma} \rho ({\bf S^{-1} MS} z) =\\
 \\
 =| \det {\bf P} | W^{\sigma} \rho ({\bf P} z) = \left(\mathcal{U}_{{\bf P}} W^{\sigma} \rho \right) ( z)
\end{array}
\label{eqWignerCovariance4.3}
\end{equation}
Thus, $\mathcal{U}_{{\bf M}} W^{\omega} \rho$ is a $\omega$-Wigner function for every density matrix $\widehat{\rho}$ if and only if $\mathcal{U}_{{\bf P}} W^{\sigma} \rho$  is a $\sigma$-Wigner function for every density matrix $\widehat{\rho}$. This, in turn, happens if and only if ${\bf P}$ is $\sigma$-symplectic or $\sigma$-anti-symplectic. But, from (\ref{eqWignerCovariance4.2}), this is equivalent to ${\bf M} $ being $\omega$-symplectic or $\omega$-anti-symplectic.
\end{proof}

\begin{remark}\label{Remarksymplecticcovariance}
This theorem is a small improvement on the results of \cite{Dias5} for $\omega= \pm \sigma$. Indeed, here we admit {\it a priori} an arbitrary value of $\det {\bf M} (\ne 0)$. In \cite{Dias5} we assumed $\det {\bf M}=1$ {\it ab initio}.
\end{remark}

We now provide a characterization of the sets of classical and quantum states with arbitrary symplectic structure.

Let $\mathcal{L} (\mathbb{R}^{2n})$ denote the set of Liouville measures on $\mathbb{R}^{2n}$. These are the continuous, pointwise non-negative functions normalized to unity. In other words, they are classical probability densities on the phase-space. Moreover, let $\mathcal{F}^{\omega_1} (\mathbb{R}^{2n})$ and $\mathcal{F}^{\omega_2} (\mathbb{R}^{2n})$ be the convex sets of $\omega_1$- and $\omega_2$-Wigner functions for some symplectic forms $\omega_1$, $\omega_2$ such that $\omega_2 \ne \pm \omega_1$. If the dimension is clear, we shall simply write $\mathcal{L},\mathcal{F}^{\omega_1},\mathcal{F}^{\omega_2}$.

In \cite{Bastos2} we proved that the sets
\begin{equation}
\begin{array}{l}
\mathcal{A}_1 = \mathcal{F}^{\omega_1}  \backslash \left( \mathcal{F}^{\omega_2} \cup \mathcal{L} \right)\\
\mathcal{A}_2 = \mathcal{F}^{\omega_2}  \backslash \left( \mathcal{F}^{\omega_1} \cup \mathcal{L} \right)\\
\mathcal{A}_3 = \mathcal{L}  \backslash \left( \mathcal{F}^{\omega_1} \cup \mathcal{F}^{\omega_2} \right)\\
\mathcal{A}_4 = \left( \mathcal{F}^{\omega_1} \cap \mathcal{F}^{\omega_2} \right) \backslash  \mathcal{L} \\
\mathcal{A}_5 = \left( \mathcal{F}^{\omega_1} \cap \mathcal{L} \right) \backslash  \mathcal{F}^{\omega_2} \\
\mathcal{A}_6 = \left( \mathcal{F}^{\omega_2} \cap \mathcal{L} \right) \backslash  \mathcal{F}^{\omega_1}\\
\mathcal{A}_7 = \mathcal{F}^{\omega_1} \cap \mathcal{F}^{\omega_2} \cap  \mathcal{L}
\end{array}
\label{eqSets1}
\end{equation}
are all non-empty, when $n=2$, $\omega_1 = \sigma$ and $\omega_2 = \omega$ given by (\ref{eqNCQM1}).

Here, we generalize the result for arbitrary dimension and symplectic forms $\omega_1 \ne \pm \omega_2$.

\begin{figure}
\begin{center}
\vspace{-4cm} \hspace{-4cm}\includegraphics [scale=0.5]{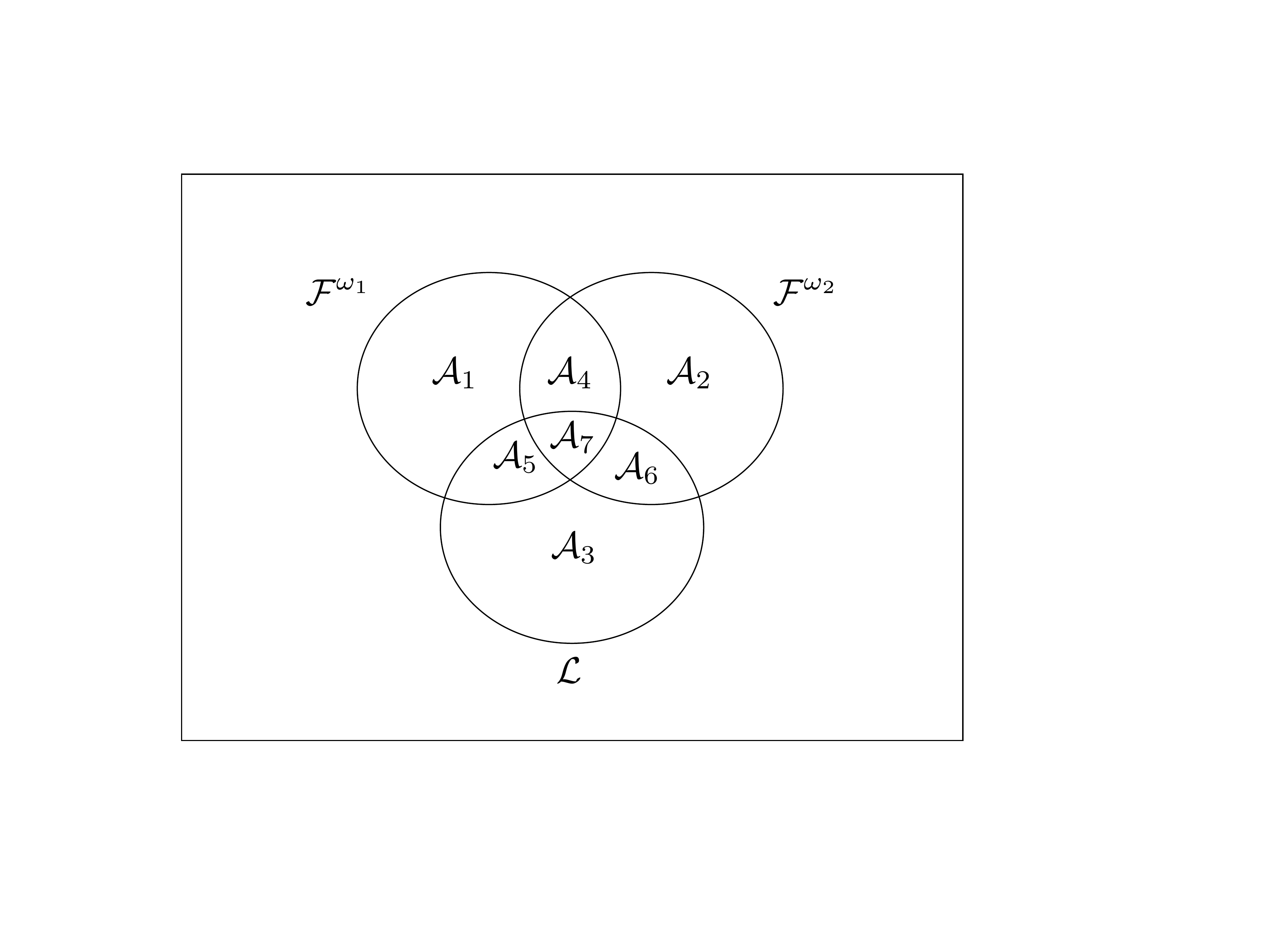}
\caption{Different sets of functions and their intersections.}
\end{center}
\label{diagrama}
\end{figure}

\begin{theorem}\label{TheoremSets1}
Let $\omega_1$, $\omega_2$ be arbitrary symplectic forms on
$\mathbb{R}^{2n}$ with $\omega_1 \ne \pm \omega_2$. Then the sets
$\mathcal{A}_i$, $i=1, \cdots,7$ defined as in (\ref{eqSets1}) are
all non-empty (see Figure $1$).
\end{theorem}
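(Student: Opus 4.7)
The plan is to exhibit explicit representatives in each of the seven regions, using two families of functions: Gaussian measures $\mathcal{G}_{\mathbf A}$ (always in $\mathcal{L}$, with membership in $\mathcal{F}^{\omega_i}$ governed by Theorem \ref{TheoremOmegaRSUP1}), and convolutions of non-Gaussian pure-state Wigner functions with carefully chosen Gaussians (which allow controlled departures from positivity while remaining manipulable via Theorem \ref{TheoremQuantumConditions5}).

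For the four sets that meet $\mathcal{L}$ (namely $\mathcal{A}_3, \mathcal{A}_5, \mathcal{A}_6, \mathcal{A}_7$), Gaussians alone suffice. By Theorem \ref{TheoremOmegaRSUP1}, $\mathcal{G}_{\mathbf A} \in \mathcal{F}^{\omega_i}$ iff $\lambda_{\omega_i,1}(\mathbf A) \geq 1/2$, while $\mathcal{G}_{\mathbf A} \in \mathcal{L}$ automatically. Using Theorem \ref{TheoremUnboudedQuotient1} applied to $(\omega_1, \omega_2)$, I pick $\mathbf A$ with the ratio $\lambda_{\omega_1,1}(\mathbf A)/\lambda_{\omega_2,1}(\mathbf A)$ as large as needed, then rescale via Lemma \ref{LemmaRescalings1} to push $\lambda_{\omega_1,1}$ above $1/2$ while keeping $\lambda_{\omega_2,1}$ below $1/2$; this yields $\mathcal{A}_5$, and swapping the roles of $\omega_1$ and $\omega_2$ yields $\mathcal{A}_6$. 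For $\mathcal{A}_7$, rescale any fixed $\mathbf A$ upward so that both symplectic eigenvalues exceed $1/2$; for $\mathcal{A}_3$, rescale downward so both fall below $1/2$.

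For $\mathcal{A}_4$, let $\psi$ be a non-Gaussian pure state, so that $W^{\omega_1}\psi \in \mathcal{F}^{\omega_1}\setminus\mathcal{L}$ (by item 6 of Theorem \ref{TheoremPropertiesNWSpectra} together with the positivity criterion). Consider $f = W^{\omega_1}\psi \star \mathcal{G}_{\mathbf B}$ with $\mathbf B + (i/2)(\mathbf\Omega_2-\mathbf\Omega_1) \geq 0$; by the extension of Theorem \ref{TheoremNWspectrumGaussians} to arbitrary (not necessarily invertible) antisymmetric matrices, this is equivalent to $\mathcal{F}\mathcal{G}_{\mathbf B}$ being of $(1,\mathbf\Omega_2-\mathbf\Omega_1)$-positive type. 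Since $\mathcal{G}_{\mathbf B}$ is also a positive measure, convolving $W^{\omega_1}\psi$ with it preserves the $\omega_1$-Wigner property and thus $f \in \mathcal{F}^{\omega_1}$; the identity $(1,\mathbf\Omega_1)\oplus(1,\mathbf\Omega_2-\mathbf\Omega_1) = (1,\mathbf\Omega_2)$ (using $\det\mathbf\Omega_i = 1$) combined with Theorem \ref{TheoremQuantumConditions5} then gives $f \in \mathcal{F}^{\omega_2}$. To ensure $f \notin \mathcal{L}$, I pick $\mathbf B$ at the boundary of the above constraint and $\psi$ with a sharply concentrated negative region (e.g., a high-lying harmonic oscillator eigenstate), so that the mild smoothing by $\mathcal{G}_{\mathbf B}$ leaves an open neighborhood of strict negativity intact.

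For $\mathcal{A}_1$ (and $\mathcal{A}_2$ by symmetry), I take the convex combination $f = (1-\epsilon)\mathcal{G}_{\mathbf A_5} + \epsilon W^{\omega_1}\psi$, where $\mathcal{G}_{\mathbf A_5}$ is the Gaussian produced for $\mathcal{A}_5$ above and $\psi$ is a non-Gaussian pure state whose $W^{\omega_1}\psi$ has a strong negative lobe in a region where $\mathcal{G}_{\mathbf A_5}$ is small. Convexity of $\mathcal{F}^{\omega_1}$ gives $f \in \mathcal{F}^{\omega_1}$; openness of the complement of $\mathcal{F}^{\omega_2}$ near the non-member $\mathcal{G}_{\mathbf A_5}$ keeps $f \notin \mathcal{F}^{\omega_2}$ for all sufficiently small $\epsilon$; and for an intermediate range of $\epsilon$ the Gaussian cannot absorb the negativity of $\psi$ at its lobe, so $f \notin \mathcal{L}$. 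The principal obstacle is the compatibility argument in $\mathcal{A}_4$: producing a $\mathbf B$ which satisfies the positivity constraint yet is still narrow enough that the convolution retains negativity. I handle this by exploiting the anisotropy of $\mathbf\Omega_2 - \mathbf\Omega_1$, which leaves $\mathbf B$ free to be small along the directions not probed by the constraint, and tailoring $\psi$ so that its Wigner oscillations live along precisely those directions.
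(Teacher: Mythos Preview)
Your treatment of the four regions meeting $\mathcal{L}$ ($\mathcal{A}_3,\mathcal{A}_5,\mathcal{A}_6,\mathcal{A}_7$) is correct and coincides with the paper's argument. Your $\mathcal{A}_1,\mathcal{A}_2$ construction is also essentially the paper's (convex combination of an $\mathcal{A}_5$-Gaussian with a non-positive pure $\omega_1$-Wigner function), though where you invoke ``openness of the complement of $\mathcal{F}^{\omega_2}$'' and an unspecified ``intermediate range of $\epsilon$'', the paper makes the overlap of the two $\epsilon$-ranges explicit: it fixes a witness $g_2\in\mathcal{F}^{\omega_2}$ with $\int h_5 g_2<0$, translates the pure-state factor so that its negative lobe sits where the Gaussian is small, and then produces concrete inequalities (their (\ref{eqSets7})--(\ref{eqSets12})) guaranteeing a single $p$ works for both conditions. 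Your sketch can be completed along these lines.

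The genuine gap is $\mathcal{A}_4$. Your proposal $f=W^{\omega_1}\psi\star\mathcal{G}_{\mathbf B}$ with $\mathbf B+\tfrac{i}{2}(\mathbf\Omega_2-\mathbf\Omega_1)\ge 0$ does yield $f\in\mathcal{F}^{\omega_1}\cap\mathcal{F}^{\omega_2}$ via Theorem \ref{TheoremQuantumConditions5}, but the step $f\notin\mathcal{L}$ is not established. The ``anisotropy'' heuristic is not a proof: what is actually needed is that $\mathcal{G}_{\mathbf B}$ \emph{fails} to be an $\omega_1$-Wigner function (i.e.\ $\mathbf B+\tfrac{i}{2}\mathbf\Omega_1\not\ge 0$), after which Janssen's theorem supplies a pure $\psi$ with $W^{\omega_1}\psi\star\mathcal{G}_{\mathbf B}\notin\mathcal{L}$. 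You neither invoke this result nor show that a $\mathbf B$ satisfying the constraint yet violating $\mathbf B+\tfrac{i}{2}\mathbf\Omega_1\ge 0$ always exists; this is not automatic, since $\mathbf\Omega_2-\mathbf\Omega_1$ need not be invertible and, when it is, its magnitude relative to $\mathbf\Omega_1$ is uncontrolled.

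The paper circumvents this difficulty by a different mechanism. After reducing to $(\sigma,\omega)$, it picks $\alpha_0>1$ with $\gamma_0:=\det(\mathbf J-\alpha_0\mathbf\Omega)>0$, introduces the auxiliary form $\eta_2$ with matrix $\mathbf\Sigma_2=\gamma_0^{-1/2n}(\mathbf J-\alpha_0\mathbf\Omega)$, and applies Theorem \ref{TheoremUnboudedQuotient1} to the pair $(\omega,\eta_2)$ to manufacture a covariance $\mathbf A$ with $\lambda_{\omega,1}(\mathbf A)=\alpha_0-1$ and $\lambda_{\eta_2,1}(\mathbf A)\ge\gamma_0^{1/2n}$. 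This forces $(\alpha_0-1,\mathbf\Omega)$ and $(\gamma_0^{1/2n},\mathbf\Sigma_2)$ into $\mathcal{W}(\mathcal{F}\mathcal{G}_{\mathbf A})$ while keeping $(\alpha_0,\mathbf\Omega)$ out. Convolving $\mathcal{G}_{\mathbf A}$ with a \emph{non-Gaussian pure state at Planck constant $\alpha_0$} then lands in $\mathcal{F}^{\sigma}\cap\mathcal{F}^{\omega}$ by the $\oplus$-arithmetic $(\alpha_0-1,\mathbf\Omega)\oplus(\alpha_0,\mathbf\Omega)=(1,\mathbf\Omega)$ and $(\gamma_0^{1/2n},\mathbf\Sigma_2)\oplus(\alpha_0,\mathbf\Omega)=(1,\mathbf J)$, while Janssen's theorem (applicable because $(\alpha_0,\mathbf\Omega)\notin\mathcal{W}(\mathcal{F}\mathcal{G}_{\mathbf A})$) guarantees the convolution can be chosen outside $\mathcal{L}$. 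The use of a rescaled Planck constant and the auxiliary form $\eta_2$ is what makes all three requirements compatible; your scheme, working only at $\alpha_0=1$ with the raw difference $\mathbf\Omega_2-\mathbf\Omega_1$, lacks the extra degree of freedom needed to force this compatibility in general.
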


\begin{proof}
As in \cite{Bastos2} we shall suggest ways to construct examples of functions in each of these sets. We start with

\noindent
${\bf \mathcal{A}_3:}$ Consider an arbitrary Gaussian function $\mathcal{G}_{{\bf A}}$ with covariance matrix ${\bf A}$. Clearly $\mathcal{G}_{{\bf A}} \in \mathcal{L} $. Compute the $\omega_1$- and $\omega_2$-symplectic spectra of ${\bf A}$. Let $\lambda_{max,1} ({\bf A})= max \left\{ \lambda_{\omega_1,1} ({\bf A}), \lambda_{\omega_2,1} ({\bf A})\right\}$. If $\lambda_{max,1} ({\bf A})< \frac{1}{2}$, then $\mathcal{G}_{{\bf A}} \notin \mathcal{F}^{\omega_1} \cup \mathcal{F}^{\omega_2}$ and we are done. If $\lambda_{max,1} ({\bf A}) \ge \frac{1}{2}$, then choose some $0 < \mu < \frac{1}{2 \lambda_{max,1}({\bf A})}$. According to Lemma \ref{LemmaRescalings1}, the Gaussian $\mathcal{G}_{\mu {\bf A}}$ is such that $\lambda_{max,1} (\mu {\bf A})< \frac{1}{2}$ and thus $\mathcal{G}_{\mu {\bf A}} \notin \mathcal{F}^{\omega_1} \cup \mathcal{F}^{\omega_2}$.

\vspace{0.3 cm}\noindent
${\bf \mathcal{A}_7:}$ Again consider a Gaussian $\mathcal{G}_{{\bf A}}$ with covariance matrix ${\bf A}$ and let $\lambda_{min,1} ({\bf A})= min \left\{ \lambda_{\omega_1,1} ({\bf A}), \lambda_{\omega_2,1} ({\bf A})\right\}$. If $\lambda_{min,1} ({\bf A}) \ge \frac{1}{2}$, then $\mathcal{G}_{{\bf A}} \in \mathcal{A}_7$. Otherwise, let $\mu \ge \frac{1}{2 \lambda_{min,1} ({\bf A})}$. Then $\lambda_{min,1} (\mu {\bf A})\ge  \frac{1}{2}$ and $\mathcal{G}_{\mu {\bf A}} \in \mathcal{A}_7$.

\vspace{0.3 cm}\noindent
${\bf \mathcal{A}_5:}$ According to Theorem \ref{TheoremUnboudedQuotient1}, there exists a Gaussian $\mathcal{G}_{{\bf A}}$ with covariance matrix ${\bf A}$, such that $\lambda_{\omega_1,1} ({\bf A}) > \lambda_{\omega_2,1} ({\bf A})$. Let $\mu$ be such that $2\lambda_{\omega_1,1} ({\bf A}) > \mu^{-1} > 2 \lambda_{\omega_2,1} ({\bf A})$. Then the Gaussian $\mathcal{G}_{\mu {\bf A}}$ is such that $\lambda_{\omega_1,1} (\mu {\bf A}) \ge \frac{1}{2}$, while $\lambda_{\omega_2,1} (\mu {\bf A}) < \frac{1}{2}$. Hence $\mathcal{G}_{\mu {\bf A}} \in \mathcal{A}_5$.

\vspace{0.3 cm}\noindent
${\bf \mathcal{A}_6:}$ The result follows the same steps as in $\mathcal{A}_5$, with the replacement $\omega_1 \leftrightarrow \omega_2$.

\vspace{0.3 cm}\noindent
${\bf \mathcal{A}_1:}$ Let $h \in \mathcal{F}^{\sigma} \backslash \mathcal{L}$, that is a $\sigma$-Wigner function which is not everywhere nonnegative. Then the function $h_1(z) = h({\bf S_1}^{-1}z)$ with ${\bf S_1} \in \mathcal{D}(n; \omega_1)$ is in $\mathcal{F}^{\omega_1} \backslash \mathcal{L}$. Choose $h_5 \in \mathcal{A}_5$. Since $h_5 \notin \mathcal{F}^{\omega_2}$, there exists $g_2 \in \mathcal{F}^{\omega_2} $, such that (cf.(\ref{eqQuantumConditions2})):
\begin{equation}
b := \int_{\mathbb{R}^{2n}} h_5(z) g_2(z)dz < 0.
\label{eqSets2}
\end{equation}
Let also
\begin{equation}
a := \int_{\mathbb{R}^{2n}} h_1(z) g_2(z)dz .
\label{eqSets3}
\end{equation}
First suppose that $a \le 0$. Let $z_0 \in \mathbb{R}^{2n}$ be such that $h_1 (z_0) < 0 < h_5 (z_0)$. Such a $z_0$ can always be found because $h_5 (z) \ge 0$ for all $z$ and is not identically zero. On the other hand if $h_1 (z) \in \mathcal{F}^{\omega_1} \backslash \mathcal{L}$, then  $h_1 (z - \zeta) \in \mathcal{F}^{\omega_1} \backslash \mathcal{L}$ for any fixed $\zeta$. So we can always translate $h_1$ so that $h_1 (z_0) < 0$. Next, choose $1>p>0$ such that
\begin{equation}
\frac{p}{1-p} > \frac{h_5(z_0)}{|h_1(z_0)|}.
\label{eqSets4}
\end{equation}
The function
\begin{equation}
f_1(z) = p h_1(z) + (1-p) h_5(z)
\label{eqSets5}
\end{equation}
being a convex combination of $h_1, h_5 \in \mathcal{F}^{\omega_1}$ also belongs to $\mathcal{F}^{\omega_1}$. Moreover, since by (\ref{eqSets4}), $f_1(z_0)<0$, we also have $f_1 \notin \mathcal{L}$. Finally from (\ref{eqSets2},\ref{eqSets3}):
\begin{equation}
\int_{\mathbb{R}^{2n}} f_1(z) g_2(z)dz =p a + (1-p)b <0,
\label{eqSets6}
\end{equation}
and thus $f_1 \notin  \mathcal{F}^{\omega_2}$. Altogether, $f_1 \in \mathcal{A}_1$.

Now suppose that we have instead $a > 0$. We start by showing that, by translating $h_1(z) \mapsto T_{\zeta} h_1(z) = h_1(z- \zeta)$ appropriately , we can always find $z_1 \in \mathbb{R}^{2n}$ such that
\begin{equation}
T_{\zeta} h_1 (z_1)  < 0< \frac{h_5(z_1)}{|b|} < \frac{|\left(T_{\zeta}h_1 \right)(z_1)|}{a}.
\label{eqSets7}
\end{equation}
Indeed, let $z_2 \in \mathbb{R}^{2n}$ be such that $h_1(z_2) <0$ and $z_1$ such that
\begin{equation}
0< h_5(z_1) < (2 \pi)^n |b h_1(z_2)|.
\label{eqSets8}
\end{equation}
It is always possible to find such a $z_1$ because we may assume, without compromising our argument, that $h_5 \in \mathcal{S} ( \mathbb{R}^{2n})$. If $\zeta = z_1-z_2$, then we have from equation (\ref{eqSets8}), the Cauchy-Schwartz inequality, the purity condition (\ref{eqMoyalidentity4}), and by replacing $h_1$ by $T_{\zeta} h_1$ in (\ref{eqSets3}):
\begin{equation}
\begin{array}{c}
0 < \frac{h_5(z_1)}{|b| } a = \frac{h_5(z_1)}{|b| } \left| \int_{\mathbb{R}^{2n}} T_{\zeta} h_1 (z) g_2 (z) \right| \le\\
\\
\le \frac{h_5(z_1)}{|b| }  ||T_{\zeta} h_1 ||_{L^2}  ||g_2||_{L^2} \le \frac{h_5(z_1)}{(2 \pi)^n |b| } < |h_1(z_2)| = | T_{\zeta} h_1 (z_1)|,
\end{array}
\label{eqSets9}
\end{equation}
which proves (\ref{eqSets7}) for $ T_{\zeta} h_1 $.

From (\ref{eqSets7}) we can choose $p \in \mathbb{R}$ such that
\begin{equation}
0 < \frac{\alpha}{1+ \alpha} <p< \frac{\beta}{1+ \beta} <1,
\label{eqSets10}
\end{equation}
with
\begin{equation}
0< \alpha =\frac{h_5(z_1)}{|\left(T_{\zeta} h_1 \right)(z_1)|} < \frac{|b|}{a}= \beta.
 \label{eqSets10.1}
\end{equation}
This is because the function $x \in \mathbb{R}^+ \mapsto \frac{x}{1+x}$ is strictly increasing.

Next, let $f_1 \in \mathcal{F}^{\omega_1}$ be defined as in (\ref{eqSets5}) with $h_1$ replaced by $T_{\zeta} h_1$. Since
\begin{equation}
f_1(z_1) = -p |\left(T_{\zeta} h_1 \right)(z_1)| + (1-p) h_5(z_1) <0
\label{eqSets11}
\end{equation}
we have $f_1 \notin \mathcal{L}$. Similarly:
\begin{equation}
\int_{\mathbb{R}^{2n}} f_1(z) g_2(z) dz = p a - (1-p) |b| <0,
\label{eqSets12}
\end{equation}
and thus $f_1 \notin \mathcal{F}^{\omega_2}$. Altogether $f_1 \in \mathcal{A}_1$.

\vspace{0.3 cm}\noindent
${\bf \mathcal{A}_2:}$ The result follows the same steps as in $\mathcal{A}_1$, with the replacement $\omega_1 \leftrightarrow \omega_2$.

\vspace{0.3 cm}\noindent
${\bf \mathcal{A}_4:}$ To simplify the argument, we start by noticing that proving that $f_4 \in \mathcal{A}_4$ is equivalent to proving that $g_4 \in \left( \mathcal{F}^{\sigma} \cap  \mathcal{F}^{\omega} \right) \backslash  \mathcal{L}$ where $g_4(z) = f_4 ({\bf S_1}z)$, $\omega (z,z^{\prime}) = z \cdot {\bf \Omega}^{-1} z^{\prime}$, ${\bf \Omega} = {\bf S_1}^{-1} {\bf \Omega_2} ({\bf S_1}^{-1})^T$ with ${\bf S_1} \in \mathcal{D} (n; \omega_1)$, ${\bf S_2} \in \mathcal{D} (n; \omega_2)$. Indeed, $f_4 \notin \mathcal{L} \Leftrightarrow g_4 \notin \mathcal{L}$. Moreover, $f_4 (z) \in \mathcal{F}^{\omega_1} \Leftrightarrow g(z) =f_4 ({\bf S_1} z) \in \mathcal{F}^{\sigma}$. Finally, $f_4 (z) \in \mathcal{F}^{\omega_2} \Leftrightarrow f_4 ({\bf S_2} z) = g_4 ({\bf S_1}^{-1} {\bf S_2} z) \in \mathcal{F}^{\sigma}$. But since ${\bf S_1}^{-1} {\bf S_2} \in \mathcal{D} (n; \omega)$, this is equivalent to $g_4 \in \mathcal{F}^{\omega}$.

Let us now devise a way to construct a function $g_4 \in \left( \mathcal{F}^{\sigma} \cap  \mathcal{F}^{\omega} \right) \backslash  \mathcal{L}$. Consider the function
\begin{equation}
\gamma (\alpha)= \det ({\bf J} - \alpha {\bf \Omega} ).
\label{eqSets13}
\end{equation}
We claim that there exists $ \alpha_0 >1$ such that
\begin{equation}
\gamma_0 := \gamma (\alpha_0) >0.
\label{eqSets14}
\end{equation}
Indeed, $\gamma (\alpha)=\det \left({\bf J \Omega^{-1}} - \alpha {\bf I} \right)$. Thus $\gamma$ is the characteristic polynomial of the matrix ${\bf J \Omega^{-1}}$. We thus have:
\begin{equation}
\begin{array}{c}
\gamma (\alpha)=\alpha^{2n} + Tr \left({\bf J \Omega^{-1}} \right) \alpha^{2n-1} + \cdots + \det \left({\bf J \Omega^{-1}} \right) = \\
\\
=\alpha^{2n} + Tr \left({\bf J \Omega^{-1}} \right) \alpha^{2n-1} + \cdots +1.
\end{array}
\label{eqSets14.1}
\end{equation}
Hence $\gamma (\alpha) \simeq \alpha^{2n}$ as $\alpha \to + \infty$, which proves the claim.
Define ${\bf \Sigma_1} = {\bf \Omega}$ and ${\bf \Sigma_2} = \gamma_0^{-\frac{1}{2n}} ({\bf J} - \alpha_0 {\bf \Omega})$ and the corresponding normalized symplectic forms:
\begin{equation}
\eta_1 = \omega, \hspace{1 cm} \eta_2 (z, z^{\prime}) = z \cdot {\bf \Sigma_2^{-1}} z^{\prime}.
\label{eqSets14.1}
\end{equation}
We claim that there exists ${\bf A} \in \mathbb{P}^{\times}_{2n \times 2n} (\mathbb{R})$ such that
\begin{equation}
\lambda_{\eta_1,1} ({\bf A}) = \alpha_0 -1,
\label{eqSets15}
\end{equation}
and
\begin{equation}
\lambda_{\eta_2,1} ({\bf A}) \ge \gamma_0^{\frac{1}{2n}}.
\label{eqSets15.1}
\end{equation}
Indeed, from Theorem \ref{TheoremUnboudedQuotient1}, we can find ${\bf A} \in \mathbb{P}^{\times}_{2n \times 2n} (\mathbb{R})$ such that
\begin{equation}
\frac{\lambda_{\eta_2,1} ({\bf A})}{\lambda_{\eta_1,1} ({\bf A})} \ge \frac{\gamma_0^{\frac{1}{2n}}}{\alpha_0 -1}.
\label{eqSets16}
\end{equation}
After a possible rescaling ${\bf A} \mapsto \mu {\bf A}$ $(\mu >0)$, we obtain (\ref{eqSets15}) and (\ref{eqSets15.1}) follows from (\ref{eqSets16}).

Conditions (\ref{eqSets15},\ref{eqSets15.1}) mean that
\begin{equation}
{\bf A} + i (\alpha_0 -1) {\bf \Omega} \ge 0,
\label{eqSets18}
\end{equation}
while
\begin{equation}
{\bf A} + i \alpha_0 {\bf \Omega}  \ngeq 0.
\label{eqSets19}
\end{equation}
On the other hand, we also have
\begin{equation}
{\bf A} + i \gamma_0^{\frac{1}{2n}}  {\bf \Sigma_2} \ge 0.
\label{eqSets21}
\end{equation}
Next, let $h^{\sigma}$ be a non-Gaussian pure state with Planck constant $\alpha_0$, that is (cf. item 6 of Theorem \ref{TheoremPropertiesNWSpectra})
\begin{equation}
 \mathcal{W}^{{\bf J}} (\mathcal{F} h^{\sigma})=  \left\{- \alpha_0,\alpha_0 \right\}  .
\label{eqSets21.1}
\end{equation}
Given ${\bf S} \in \mathcal{D} (n; \omega)$, let $h^{\omega} (z) = h^{\sigma} ({\bf S^{-1}} z)$. From Remark \ref{Remark5.1}:
\begin{equation}
 \left\{- \alpha_0, \alpha_0 \right\} \subset \mathcal{W}^{{\bf \Omega}} (\mathcal{F} h^{\omega}) .
\label{eqSets21.2}
\end{equation}
If $\mathcal{G}_{{\bf A}}$ is a Gaussian with covariance matrix ${\bf A}$, then from (\ref{eqSets18}-\ref{eqSets21}):
\begin{equation}
(\alpha_0 -1, {\bf \Omega})  , (\gamma_0^{\frac{1}{2n}}, {\bf \Sigma_2}) \in \mathcal{W} (\mathcal{F} \mathcal{G}_{{\bf A}}),
\label{eqSets22}
\end{equation}
but
\begin{equation}
(\alpha_0, {\bf \Omega}) \notin \mathcal{W} (\mathcal{F} \mathcal{G}_{{\bf A}}).
\label{eqSets23}
\end{equation}
Given (\ref{eqSets23}), we can always find $h^{\sigma} \notin  \mathcal{L}$ such that
\begin{equation}
g_4 = \mathcal{G}_{{\bf A}} \star h^{\omega} \notin \mathcal{L}.
\label{eqSets24}
\end{equation}
Indeed, we have for ${\bf S} \in \mathcal{D} (n; \omega)$:
\begin{equation}
\begin{array}{c}
g_4 ({\bf S} z) =(\mathcal{G}_{{\bf A}} \star h^{\omega}) ({\bf S} z) = \int_{\mathbb{R}^{2n}} \mathcal{G}_{{\bf A}} ({\bf S} z- z^{\prime}) h^{\omega} (z^{\prime} )d z^{\prime} =\\
\\
=\int_{\mathbb{R}^{2n}} \mathcal{G}_{{\bf A}} ({\bf S}( z- z^{\prime \prime })) h^{\omega} ({\bf S} z^{\prime \prime}) d z^{\prime \prime} =\\
\\
=\int_{\mathbb{R}^{2n}} \mathcal{G}_{{\bf S}^{-1}{\bf A}({\bf S}^{-1})^T} ( z- z^{\prime \prime}) h^{\sigma}(z^{\prime \prime}) d z^{\prime \prime}.
\end{array}
\label{eqSets25}
\end{equation}
Since ${\bf S}^{-1}{\bf A}({\bf S}^{-1})^T + i \alpha_0 {\bf J} = {\bf S}^{-1} \left({\bf A} + i\alpha_0 {\bf \Omega} \right) ({\bf S}^{-1})^T$, from (\ref{eqSets19}) this is not a positive matrix. Consequently, the Gaussian $\mathcal{G}_{{\bf S}^{-1}{\bf A}({\bf S}^{-1})^T} ( z)$ is not a $\sigma$-Wigner function with Planck constant $\alpha_0$. It is a well known fact that, under these circumstances \cite{Janssen1,Janssen2}, the function $h^{\sigma} \notin \mathcal{L}$, which is a Wigner function with Planck constant $\alpha_0$ can always be chosen so that $(\mathcal{G}_{{\bf A}} \star h^{\omega}) ({\bf S} z) = (\mathcal{G}_{{\bf S}^{-1}{\bf A}({\bf S}^{-1})^T} \star h^{\sigma}) (z) \notin \mathcal{L}$.

We thus have for the function
\begin{equation}
g_4 (z) = (\mathcal{G}_{{\bf A}} \star h^{\omega} ) (z)
\label{eqSets26}
\end{equation}
that $g_4 \notin \mathcal{L}$.

Next, notice that ${\bf S}^{-1}{\bf A}({\bf S}^{-1})^T  + i (\alpha_0-1) {\bf J} \ge 0$, and so $(\alpha_0-1, {\bf J}) \in \mathcal{W} (\mathcal{F}\mathcal{G}_{{\bf S}^{-1}{\bf A}({\bf S}^{-1})^T})$. Since $(1-\alpha_0, {\bf J}) \oplus (\alpha_0, {\bf J}) = (1, {\bf J})$, we conclude from (\ref{eqSets25}) and Theorem \ref{TheoremQuantumConditions5} that $g_4 ({\bf S} z)$ is of the $(1, {\bf J})$-positive type. And thus $g_4 (z) \in \mathcal{F}^{\omega}$.

It remains to prove that $g_4 (z) \in \mathcal{F}^{\sigma}$. Since $(\gamma_0^{\frac{1}{2n}}, {\bf \Sigma_2}) \oplus (\alpha_0, {\bf \Omega}) = (1, {\bf J})$, the result follows again from Theorem \ref{TheoremQuantumConditions5}.
\end{proof}

We conclude this subsection by applying the previous techniques to the following representability problem. This completes the discussion initiated in \cite{Dias5} and addressed also in Theorem \ref{TheoremSymplecticCovariance}.

\begin{theorem}\label{TheoremRepresentabilityProblem}
Let $\omega_1 (z,z^{\prime})= z \cdot {\bf \Omega_1^{-1}} z^{\prime}$ be a symplectic form and suppose that ${\bf M} \in Gl(2n; \mathbb{R})$ is not $\omega_1$-symplectic nor $\omega_1$-anti-symplectic. Define $\omega_2 (z,z^{\prime})= z \cdot {\bf \Omega_2^{-1}} z^{\prime}$ where ${\bf \Omega_2}=\frac{1}{\alpha} {\bf M \Omega_1 M^T}$, with $\alpha = | \det {\bf M}|^{\frac{1}{n}}$. Consider again the linear operator $\mathcal{U}_{{\bf M}}$ as defined in (\ref{eqIntroduction3}). Then there exist $W^{\omega_1} \rho \in \mathcal{F}^{\omega_1} $ such that $\mathcal{U}_{{\bf M}}(W^{\omega_1} \rho) \in \mathcal{F}^{\omega_1} \cap \mathcal{F}^{\omega_2}$ and $W^{\omega_1} \rho^{\prime} \in \mathcal{F}^{\omega_1} $ such that $\mathcal{U}_{{\bf M}}(W^{\omega_1} \rho^{\prime}) \in \mathcal{F}^{\omega_2} \backslash \mathcal{F}^{\omega_1}$.
\end{theorem}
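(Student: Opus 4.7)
The plan is to reduce both assertions to conditions on Williamson invariants of Gaussian states, and then fall back on the convex-mixture technology used in the proof of Theorem \ref{TheoremSets1}. The pivotal observation is that $\tilde{\mathbf{S}}:=\mathbf{M}/\sqrt{\alpha}$ is an $(\omega_1,\omega_2)$-symplectomorphism, since
$\tilde{\mathbf{S}}\,\mathbf{\Omega_1}\,\tilde{\mathbf{S}}^{T}=\tfrac{1}{\alpha}\mathbf{M}\mathbf{\Omega_1}\mathbf{M}^{T}=\mathbf{\Omega_2}$.
Combined with Lemma \ref{LemmaRescalings1} and the $\omega$-symplectic invariance of the symplectic spectrum, this produces the intertwinings
$\lambda_{\omega_1,j}(\mathbf{M}^{-1}\mathbf{A}\mathbf{M}^{-T})=\alpha^{-1}\lambda_{\omega_2,j}(\mathbf{A})$ and $\lambda_{\omega_2,j}(\mathbf{M}\mathbf{A}\mathbf{M}^{T})=\alpha\,\lambda_{\omega_1,j}(\mathbf{A})$ for any $\mathbf{A}\in\mathbb{P}^{\times}_{2n\times 2n}(\mathbb{R})$. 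Since $\mathcal{U}_{\mathbf{M}}\mathcal{G}_{\mathbf{A}}=\mathcal{G}_{\mathbf{M}^{-1}\mathbf{A}\mathbf{M}^{-T}}$, Theorem \ref{TheoremOmegaRSUP1} then translates the three events ``$\mathcal{G}_{\mathbf{A}}\in\mathcal{F}^{\omega_1}$'', ``$\mathcal{U}_{\mathbf{M}}\mathcal{G}_{\mathbf{A}}\in\mathcal{F}^{\omega_1}$'', ``$\mathcal{U}_{\mathbf{M}}\mathcal{G}_{\mathbf{A}}\in\mathcal{F}^{\omega_2}$'' into three linear-in-$\mu$ threshold conditions on suitable Williamson invariants of $\mathbf{A}$ (and of $\mathbf{M}^{-1}\mathbf{A}\mathbf{M}^{-T}$).

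For the first assertion, I would simply take $\mathbf{A}=\mu\mathbf{I}$ with $\mu>0$. By Lemma \ref{LemmaRescalings1} the three relevant Williamson invariants all scale linearly in $\mu$, so for $\mu$ large enough all three threshold conditions are met at once. Setting $W^{\omega_1}\rho:=\mathcal{G}_{\mu\mathbf{I}}$ then places $W^{\omega_1}\rho\in\mathcal{F}^{\omega_1}$ and $\mathcal{U}_{\mathbf{M}}W^{\omega_1}\rho\in\mathcal{F}^{\omega_1}\cap\mathcal{F}^{\omega_2}$.

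For the second assertion, I first note that the hypothesis that $\mathbf{M}$ is neither $\omega_1$-symplectic nor $\omega_1$-anti-symplectic, combined with the normalization $\det\mathbf{\Omega}_k=1$, excludes the degenerate case $\omega_2=\pm\omega_1$: otherwise $\mathbf{M}/\sqrt{\alpha}$ would belong to $Sp(n;\omega_1)$ or be $\omega_1$-anti-symplectic, and the NW-spectrum argument from the proof of Theorem \ref{TheoremSymplecticCovariance} (which uses Theorem \ref{TheoremUnboudedQuotient1}) would force the scaling to be trivial, contradicting the hypothesis. Consequently, Theorems \ref{TheoremSets1} and \ref{TheoremUnboudedQuotient1} apply. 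By the non-emptiness of $\mathcal{A}_6$ there is a Gaussian $\mathcal{G}_{\mathbf{B}_1}\in(\mathcal{F}^{\omega_2}\cap\mathcal{L})\setminus\mathcal{F}^{\omega_1}$. Applying the first assertion in reverse (swap $\omega_1\leftrightarrow\omega_2$ and $\mathbf{M}\leftrightarrow\mathbf{M}^{-1}$) produces a Gaussian $\mathcal{G}_{\mathbf{B}_0}\in\mathcal{F}^{\omega_2}$ with $\mathcal{U}_{\mathbf{M}^{-1}}\mathcal{G}_{\mathbf{B}_0}\in\mathcal{F}^{\omega_1}\cap\mathcal{F}^{\omega_2}$ and with all relevant Williamson invariants as large as desired. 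Form
$h:=p\,\mathcal{G}_{\mathbf{B}_0}+(1-p)\,\mathcal{G}_{\mathbf{B}_1},\qquad f:=\mathcal{U}_{\mathbf{M}^{-1}}h,\qquad p\in(0,1).$
Convexity gives $h\in\mathcal{F}^{\omega_2}$; the KLM criterion (Theorem \ref{TheoremQuantumConditions5}) supplies $\psi\in L^{2}(\mathbb{R}^{n})$ with $\int \mathcal{G}_{\mathbf{B}_1}\,W^{\omega_1}\psi\,dz<0$, so for $p$ sufficiently small one also has $h\notin\mathcal{F}^{\omega_1}$. Analogously, $f=p\,\mathcal{U}_{\mathbf{M}^{-1}}\mathcal{G}_{\mathbf{B}_0}+(1-p)\,\mathcal{U}_{\mathbf{M}^{-1}}\mathcal{G}_{\mathbf{B}_1}$ is a convex combination whose first term is in $\mathcal{F}^{\omega_1}$; by exploiting Theorem \ref{TheoremUnboudedQuotient1} to enlarge the Williamson margins of $\mathbf{B}_0$ and tuning $p$ within the available interval, the $\omega_1$-KLM positive contribution from the first term dominates the possibly negative contribution from the second, keeping $f\in\mathcal{F}^{\omega_1}$. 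Taking $W^{\omega_1}\rho':=f$ then gives $\mathcal{U}_{\mathbf{M}}W^{\omega_1}\rho'=h\in\mathcal{F}^{\omega_2}\setminus\mathcal{F}^{\omega_1}$, completing the second assertion.

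The principal technical hurdle will be precisely this final balancing: the same parameter $p$ must simultaneously be small enough to eject $h$ from $\mathcal{F}^{\omega_1}$ and large enough (given the Williamson margins of $\mathbf{B}_0$) that $\mathcal{U}_{\mathbf{M}^{-1}}h$ stays in $\mathcal{F}^{\omega_1}$. I expect this to require a uniform Moyal-plus-Cauchy–Schwarz estimate on $|\int \mathcal{U}_{\mathbf{M}^{-1}}\mathcal{G}_{\mathbf{B}_1}\,W^{\omega_1}\psi\,dz|$ against a lower bound on $\int \mathcal{U}_{\mathbf{M}^{-1}}\mathcal{G}_{\mathbf{B}_0}\,W^{\omega_1}\psi\,dz$, with the freedom in $\mathbf{B}_0$ provided by Theorem \ref{TheoremUnboudedQuotient1} playing a decisive role, exactly as in the $\mathcal{A}_4$-argument of Theorem \ref{TheoremSets1}.
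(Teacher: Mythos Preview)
For the first assertion your argument is essentially the paper's: both reduce membership to Williamson-invariant thresholds for a Gaussian covariance and then dilate until all thresholds are met. You take $\mathbf{A}=\mu\mathbf{I}$; the paper invokes the $\mathcal{A}_7$ rescaling from Theorem~\ref{TheoremSets1}. Same idea.

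For the second assertion you depart substantially from the paper, and the route you take carries the gap you yourself flag. The paper stays with a \emph{single} Gaussian $\mathcal{G}_{\mathbf{B}}$: using exactly your intertwining $\lambda_{\omega_1,1}(\mathbf{M}^{-1}\mathbf{B}\mathbf{M}^{-T})=\alpha^{-1}\lambda_{\omega_2,1}(\mathbf{B})$, the conditions $\mathcal{G}_{\mathbf{B}}\in\mathcal{F}^{\omega_1}$ and $\mathcal{U}_{\mathbf{M}}\mathcal{G}_{\mathbf{B}}\notin\mathcal{F}^{\omega_1}$ become $\lambda_{\omega_1,1}(\mathbf{B})\ge\tfrac12$ and $\lambda_{\omega_2,1}(\mathbf{B})<\tfrac{\alpha}{2}$, which is precisely the $\mathcal{A}_5$ situation of Theorem~\ref{TheoremSets1} (Theorem~\ref{TheoremUnboudedQuotient1} plus a rescaling). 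No convex mixtures, no balancing of $p$.

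Your mixture argument, by contrast, cannot be closed as written. To keep $f=\mathcal{U}_{\mathbf{M}^{-1}}h\in\mathcal{F}^{\omega_1}$ you need $\int f\,W^{\omega_1}\psi\ge 0$ for \emph{every} $\psi\in L^2(\mathbb{R}^n)$. The term $p\int\mathcal{U}_{\mathbf{M}^{-1}}\mathcal{G}_{\mathbf{B}_0}\,W^{\omega_1}\psi$ is nonnegative but admits no uniform positive lower bound in $\psi$: for $\psi$ concentrated away from the bulk of $\mathcal{G}_{\mathbf{M}\mathbf{B}_0\mathbf{M}^T}$ the overlap is arbitrarily small, so it cannot dominate the possibly negative second term uniformly. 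Enlarging the Williamson margins of $\mathbf{B}_0$ makes this worse, not better, since a ``safer'' Gaussian is a flatter one with smaller overlaps. The single-Gaussian route avoids this entirely because the membership tests collapse to finitely many scalar inequalities on symplectic eigenvalues rather than an infinite family of overlap inequalities.

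One side remark: your deduction that the hypothesis forces $\omega_2\ne\pm\omega_1$ is not correct as stated. If $\alpha\ne 1$ and $\mathbf{M}/\sqrt{\alpha}\in Sp(n;\omega_1)$, then $\omega_2=\omega_1$ while $\mathbf{M}\mathbf{\Omega}_1\mathbf{M}^T=\alpha\mathbf{\Omega}_1\ne\mathbf{\Omega}_1$, so the hypothesis on $\mathbf{M}$ still holds. The NW-spectrum step you invoke from Theorem~\ref{TheoremSymplecticCovariance} requires the premise that $\mathcal{U}_{\mathbf{M}}$ preserves $\mathcal{F}^{\omega_1}$, which is unavailable here.
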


\begin{proof}
We follow the strategy used in the proof of the previous theorem for the cases $\mathcal{A}_5$ and $\mathcal{A}_7$. We just have to be cautious because we do not have necessarily $| \det {\bf M}|=1$.

So let $W^{\omega_1} \rho= \mathcal{G}_{{\bf A}}$ be a Gaussian (\ref{eqGaussian1}) with covariance matrix ${\bf A}$ such that
\begin{equation}
{\bf A} + \frac{i}{2}{\bf \Omega_1} \ge 0.
\label{eqTheoremRepresentabilityProblem1}
\end{equation}
Since
\begin{equation}
\mathcal{U}_{{\bf M}}( \mathcal{G}_{{\bf A}})= \mathcal{G}_{{\bf M^{-1}}{\bf A}({\bf M^{-1}})^T},
\label{eqTheoremRepresentabilityProblem2}
\end{equation}
we conclude that $\mathcal{U}_{{\bf M}}(W^{\omega_1} \rho) \in \mathcal{F}^{\omega_1}$ if and only if
\begin{equation}
{\bf M^{-1}}{\bf A}({\bf M^{-1}})^T + \frac{i}{2}{\bf \Omega_1} \ge 0 \Leftrightarrow {\bf A} + \frac{i \alpha}{2}{\bf \Omega_2} \ge 0.
\label{eqTheoremRepresentabilityProblem3}
\end{equation}
Since ${\bf M}$ is not a $\omega_1$-symplectic or a $\Omega_2$-antisymplectic matrix, then $\omega_2 \ne \pm \omega_1$. The matrix ${\bf A}$ satisfies (\ref{eqTheoremRepresentabilityProblem1}) and (\ref{eqTheoremRepresentabilityProblem3}) if and only if $\lambda_{\omega_1,1}({\bf A}) \ge \frac{1}{2}$ and $\lambda_{\omega_2,1}({\bf A}) \ge \frac{\alpha}{2}$. From the case $\mathcal{A}_7$ of the previous theorem we already know how to construct such a matrix after a possible rescaling.

Regarding the function $W^{\omega_1} \rho^{\prime}$ we assume that it is again a Gaussian $\mathcal{G}_{{\bf B}}$ with covariance matrix ${\bf B}$. It is a $\omega_1$-Wigner function if and only if
\begin{equation}
{\bf B} + \frac{i}{2}{\bf \Omega_1} \ge 0,
\label{eqTheoremRepresentabilityProblem4}
\end{equation}
and $\mathcal{U}_{{\bf M}}(W^{\omega_1} \rho^{\prime}) \in \mathcal{F}^{\omega_2} \backslash \mathcal{F}^{\omega_1}$ provided
\begin{equation}
 {\bf B} + \frac{i \alpha}{2}{\bf \Omega_2} \ngeqslant 0.
\label{eqTheoremRepresentabilityProblem5}
\end{equation}
Again, this happens if and only if $\lambda_{\omega_1,1}({\bf B}) \ge \frac{1}{2}$ and $\lambda_{\omega_2,1}({\bf B}) < \frac{\alpha}{2}$. From the case $\mathcal{A}_5$ of the previous theorem, we know that such a matrix exists.
\end{proof}

\section*{Acknowledgements}

We would like to thank Catarina Bastos for providing Figure 1. The
work of N.C. Dias and J.N. Prata is supported by the COST Action
1405 and by the Portuguese Science Foundation (FCT) under the
grant PTDC/MAT-CAL/4334/2014.

\pagebreak

***************************************************************

\textbf{Author's addresses:} \vspace{.5cm}

\textbf{Nuno Costa Dias and Jo\~ao Nuno Prata: }Escola Superior
N\'autica Infante D. Henrique. Av. Eng. Bonneville Franco,
2770-058 Pa\c{c}o d'Arcos, Portugal and Grupo de F\'{\i}sica
Matem\'{a}tica, Universidade de Lisboa, Av. Prof. Gama Pinto 2,
1649-003 Lisboa, Portugal

***************************************************************

\end{document}